\theoremstyle{plain}
\newtheorem{theorem}{Theorem}
\newtheorem{proposition}[theorem]{Proposition}
\newtheorem{lemma}[theorem]{Lemma}
\newtheorem{thesis}[theorem]{Thesis}
\newtheorem{conjecture}[theorem]{Conjecture}
\newtheorem{problem}[theorem]{Problem}
\theoremstyle{definition}
\newtheorem{definition}[theorem]{Definition}
\newtheorem{assumption}[theorem]{Assumption}
\theoremstyle{remark}
\newtheorem{remark}[theorem]{Remark}
\newcommand{\Lemref}[1]{Lemma~\ref{#1}}
\newcommand{\Propref}[1]{Proposition~\ref{#1}}
\newcommand{\Thmref}[1]{Theorem~\ref{#1}}
\newcommand{\Assumeref}[1]{Assumption~\ref{#1}}
\newcommand{\Figref}[1]{Fig.~\ref{#1}}
\newcommand{\cS}{{\mathcal S}}
\newcommand{\cX}{{\mathcal X}}
\newcommand{\cR}{{\mathcal R}}
\newcommand{\RR}{{\mathbb R}}
\newcommand{\cF}{{\mathcal F}}
\newcommand{\ra}{{\rightarrow}}
\newcommand{\ve}{\varepsilon}
\newcommand{\ignore}[1]{{}}
\newcommand{\noteRojas}[1]{\marginpar{\textit{#1}}}
\newcommand{\reals}{{\mathbb R}}
\DeclareMathOperator{\Max}{\vee}
\newcommand{\indicator}[1]{{\boldsymbol{1}}\left\{#1\right\}}
\newcommand{\norm}[1]{\left\Arrowvert \, #1 \, \right\Arrowvert}
\newcommand{\TVnorm}[1]{|#1|_{\mathrm{TV}}}
\newcommand{\diam}{{\mathrm{diam}\,}}
\newcommand{\from}{\colon}
\newcommand{\stationaryDist}{\pi}
\newcommand{\given}{\,|\,}
\newcommand{\minorizationMeasure}{\varphi}
\newcommand{\minorizationProb}{\beta}
\newcommand{\atomExclusive}{\breve{\mathfrak{a}}}
\newcommand{\borel}{\mathbb{B}}
\newcommand{\suchthat}{:}
\newcommand{\gapMarkovKernel}{\theta}
\newcommand{\MarkovKernel}{\mathcal{P}}
\newcommand{\MarkovKernelMu}[1]{#1\,\mathcal{P}}
\newcommand{\MarkovKernelN}[1]{\mathcal{P}^{#1}}
\newcommand{\MarkovKernelXA}[2]{\mathcal{P}(#1, #2)}
\newcommand{\MarkovKernelXAN}[3]{\mathcal{P}^{#3}(#1,#2)}
\newcommand{\MarkovKernelAvg}{\overline{\mathcal{P}}}
\newcommand{\MarkovKernelAvgMu}[1]{#1\,\overline{\mathcal{P}}}
\newcommand{\MarkovKernelAvgMuN}[2]{#1\,\overline{\mathcal{P}}^{#2}}
\newcommand{\MarkovKernelAvgXA}[2]{\overline{\mathcal{P}}(#1, #2)}
\newcommand{\SubMarkovKernel}{\mathcal{Q}}
\newcommand{\SubMarkovKernelMu}[1]{#1\,\mathcal{Q}}
\newcommand{\SubMarkovKernelXA}[2]{\mathcal{Q}(#1,#2)}
\newcommand{\SubMarkovKernelAvg}{\overline{\mathcal{Q}}}
\newcommand{\SubMarkovKernelAvgMu}[1]{#1\,\overline{\mathcal{Q}}}
\newcommand{\SubMarkovKernelAvgN}[1]{\overline{\mathcal{Q}}^{#1}}
\newcommand{\eigenvalueSubMarkovKernel}{\lambda}
\newcommand{\eigenmeasureSubMarkovKernel}{\psi}
\newcommand{\eigenvalueSubMarkovKernelAvg}{\overline{\lambda}}
\newcommand{\defin}[1]{{\it\bf #1}}
\newcommand{\N}{\mathbb{N}}
\newcommand{\Q}{\mathbb{Q}}
\DeclareMathOperator{\vol}{vol}
\DeclareMathOperator{\supp}{supp}
\newcommand{\manifold}{M}
\newcommand{\map}{f}
\newcommand{\randomap}{\mathcal{S}}
\newcommand{\cover}{\xi}
\newcommand{\innerIrreducibleCover}{\cover_{\mathrm{irr}}}
\newcommand{\partition}{\zeta}
\newcommand{\atom}{\mathfrak{a}}
\newcommand{\nbhdRadius}[1]{B_{#1}}
\newcommand{\nbhd}[2]{B_{#2}(#1)}
\newcommand{\innerCoverNbhd}[2]{\cover^{\rm{in}}_{#2}(#1)}
\newcommand{\outerCoverNbhd}[2]{\cover^{\rm{out}}_{#2}(#1)}
\newcommand{\innerMap}{\map_{\rm{in}}}
\newcommand{\outerMap}{\map_{\rm{out}}}
\newcommand{\innerOrbit}{\mathcal{O}_{\rm{in}}}
\newcommand{\closenbhd}[2]{\overline{B}_{#2}(#1)}
\newcommand{\closure}[1]{\overline{#1}}
\title{Noise vs computational intractability in dynamics}
\author{Mark Braverman\\Computer Science Department\\
Princeton University \and Alexander Grigo  \\ Mathematics Department \\ University of Toronto\and  Crist\'obal Rojas \\ Departamento de Matem\'aticas\\Universidad Andres Bello \thanks{MB is  supported by an NSERC Discovery Grant, CR is supported by a FONDECYT Grant.}}
\begin{document}

\maketitle

\begin{abstract}Computation plays a key role in predicting and analyzing natural phenomena. 
There are two fundamental barriers to our ability to computationally understand the long-term behavior 
of a dynamical system that describes a natural process. The first one is unaccounted-for errors, which may make the system unpredictable beyond a very limited time horizon.  This is especially true for chaotic systems, where a small change in the initial conditions may cause a dramatic shift in 
the trajectories. 
The second one is Turing-completeness. By the undecidability of the Halting Problem, 
the long-term prospects of a system that can simulate a Turing Machine cannot be determined computationally. 

We investigate the interplay between these two forces -- unaccounted-for errors and Turing-completeness.  We show that the introduction of 
even a small amount of noise into a dynamical system is sufficient to ``destroy" Turing-completeness, and 
to make the system's long-term behavior computationally predictable. On a more technical level, we deal with long-term 
statistical properties of dynamical systems, as described by  invariant measures. We show that while 
there are simple dynamical systems for which the invariant measures are non-computable, perturbing 
such systems  makes the invariant measures efficiently computable.   Thus, noise that makes the short term behavior of the system harder to predict, may  make its long term statistical behavior computationally tractable. We also obtain some insight into the computational complexity of predicting systems affected by random noise. 
\end{abstract}

\newpage
\tableofcontents

\section{Introduction}

\subsection{Motivation and statement of the results}

In this paper we investigate (non)-computability phenomena surrounding physical systems. The Church-Turing thesis asserts that 
any computation that can be carried out in finite time by a physical device, can be carried out by a Turing Machine. The thesis can 
be paraphrased in the following way: 
provided all the initial conditions with arbitrarily good precision, and random bits when necessary, the Turing Machine can {\em simulate} the physical system $\cS$ over 
any fixed period of time $[0,T]$ for $T<\infty$. 

In reality, however, we are often interested in more than just simulating the system for a fixed period of time. 
In many situations, one would like to understand the {\em long term behavior properties of $\cS$} when $T\ra\infty$. Some of 
the important properties that fall into this category include:

\begin{enumerate}
\item \label{q:1} Reachability problems: given an initial state $x_0$ does the system $\cS$ ever enter a state $x$ or a set of states $\cX$?
\item Asymptotic topological properties: given an initial state $x_0$, which regions of the state space are visited infinitely often by the system?
\item Asymptotic statistical properties: given an initial state $x_0$, does the system converge to a ``steady state" distribution, and can this 
distribution be computed? Does the distribution depend on the initial state $x_0$?
\end{enumerate}

The first type of questions is studied in Control Theory \cite{bressan2007introduction} and also in Automated Verification \cite{clarke1999model}. 
The third type of questions is commonly addressed by Ergodic Theory \cite{Wal82,Pet83}.
These questions in a variety of contexts are also studied by the mathematical field of Dynamical Systems \cite{Man87}. 
For example, one of the celebrated achievements of the  Kolmogorov-Arnold-Moser (KAM) theory and its extensions \cite{Mos01}
is in providing the understanding of question (\ref{q:1}) above for systems of planets such as the solar system. 

An important challenge one needs to address in formally analyzing the computational questions surrounding dynamical systems is the fact that some
of the variables involved, such as the underlying states of $\cS$ may be continuous rather than discrete. These are very important formalities,
which can be addressed e.g. within the framework of computable analysis \cite{Wei00}. Other works dealing with ``continuous" models of computation
include \cite{Ko91,PouRic89,BCSS}.  Most results, both positive and negative, that are significant in practice, usually hold true for any reasonable
model of continuous computation. 

Numerous results on computational properties of dynamical systems have been obtained. In general, while bounded-time simulations are usually possible, 
the computational outlook for the ``infinite" time 
horizon problems is grim: the long-term behavior features of many of the interesting systems is non-computable. Notable examples include piece-wise linear
maps \cite{Mo91,AsMalAm95}, polynomial maps on the complex plane \cite{BY,BraYam07} and cellular automata \cite{Wol02,Ka09}. The proofs of these negative results, while sometimes technically 
involved,  usually follow the same outline: 
(1) show that the system $\cS$ is ``rich enough" to simulate any Turing Machine $M$; (2) show that solving the Halting Problem (or some other non-recursive problem) on $M$
can be reduced to computing the feature $\cF$ in question. These proofs can be summarized in the following:

\begin{thesis} \label{thesis:main}
 If the physical system is rich enough, it can simulate universal computation and therefore many of the system's long-term features are non-computable.
 \end{thesis}

This means that while analytic methods can prove {\em some} long-term properties of {\em some} dynamical systems, for ``rich enough" systems, one cannot hope to have a
general closed-form {\em analytic} algorithm, i.e. one that is not based on simulations,  that computes the properties of its long-term behavior. This  fundamental phenomenon  is qualitatively different from 
chaotic behavior, or the ``butterfly effect", which is often cited as the reason that predicting complex dynamical systems is hard beyond a very short time horizon;  e.g. the weather being hard to predict a few days in advance. 

A chaotic behavior
means that the system is extremely sensitive to the initial conditions, thus someone with only approximate knowledge of the initial state can {\em predict} the 
system's state only within a relatively short time horizon. This does not at all preclude one from being able to compute practically relevant {\em statistical} properties about the 
system. Returning to the weather example, the forecasters may be unable to tell us whether it will rain this Wednesday, but they can give a fairly accurate {\em distribution}
of temperatures on September ${1^{st}}$ next year! 

On the other hand, the situation with systems as in Thesis~1 is much worse. If the system is rich enough to simulate a Turing Machine it will exhibit {\em ``Turing Chaos"}: 
even its statistical properties will become non-computable, not due to precision problems with the initial conditions  but due to the inherent computational hardness of the system. 
This even led some researchers to suggest \cite{Wol02} that simulation is the {\em only} way to analyze the dynamical systems that are rich enough to simulate a universal Turing Machine.

Our goal is to better understand under which scenarios computability-theoretic barriers, rather than incomplete understanding of the system or its initial condition,
preclude us from analyzing the system's long term behavior. A notable feature, shared by  several prior works on computational intractability in dynamical systems, such as \cite{Mo91,BY,Asa01}, 
is that the non-computability phenomenon is {\em not robust}: the non-computability disappears once one introduces even a small amount of noise into the system. 
Thus, if one believes that natural systems are inherently noisy, one would not be able to observe such non-computability phenomena in nature. In fact, we conjecture:


\begin{conjecture} \label{conj:main}
In finite-dimensional systems  non-computable phenomena are not robust.
\end{conjecture}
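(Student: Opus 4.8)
Conjecture~\ref{conj:main} is a program rather than a single lemma, so the plan is to prove it in the paradigmatic case and to isolate the mechanism that should make it work in general. That mechanism is the following. Every known proof that a long-term feature $\cF$ of a finite-dimensional system $\cS$ is non-computable runs, as in Thesis~\ref{thesis:main}, by embedding a universal Turing machine into $\cS$ so that $\cF$ encodes the Halting set; such an embedding uses the exact, noiseless dynamics in an essential way, since it needs trajectories to resolve points at ever finer scales forever. The plan is to show that replacing $\cS$ by a noisy version $\cS_\varepsilon$ --- the one-step map post-composed with convolution by a fixed noise kernel of width $\varepsilon$, or more generally a Markov kernel $\MarkovKernel_\varepsilon$ whose transition densities are bounded below at scale $\varepsilon$ --- collapses this fine-scale information, and in fact renders the long-term statistics computable in time polynomial in $1/\varepsilon$, the output precision, and the complexity of the map $\map$.

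First I would reduce the asymptotics of $\cS_\varepsilon$ to the spectral theory of the Markov operator $\MarkovKernel_\varepsilon$ on measures (equivalently on densities). The structural input is that, the noise having a density, $\MarkovKernel_\varepsilon$ is smoothing: after one step every starting measure is mapped to a density whose modulus of continuity is controlled by $\varepsilon$ and the regularity of $\map$. A Doeblin/minorization estimate --- the density of $\MarkovKernelXA{x}{\cdot}$ is bounded below by some $\beta>0$ on any $\varepsilon$-ball meeting the (bounded) state space --- then yields a uniform spectral gap: an explicit $\rho=\rho(\varepsilon)<1$ with $\TVnorm{\MarkovKernelMuN{\mu}{n}-\stationaryDist}\le\rho^{n}$ for every $\mu$. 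This already makes $\stationaryDist$ unique and globally attracting, which disposes of the dependence on the initial state $x_0$ and of the asymptotic topological questions.

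Next I would make $\stationaryDist$ effectively computable. Discretize the state space by a grid of spacing $\delta\ll\varepsilon$, project $\MarkovKernel_\varepsilon$ to a finite stochastic matrix $\MarkovKernel_{\varepsilon,\delta}$ --- using the modulus of continuity to bound the projection error --- and compute the leading left eigenvector of $\MarkovKernel_{\varepsilon,\delta}$. The spectral gap turns this into a power iteration converging in $O((1-\rho)^{-1}\log(1/p))$ steps for target precision $p$, and perturbation theory converts the discretization error into a bound on $\TVnorm{\stationaryDist-\stationaryDist_\delta}$ controlled by $\delta$ and $\rho(\varepsilon)$; taking $\delta$ polynomially small in $\varepsilon$ and the target precision completes the computation. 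The same finite matrix answers the reachability and infinitely-often questions, since under the gap the empirical visit frequency of a region $\cX$ converges to $\stationaryDist(\cX)$, which is read off the computed eigenvector.

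The hard part will be making these estimates \emph{uniform and effective} starting only from a finite presentation of $\cS$ and of the noise kernel: controlling the modulus of continuity of $\MarkovKernel_\varepsilon$ --- hence the discretization error --- when $\map$ is merely assumed computable and possibly highly irregular, and producing the gap constant $\rho(\varepsilon)$ explicitly rather than just proving its existence. A secondary obstacle is that ``non-computable phenomena'' is not one object: beyond invariant measures one must argue that \emph{any} functional used to encode a non-recursive set through the noiseless dynamics becomes a continuous, hence computable, functional of $\cS$ once the smoothing operator is inserted; I expect this to follow from the gap together with continuity of $\varepsilon\mapsto\stationaryDist$ in total variation, but a clean general statement is the delicate point. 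Finally, the restriction to finite dimensions is genuinely needed and should be flagged: in infinite dimensions a fixed-width noise cannot spread mass uniformly, the minorization fails, and the conclusion is not expected to hold.
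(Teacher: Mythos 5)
The statement you are proving is labelled a \emph{conjecture} in the paper, and the paper does not prove it: it only establishes special cases (Theorems~A, B and C) as evidence. Your proposal is, in spirit, exactly the paper's strategy for those special cases --- Doeblin minorization, spectral gap, discretization into a finite (sub-)stochastic matrix, Perron--Frobenius eigenvector --- so as a research program it is well aligned. But as a proof of the conjecture it has concrete gaps beyond the ones you flag. The central one is your first step: for the noise actually considered in Theorems~A and~B (uniform on the $\ve$-ball around $\map(x)$), the transition density is \emph{not} bounded below on the whole state space, only on a ball of radius $\ve$ about $\map(x)$. Consequently no global Doeblin bound holds, the perturbed system need not be uniquely ergodic, and ``$\stationaryDist$ unique and globally attracting'' is simply false in general: there can be several ergodic measures $\mu_1,\dots,\mu_{N(\ve)}$ with disjoint supports. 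The paper's proof of Theorem~A must first algorithmically locate the finitely many basins (via the inner/outer iteration on a cover) and only then apply a unique-ergodicity argument on each piece; moreover this works only for all but countably many $\ve$, because at exceptional noise levels the supports and basins do not separate cleanly. A spectral gap is recovered only for an \emph{averaged} kernel $\MarkovKernelAvg = \frac{1}{N}\sum_k \MarkovKernelN{k}$ restricted to one basin (Lemma~\ref{lem_DoeblinCondition}), not for $\MarkovKernel$ itself on all of $\manifold$. Your global minorization does hold for everywhere-supported noise (the Theorem~C setting), but then the conjecture is only being verified for a restricted class of perturbations.

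The second gap is quantitative: you claim time polynomial in the output precision, but the paper's Theorem~B only achieves time $poly(1/\alpha)$, i.e.\ \emph{exponential} in the number of precision bits, and this is under the additional hypothesis of unique ergodicity; the $poly(\log(1/\delta))$ bound of Theorem~C requires uniformly analytic noise, poly-time integrability of $K_\ve(f(\cdot),x)$ against polynomials, and applies only at precisions $\delta$ below the noise scale. Finally, your last paragraph correctly identifies the reason the general statement remains open --- ``non-computable phenomena'' ranges over functionals other than invariant measures, and nothing in the spectral-gap machinery addresses, say, topological reachability questions for an individual trajectory --- so the honest conclusion of your argument is a proof of (versions of) Theorems~A--C, not of Conjecture~\ref{conj:main}.
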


Thus, we conjecture that noise actually makes long-term features of the system {\em easier} to predict.
A notable example of a robust physical system that is Turing complete is the RAM computer. Note, however, that to implement a Turing Machine on a RAM machine one would need 
a machine with unlimited storage, thus such a computer, while feasible if we assume unlimited physical space, would be an infinite-dimensional system. We do not know of a way to implement 
a Turing Machine robustly using a finite-dimensional dynamical system.


In this paper we will focus on  discrete-time dynamical systems over continuous spaces as a model for physical processes. Namely, there is a set $X$ representing all the possible states  the system $\cS$ can ever be in, and a function $f:X\to X$, representing the evolution of the system in one unit of time. In other words, if at time $0$ the system is in state $x$, then at time $t$ it will be in state $f^{t}(x)=(f\circ f \circ \dots \circ f) (x)$ ($t$ times).

 We are interested in computing the asymptotic statistical properties of $\cS$ as $t\ra\infty$.  These properties are described by the {\em invariant measures} of the system -- the possible statistical 
 behaviors of $f^t(x)$ once the systems has converged to a ``steady state" distribution. While in general there might be infinitely  (even uncountably) many invariant measures, only a small portion of them are physically relevant.\footnote{The problem of characterizing these \emph{physical} measures is an important challenge in Ergodic Theory.}  A typical picture is the following: the phase space can be divided in \emph{regions} exhibiting qualitatively different limiting behaviors.  Within each region $\cR_i$, for almost every initial condition $x\in \cR_i$, the distribution of $f^t(x)$ will converge to a ``steady state" distribution $\mu_i$ on $X$, supported on the region.  We are interested in whether these distributions can be computed:

\begin{problem} 
\label{pr:1}
Assume that the system $\cS$ has reached some stationary equilibrium distribution $\mu$. What is the probability $\mu(A)$ of observing a certain event $A$?
\end{problem}

In some sense this is the most basic question one can ask about the long-term behavior of the system $\cS$. Formally, the above question corresponds to the computability of the ergodic  invariant measures of the system\footnote{An ergodic measure is an invariant measure that cannot be decomposed into simpler invariant measures.} (see Section \ref{preliminaries}). 
A negative answer to Problem~\ref{pr:1} was given in \cite{GalHoyRoj07c} where the authors demonstrate the existence of computable one-dimensional systems for which {\em every} invariant measure is non-computable.  
This is consistent with Thesis~\ref{thesis:main} above. 

In the present paper we study Problem~\ref{pr:1} in the presence of  small random perturbations: each iteration $f$ of the system $\cS$ is affected by (small) random noise. 
Informally, in the perturbed system $\cS_{\ve}$ the state of the system jumps from $x$ to $f(x)$ and then disperses randomly  around $f(x)$ with distribution $p^{\ve}_{f(x)}(\cdot)$. The 
parameter $\ve$ controls the ``magnitude" of the noise, so that  $p^{\ve}_{f(x)}(\cdot)\to f(x)$ as $\ve\to 0$.

Our first result demonstrates that the non-computability phenomena are broken by the noise. More precisely, we show:
\medskip

\noindent \textbf{Theorem A.} Let $\cS$ be a computable system over a compact subset $\manifold$ of $\RR^d$.  {\em Assume $p^{\ve}_{f(x)}$ is uniform on the $\ve$-ball around $f(x)$. Then, for almost every $\ve>0$, the ergodic measures of the perturbed system $\cS_{\ve}$ are \emph{all computable}. }

\medskip

The precise definition of computability of measures is given in
Section \ref{preliminaries}.
The assumption of uniformity on the noise is not essential, and it can be relaxed to (computable) absolute continuity. Theorem~A  follows from general considerations on the computability and compactness of the relevant spaces. It shows that the non-computability of invariant measures is not robust, which is consistent with the general Conjecture~\ref{conj:main}.

In addition to establishing the result on the {\em computability} of  invariant measures in noisy systems, we obtain upper bounds on the {\em   complexity} of computing these measures. 
In studying the complexity of computing the invariant measures, we restrict ourself to the case when the system has a unique invariant measure -- such systems are said to be ``uniquely ergodic". 

\medskip
\noindent \textbf{Theorem B}. {\em Suppose the perturbed system $\cS_{\ve}$ is uniquely ergodic and the function $f$ is polynomial-time computable.
Then there exists an algorithm ${\mathcal A}$ that computes
$\mu$ with precision $\alpha$ in time
$O_{\cS,\ve}(poly(\frac{1}{\alpha}))$.   }

\medskip

Note that the upper bound is exponential in the number of precision bits we are trying to achieve.  The algorithm in Theorem~B can be implemented in 
a space-efficient way, using only $poly(\log(1/\alpha))$ amount of space. 
If the noise operator has a nice analytical description, and under a mild additional assumption on $f$, the complexity can be improved when computing at precision below the level of the noise.  For example, one could take $p^{\ve}_{f(x)}(\cdot)$ to be a Gaussian around $f(x)$. This kind of perturbation forces the system to have a unique invariant measure, while the analytical description of the Gaussian noise can be exploited to perform a more efficient computation. We need an extra assumption that in addition to being able to compute $f$ in polynomial time, we can also integrate its convolution with polynomial functions in polynomial time.

\medskip
\noindent \textbf{Theorem C}. {\em Suppose the noise  $p^{\ve}_{f(x)}(\cdot)$ is Gaussian, and $f$ is polynomial-time integrable in the above sense.
Then the computation of $\mu$ at precision $\delta < O(\ve)$ requires time  $O_{\cS,\ve}(\text{poly}(\log\frac{1}{\delta}))$.}
\medskip

As with Theorem~A, we do not really need the noise to be Gaussian: any noise function with a uniformly analytic description would suffice. For the sake of simplicity, we will prove Theorem C only in the one dimensional case. 
The result can be easily extended to the multi-dimensional case.

Informally, Theorem~C says that the behavior of the system at scales below the noise level is governed by the ``micro"-analytic structure 
of the noise that is efficiently predictable, rather than by the ``macro"-dynamic structure of $\cS$ that can be computationally intractable to predict. 
Theorem~C suggests that a quantitative version of Conjecture~\ref{conj:main} can be made: if the noise function behaves ``nicely" below some precision level $\ve$, properties of the system do not only 
become computable with high probability, but the computation can be carried out within error $\delta<\ve$ in time $O_\ve(\text{poly}(\log \frac{1}{\delta}))$. We will discuss this further below.

\subsection{Comparison with previous work}

It has been previously observed that the introduction of noise may destroy non-computability in several settings \cite{Asa01,BY-book}. There are two conceptual differences that distinguish our work from previous works. Firstly, we consider the statistical -- rather than topological -- long-term behavior of the system. We still want to be able to predict the trajectory of the system in the long run, but in a statistical sense. Secondly, we also address the computational complexity of predicting these statistical properties. In particular, Theorem~C states that if the noise itself is not a source of additional computational complexity, then the ``computationally simple'' behavior takes over, and  the system becomes polynomial-time computable below the noise level.

\subsection{Discussion}

Our quantitative results (Theorems~B and C) shed light on what we think is a more general phenomenon. A given 
dynamical system, even if it is Turing-complete, loses its ``Turing completeness" once noise is introduced. 
How much computational power does it retain? To give a lower bound, one would have to show that even in the presence
of noise the system is still capable of simulating a Turing Machine subject to some restrictions on its resources (e.g. $PSPACE$ Turing Machines). 
To give an upper bound, one would have to give a generic algorithm for the noisy system, such as the ones given by Theorems~B and C. 
For the systems we consider, informally, Theorems~B and C give (when the system is ``nice") a $PSPACE(\log 1/\ve)$ upper bound on the complexity of computing 
the invariant measure. It is also not hard to see that $PSPACE(\log 1/\ve)$ can be reduced to the evaluation of an invariant 
measure of an $\ve$-noisy system of the type we consider. Thus the computational power of these systems is $PSPACE(\log 1/\ve)$.

This raises the general question on the computational power of noisy systems. In light of the above discussion, it is reasonable to 
conjecture that the computational power is given by $PSPACE(M)$, where $M$ is the amount of ``memory" the system has. In other words, 
there are $\sim 2^M$ states that are robustly distinguishable in the presence of noise. This intuition, however, is hard to formalize for general systems, and further study is needed before such a quantitative assertion can be formulated.

\ignore{

We are interested in the question of whether uncomputable phenomena in physics can appear robustly.  Suppose we are interested in a certain physical system $S$, which evolves in time.  There are several different features of the system $S$ one is interested on. For instance:

\begin{itemize}
\item Reachability problem: do the system ever enter a given set (control theory)
\item Asymptotic properties:  topological and statistical 
\item...  ?
\end{itemize}

The general question we are concerned with is the following:

\medskip

\noindent \textbf{Question 1:}
given a physical system $S$, is feature $F$ \emph{computable}? (can it be \emph{predicted, decided, estimated}, etc...  by a computer ? from which information ?) 

\medskip

In order to study this question, the first step is to chose a mathematical model for the physical system, and a model of computation.  \noteRojas{Mention of non-rigorous calculations?}There are several possibilities, leading to different answers (citations and short descriptions here including attractors, repellers, invariant measures). 

 Whatever the case, there exists many situations in which one can show the existence of some \emph{uncomputable phenomena} (more citations... examples:  Uncomputable Julia sets)

One of the general lines of thoughts is the following:

\medskip

\noindent \textbf{Thesis 1.}  If the physical system is rich enough, it can simulate universal computations and therefore the halting problem leads to uncomputable features about the system.

\medskip

It has even been argued (worlfram ?) that because of this, the current mathematical theory will never be able to predict any interesting feature about any interesting system....  (true ?  ...  worth mentioning ?)

However, if we believe the world is a little noisy, then the above principle is meaningful only if it occurs in a robust way... EXPLAIN ...  Thus, one can formulate  

\medskip

\noindent \textbf{Thesis 2.} Uncomputable phenomena is not robust.   

\medskip

As example, ADD SIMPLE CASES (Julia sets... )
  
In this paper we will use discrete time dynamical systems over continuous spaces as a model for physical processes. Namely, there is a set $X$ representing all the possible sates the system can ever be, and a function $f:X\to X$, representing time  evolution. The idea is that, if at time $0$ the system is in state $x$, then at time $n$ it will be in state $f^{n}(x)=(f\circ f \circ \dots \circ f) (x)$ ($n$ times). 

For modeling computations we will use Turing Machines. More precisely, since we will be interested on computations over the (continuous) set $X$, we will work in the framework of Computable Analysis of Dynamical Systems.

We will focus on the question of estimating asymptotical statistical features of the system: 
 
\medskip

\noindent \textbf{Question 2:} assume system $S$ has reached some stationary equilibrium state. What is the probability of observing a certain event $A$ ?

\medskip

Formally, the above question corresponds to the computability of the ergodic  invariant measures of the system (see Section \ref{preliminaries}). 

A negative answer to Question 2 was given in \cite{GalHoyRoj} where the authors demonstrate the existence of computable one-dimensional systems for which every invariant measure is non-computable.   

In the present paper we study Question 2 under the presence of a small random perturbation. This is, in a sense, in the same spirit as the \emph{Smoothed analysis of algorithms} of Spielman (\cite{spi}), under which for instance the Simplex algorithm was shown to run in polynomial time.  In our case, instead of performing the smoothed analysis of a particular algorithm, we study the ``smoothed computability and complexity of a problem'' (make sense?).  

Intuitively, in the perturbed system $\randomap_{\epsilon}$ a particle jumps from $x$ to $f(x)$ and then disperses randomly  around $f(x)$ with distribution $p^{\epsilon}_{f(x)}(\cdot)$. The idea is that  $p^{\epsilon}_{f(x)}(\cdot)\to f(x)$ as $\epsilon\to 0$.

Our first result shows that the uncomputability phenomena is broken by the noise. More precisely, we show that
\medskip

\noindent \textbf{Theorem A.} Assume $p^{\epsilon}_{f(x)}$ is uniform on the $\epsilon$-ball around $f(x)$. Then, for almost every $\epsilon>0$, the ergodic measures of the perturbed system $\randomap_{\epsilon}$ are \emph{all computable}. 

\medskip

The assumption of uniformity on the noise is not essential, and it can be relaxed to absolute continuity. Theorem A  will follow from general considerations on the computability and compactness of the relevant spaces. This method gives virtually no information about the computational resources needed to perform such a computation. 

In order to study the smoothed complexity of computing the invariant measures, we restrict ourself to the case of unique ergodicity. 

\medskip

EXPLAIN HERE THE CONJECTURE THAT COMPUTATIONS ARE CHEAPER UNDER THE LEVEL $\epsilon$ OF THE NOISE. 

\medskip

\noindent \textbf{Theorem B}. Suppose the perturbed system $\randomap_{\epsilon}$ is uniquely ergodic and let $\varphi$ be some ``nice'' test function.  Then the computation of $\mu(\varphi)$ at precision $\delta$ requires time $\Theta(poly(\frac{1}{\delta}))$.   

\medskip

However, under some analytical hypothesis on the noise, the complexity can be improved when computing at precision below the level of the noise.  Here, we take $p^{\epsilon}_{f(x)}$ to be a Gaussian around $f(x)$. This kind of perturbation forces the system to have a unique invariant measure, while having nice smoothness properties that can be exploited to perform a more efficient computation.

\medskip

\noindent \textbf{Theorem C}. Suppose the noise  $p^{\epsilon}_{f(x)}$ is Gaussian. Then the computation of $\mu(\varphi)$ at precision $\delta < O(\epsilon)$ requires time  $O(poly(\log\frac{1}{\delta}))$.

\medskip

EXPLAIN AND ILLUSTRATE CONNECTIONS TO EXPANDERS, IF ANY. 

}

\section{Preliminaries}\label{preliminaries}
\subsection{Discrete-time dynamical systems}

We now attempt to give a brief description of some elementary ergodic theory in discrete time dynamical systems. For a complete treatment see for instance
\cite{Wal82,Pet83,Man87}. A dynamical system consists of a metric space $X$  representing all the possible  states the system can ever be, and and a map $f:X \to X$ representing the dynamics. In principle, such a model is  \emph{deterministic} in the sense that complete knowledge of the state of the system, say $x\in X$, at some initial time, entirely determines the future \emph{trajectory} of the system: $x, f(x), f(f(x)),...$. Despite of this, in many interesting situations it is impossible to predict any particular feature about any specific  trajectory. This is the consequence of the famous sensitivity to initial conditions (chaotic behavior) and the impossibility to make measurements with infinite precision (approximation): two initial conditions which are very close to each other (so they are indistinguishable for the physical measurement) may diverge in time, rendering the true evolution unpredictable. 

Instead, one studies the \emph{limiting or asymptotic} behavior of the system. A common situation is the following: the phase space can be divided in \emph{regions} exhibiting qualitatively different limiting behaviors.  Within each region, all the initial conditions give rise to a trajectory which approaches an ``attractor'', on which the limiting dynamics take place (and that can be quite complicated).  Thus, different initial condition within the same region may lead in long term to quite different particular behaviors, but identical in a qualitative sense. Any probability distribution supported in the region will also evolve in time, approaching a limiting \emph{invariant} distribution, supported in the attractor, and which describes in statistical terms the dynamics of the \emph{equilibrium} situation. Formally, a probability measure $\mu$ is \defin{invariant} if the probabilities of events do not change in time: $\mu(f^{-1}A)=\mu(A)$. An invariant measure $\mu$ is \defin{ergodic} if it cannot be decomposed: $f^{-1}(A)=A$ implies $\mu(A)=1$ or $\mu(A)=0$.

We now describe  random perturbations of dynamical systems.
A standard reference for this material is \cite{Kif88}.

\subsubsection{Random perturbations} Let $\map$ be a dynamical system on a space $\manifold$ on which Lebesgue measure can be defined (say, a Riemannian manifold). Denote by $P(\manifold)$ the set of all Borel probability measures over $\manifold$, with the weak convergence topology.  We consider a family $\{Q_{x}\}_{x\in M} \in P(M)$. By a \defin{random perturbation of $f$} we will mean a Markov Chain $X_{t}$, $t=0,1,2,...$ with transition probabilities  $P(A|x)=P\{X_{t+1}\in A : X_{t}=x\}=Q_{f(x)}(A)$ defined for any $x\in \manifold$, Borel set $A\subset \manifold$ and $n\in \N$. We will denote the randomly perturbed dynamics $P(\cdot|x)=Q_{f(x)}$ by $\randomap_\ve$. Given $\mu\in P(\manifold)$, the push forward under $\randomap_\ve$ is defined by $(\randomap_{*}\mu)(A)=\int_{\manifold} P(A|x)\, d\mu$.

\begin{definition}A probability measure $\mu$ on $\manifold$ is called an \defin{invariant measure of the random perturbation $\randomap_\ve$ of $f$} if $\randomap_{*}\mu=\mu$.
\end{definition}

We will be interested in \emph{small random perturbations}. More precisely,
we will consider the following choices for $Q^{\ve}_{x}$:
\begin{enumerate}
  \item
    In Theorems A and B we choose
    $Q^{\ve}_{x}$ to be uniform on the $\ve$-ball around $x$.
    That is, $Q^{\ve}_{x}=\vol|_{B(x,\ve)}$ is Lebesgue measure
    restricted to the $\ve$-ball about $x$. 

  \item
    In Theorem C we use an everywhere supported density for
    $Q^{\ve}_{x}=K_{\ve}(x)$, which is uniformly analytic. In particular,
    the Gaussian density of variance $\ve$
    centered at $x$ satisfies these conditions.

\end{enumerate}

\subsection{Computability of probability measures}

Let us first recall some basic definitions and results established in \cite{Gac05, HoyRoj07}. We work on the well-studied computable metric spaces (see \cite{EdaHec98,YasMorTsu99,Wei00,Hem02,BraPre03}).

\begin{definition}
  A \defin{computable metric space} is a triple $(X,d,S)$ where:
  \begin{enumerate}
    \item $(X,d)$ is a separable metric space,
    \item $S=\{s_i:i\in\N\}$ is a countable dense subset of $X$ with a fixed numbering,
    \item $d(s_i,s_j)$ are uniformly computable real numbers.
  \end{enumerate}
\end{definition}

Elements in the dense set $S$ are called \emph{simple} or \emph{ideal} points. Algorithms can manipulate ideal points via their indexes, and thus the whole space can be reached by algorithmic means.  Examples of spaces having natural computable metric structures are Euclidean spaces, the space of continuous functions on $[0,1]$ and $L^{p}$ spaces w.r.t. Lebesgue measure on Euclidean spaces. 

\begin{definition}A point $x\in X$ is said to be \defin{computable} if there is a computable function $\varphi : \N \to S$ such that 
  $$
  d(\varphi(n),x)\leq 2^{-n} \qquad \text{ for all } n  \in \N.
  $$
  Such a function $\varphi$ will be called a \defin{name of} $x$.
\end{definition}

If $x\in X$ and $r>0$, the metric ball $B(x,r)$ is defined as $\{y\in X:d(x,y)<r\}$. The set $\mathcal{B}:=\{B(s,q):s\in S,q\in\Q, q>0\}$ of \defin{simple balls}, which is a basis of the topology, has a canonical numbering $\mathcal{B}=\{B_i:i\in\N\}$. An \defin{effective open set} is an open set $U$ such that there is a r.e. (recursively enumerable) set $E\subseteq \N$ with $U=\bigcup_{i\in E}B_i$.  If $X'$ is another computable metric space, a function $f:X\to X'$ is \defin{computable} if the sets $f^{-1}(B'_i)$ are uniformly effectively open. Note that, by definition, a computable function must be continuous. 

As an example, consider the space $[0,1]$. The collection of simple balls over $[0,1]$ can  be taken to be the intervals with dyadic rational endpoints, i.e., rational numbers with finite binary representation. Let $\mathcal{D}$ denote the set of dyadic rational numbers. Computability of functions over $[0,1]$, as defined in the paragraph above, can be characterized in terms of \emph{oracle Turing Machines} as follows:

\begin{proposition}A function $f:[0,1]\to[0,1]$ is computable if and only if there is an oracle Turing Machine $M^\phi$ such that for any $x\in [0,1]$, any name $\varphi$ of $x$, and any $n\in \N$, on input $n$ and oracle $\varphi$, will output a dyadic $d\in \mathcal{D}$ such that $|f(x)-d|\leq 2^{-n}$. 
\end{proposition}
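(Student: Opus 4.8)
The plan is to prove the two implications separately; both rest on the \emph{use principle}: an oracle Turing machine run on a fixed input halts after finitely many steps and therefore inspects its oracle at only finitely many positions, so its output is determined by those finitely many oracle values alone. For the direction ``$f$ computable $\Rightarrow$ such a machine exists'', recall that computability of $f$ means the sets $f^{-1}(B'_i)$ are uniformly effectively open. Given an oracle name $\varphi$ of a point $x\in[0,1]$ and an input $n$, observe that the finitely many dyadic balls $B(k\,2^{-n-1},2^{-n})$ with $0\le k\le 2^{n+1}$ cover $[0,1]$, and every point of $[0,1]$ lies in the \emph{interior} of at least one of them, since its distance to the nearest multiple of $2^{-n-1}$ is at most $2^{-n-2}<2^{-n}$. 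Hence $f(x)$ lies in some such ball and $x$ lies in the effectively open set $f^{-1}(B(k\,2^{-n-1},2^{-n}))$ for some $k$. The machine dovetails over $k\in\{0,\dots,2^{n+1}\}$, over the list of simple balls $B_j=B(s_j,q_j)$ whose union is $f^{-1}(B(k\,2^{-n-1},2^{-n}))$ (uniformly r.e.\ in $(k,n)$, since $B(k\,2^{-n-1},2^{-n})$ has a computable index among the simple balls), and over $m\in\N$, querying $\varphi(m)$ and testing $d(s_j,\varphi(m))+2^{-m}<q_j$, which witnesses $x\in B_j$. Some triple $(k,j,m)$ eventually passes the test because $x$ does lie in one of the balls $B_j$; the machine then outputs $d=k\,2^{-n-1}\in\mathcal D$, and from $x\in B_j\subseteq f^{-1}(B(k\,2^{-n-1},2^{-n}))$ one gets $|f(x)-d|<2^{-n}$, as required.

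For the converse, suppose $M^\phi$ is as in the statement; I must show $f^{-1}(B'_i)$ is effectively open, uniformly in $i$, writing $B'_i=B(d_i,q_i)$. For a finite partial function $\sigma$ from $\N$ to indices of simple points and for $n\in\N$, write $M^\sigma(n)=d$ to mean that simulating $M$ on input $n$ while answering each query at position $m$ by $s_{\sigma(m)}$ terminates with output $d$ and without ever querying a position outside $\mathrm{dom}\,\sigma$; dovetailing over running times shows the set of such $(\sigma,n,d)$ is r.e. To $\sigma$ associate the open set $V_\sigma=\bigcap_{m\in\mathrm{dom}\,\sigma}B(s_{\sigma(m)},2^{-m})$, a finite intersection of simple balls, from which one uniformly enumerates a sequence of simple balls with union $V_\sigma$. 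The claim is that
\[
  f^{-1}(B'_i) \;=\; \bigcup\bigl\{\, V_\sigma \;:\; \exists\,(n,d)\ \text{with}\ M^\sigma(n)=d \ \text{and}\ |d-d_i|+2^{-n}<q_i \,\bigr\},
\]
which exhibits $f^{-1}(B'_i)$ as a union of simple balls enumerable uniformly in $i$, the side condition $|d-d_i|+2^{-n}<q_i$ being decidable from $i,d,n$.

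To verify the equality, consider first ``$\supseteq$'': if $x\in V_\sigma$ then, since the inequalities defining $V_\sigma$ are strict, $\sigma$ extends to a name $\varphi$ of $x$ by density of $S$; by the use principle $M^\varphi(n)$ outputs $d$, so correctness of $M$ gives $f(x)\in\overline{B(d,2^{-n})}$, and $|d-d_i|+2^{-n}<q_i$ forces $\overline{B(d,2^{-n})}\subseteq B(d_i,q_i)$, whence $x\in f^{-1}(B'_i)$. For ``$\subseteq$'': given $x$ with $|f(x)-d_i|<q_i$, choose $n$ with $|f(x)-d_i|+2\cdot 2^{-n}<q_i$; starting from any name $\varphi$ of $x$, pass to the shifted name $\psi(m):=\varphi(m+1)$, which satisfies the strict bounds $d(\psi(m),x)<2^{-m}$; run $M^\psi(n)$, let $F$ be the finite set of queried positions, $d$ the output, and $\sigma=\psi|_{F}$; then $M^\sigma(n)=d$, $x\in V_\sigma$, and $|d-d_i|+2^{-n}\le |f(x)-d_i|+2\cdot 2^{-n}<q_i$, so $x$ lies in the right-hand union.

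I expect the main obstacle to be purely bookkeeping: keeping the strict-versus-non-strict inequalities consistent between the definition of a name, the definition of $V_\sigma$, and the ball-containment conditions, so that the two inclusions match up exactly (this is why the shifted name $\psi$ and the strict intersections $V_\sigma$ are introduced). The conceptual ingredients — the use principle and the density of the ideal points — are otherwise routine, and the argument uses nothing about $[0,1]$ beyond the facts that its dyadic balls of each resolution cover it and that containments among simple balls can be (semi-)decided, so the same proof works for any computable metric space.
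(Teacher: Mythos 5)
The paper states this proposition without proof, treating it as a standard fact of computable analysis (the equivalence between the effective-open-preimage and oracle-Turing-machine characterizations of computable real functions goes back to Grzegorczyk and Lacombe and appears in \cite{Ko91,Wei00}). There is therefore no in-paper proof to compare against, but your argument is correct and is essentially the canonical one: the forward direction extracts a machine from the r.e.\ enumerations of the preimage balls by dovetailing, and the converse uses the use principle to exhibit each $f^{-1}(B'_i)$ as an r.e.\ union of the open ``consistency cylinders'' $V_\sigma$. The two places that typically trip people up are handled carefully: you justify that a passing triple must exist (because $f(x)$ lies in the \emph{interior} of one of the $2^{n+1}+1$ dyadic balls of radius $2^{-n}$), and you handle the mismatch between the non-strict inequality $d(\varphi(m),x)\le 2^{-m}$ in the definition of a name and the strict inequalities defining $V_\sigma$ by the index shift $\psi(m)=\varphi(m+1)$. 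One small point worth making explicit: the paper's definition of ``name'' requires $\varphi$ computable, but for the proposition (and your proof of the $\supseteq$ inclusion, where you extend $\sigma$ to an arbitrary name by density of $S$) the right reading is that $\varphi$ ranges over \emph{all} functions $\N\to S$ with $d(\varphi(n),x)\le 2^{-n}$, not just computable ones; this is the standard TTE convention and is clearly the intended one here.
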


Poly-time computable functions over $[0,1]$ are defined as follows (see \cite{Ko91}).
\begin{definition}$f:[0,1]\to[0,1]$ is \defin{polynomial time computable} if there is a machine $M$ as in the proposition above which, in addition, always halts in less than $p(n)$ steps,  for some polynomial $p$, regardless of what the oracle function is. 
\end{definition}


We now introduce a very general notion of computability of probability measures. When $\manifold$ is a computable metric space, the space $P(\manifold)$ of probability measures over $\manifold$ inherits the computable structure. The  set of \emph{simple} measures $S_{P(\manifold)}$ can be taken to be finite rational convex combinations of point masses supported on ideal points of $\manifold$. When $\manifold$ is compact (which will be our case), the weak topology is compatible with the Wasserstein-Kantorovich distance:
$$
W_{1}(\mu_{1},\mu_{2})=\sup_{\varphi \in \text{1-Lip}(\manifold)}\left| \int \varphi\, d\mu_{1} - \int \varphi \, d\mu_{2} \right|,
$$
where 1-Lip($\manifold$) denotes the space of functions with Lipschitz constant less than 1. The triple $P(\manifold, S_{P(\manifold)} , W_{1})$ is a computable metric space. See for instance \cite{HoyRoj07}. This automatically gives the following notion:

\begin{definition}
A probability measure $\mu$ is \defin{computable} if it is a computable point of $P(\manifold)$.
\end{definition}

The definition above makes sense for any probability measure, and we will use it in Theorems A and B.  One shows that for computable measures, the integral of computable functions is again computable  (see \cite{HoyRoj07}).  Simple examples of computable measures are Lebesgue measure, as well as any absolutely continuous measure with a computable density function. 

However, computable absolutely continuous (w.r.t. Lebesgue) measures do not necessarily have computable density functions (simply because they may not be continuous).

\begin{definition}A probability measure $\mu$ over $[0,1]$ is \defin{polynomial time computable} if its cumulative distribution function $F(x)=\mu([0,x])$ is polynomial time computable.
\end{definition}

Polynomial time computability of the density function of a measure $
\mu$  does not imply poly-time computability of $\mu$ (unless ${\mathbf P}={ \#\mathbf P}$, see \cite{Ko91}). However, the situation improves under analyticity assumptions. In particular, we will rely on the following result.

\begin{proposition}[\cite{KoFr88}]\label{poly}Assume $f$ is analytic and polynomial time computable on $[0,1]$. Then
\begin{enumerate}[(i)]
\item the Taylor coefficients of $f$ form a uniformly poly-time computable sequence of real numbers and,
\item the measure $\mu$ with density $f$ is polynomial time computable.
\end{enumerate}
\end{proposition}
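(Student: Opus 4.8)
The plan is to reduce both parts to the standard quantitative form of real--analyticity on a compact interval, combined with careful precision bookkeeping. Recall that $f$ is analytic on $[0,1]$ exactly when there are constants $r>0$ and $M>0$ with $|f^{(k)}(c)|\le M\,k!\,r^{-k}$ for all $k\in\N$ and all $c\in[0,1]$; equivalently, the Taylor coefficients $a^{(c)}_k=f^{(k)}(c)/k!$ at any center $c$ satisfy $|a^{(c)}_k|\le M r^{-k}$, and the Taylor series of $f$ at $c$ represents $f$ on $(c-r,c+r)\cap[0,1]$. I would fix such $r,M$ once and for all; all running times below are polynomial in $k$, in the target precision $m$, and in $M,1/r$ (constants for a fixed $f$).

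\textbf{Part (i).} To output $a^{(c)}_k$ to precision $2^{-m}$ at a fixed rational center $c$ (say $c=0$; an interior center is handled by one--sided differences toward the interior), I would use the $k$-th finite difference $D_h=h^{-k}\sum_{j=0}^{k}(-1)^{k-j}\binom{k}{j}f(c+jh)$ for a dyadic step $h$ chosen exponentially small. When $kh<r$ each $f(c+jh)$ equals its absolutely convergent Taylor series at $c$, and the elementary identity $\sum_{j=0}^{k}(-1)^{k-j}\binom{k}{j}j^{i}=0$ for $i<k$ (and $=k!$ for $i=k$) gives $D_h=f^{(k)}(c)+E$, where — using $|f^{(i)}(c)/i!|\le Mr^{-i}$ and $\big|\sum_{j}(-1)^{k-j}\binom{k}{j}j^{i}\big|\le 2^{k}k^{i}$ — the tail satisfies $|E|\le 2M\,2^{k}(k/r)^{k}(kh/r)$ whenever $kh/r\le\tfrac12$. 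Hence $|E|/k!\le 2M(2e/r)^{k}(kh/r)$, so choosing $h\approx 2^{-m}(r/2e)^{k}/\mathrm{poly}(k)$ — a dyadic with $O(km+k^{2})$ bits — makes the discretization error in $a^{(c)}_k=D_h/k!+E/k!$ at most $2^{-m}/2$. It then remains to evaluate the $k+1$ numbers $f(c+jh)$ at the explicitly given $O(km+k^{2})$-bit dyadics $c+jh$; the polynomial--time oracle for $f$ produces each to precision $2^{-\Theta(km+k^{2})}$ in time $\mathrm{poly}(k,m)$, which is enough to absorb the precision loss caused by the binomials $\binom{k}{j}\le 2^{k}$ and by the division by $h^{k}$, after which forming $D_h/k!$ is rational arithmetic on $O(km+k^{2})$-bit numbers. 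This produces $a^{(c)}_k$ in time $\mathrm{poly}(k,m)$, uniformly in $k$, which is (i). The one real difficulty is precisely the catastrophic cancellation in $h^{-k}\sum_j(-1)^{k-j}\binom{k}{j}f(c+jh)$: this formula loses $\Theta(km+k^{2})$ bits of precision, and what keeps the computation polynomial is that analyticity forces the error $E$ to decay geometrically in $1/h$, so the required step $h$, though exponentially small, still has only polynomially many bits.

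\textbf{Part (ii).} By definition $\mu$ is polynomial--time computable iff its cumulative distribution $F(x)=\mu([0,x])=\int_{0}^{x}f(t)\,dt$ is polynomial--time computable on $[0,1]$, so it is enough to compute $F(x)$ within $2^{-m}$ in time $\mathrm{poly}(m)$. Partition $[0,1]$ by rationals $0=t_{0}<t_{1}<\dots<t_{L}=1$ with $t_{i+1}-t_{i}<r/2$ (so $L=O(1/r)=O(1)$) and expand $f$ on $[t_{i},t_{i+1}]$ in its Taylor series at $t_{i}$; since $|t-t_{i}|\le r/2$ and $|a^{(t_{i})}_k|\le Mr^{-k}$, this series converges geometrically, so truncating it at $K=O(m)$ terms approximates $f$ within $2^{-m}$ on $[t_{i},t_{i+1}]$. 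Integrating the truncation term by term gives, for $x\in[t_{i},t_{i+1}]$, the explicit quantity $\sum_{k<K}\tfrac{a^{(t_{i})}_k}{k+1}(x-t_{i})^{k+1}$, which I can evaluate within $2^{-m}$ using the coefficients from Part (i) (each to $O(m)$ bits, hence in time $\mathrm{poly}(m)$). Writing $F(x)=\sum_{i:\,t_{i+1}\le x}\int_{t_{i}}^{t_{i+1}}f+\int_{t_{i^{*}}}^{x}f$ with $t_{i^{*}}\le x<t_{i^{*}+1}$ and adding these $O(1)$ contributions yields $F(x)$ within $2^{-m}$; and since $F$ is Lipschitz with constant $\|f\|_{\infty}\le M$, it suffices to read the input $x$ to precision $2^{-m-O(1)}$. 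The whole computation runs in time $\mathrm{poly}(m)$, which is (ii). (Alternatively, one can bypass Part (i) and estimate $\int_{0}^{x}f$ directly by a geometrically convergent quadrature rule — e.g. $O(m)$-point Gauss--Legendre or Clenshaw--Curtis — which is legitimate because an analytic $f$ on $[0,1]$ extends holomorphically to a Bernstein ellipse around $[0,1]$; this again costs only $\mathrm{poly}(m)$ evaluations of $f$ and $\mathrm{poly}(m)$ arithmetic.)
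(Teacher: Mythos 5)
The paper does not actually prove this proposition: it is imported verbatim from Ko--Friedman \cite{KoFr88} and used as a black box, so there is no in-paper argument to compare against. Your blind reconstruction is, as far as I can check, correct and self-contained, and it follows the standard route for this result. For (i), the $k$-th finite difference $D_h=h^{-k}\sum_{j}(-1)^{k-j}\binom{k}{j}f(c+jh)$ with the identity $\sum_j(-1)^{k-j}\binom{k}{j}j^i=k!\,\indicator{i=k}$ for $i\le k$ is the right tool, and your error bookkeeping is sound: the tail bound $|E|/k!\le 2M(2e/r)^k(kh/r)$ follows from $|a_i|\le Mr^{-i}$, $|\sum_j(-1)^{k-j}\binom{k}{j}j^i|\le 2^kk^i$ and $k^k/k!\le e^k$, and you correctly identify that the whole point is that $h$ can be taken with only $O(m+k)$ bits while the cancellation costs $O(km+k^2)$ bits of working precision, which the poly-time oracle for $f$ can supply. (One cosmetic slip: you write $a^{(c)}_k=D_h/k!+E/k!$ where you mean $D_h/k!=a^{(c)}_k+E/k!$; the estimate you actually use is the right one.) For (ii), reducing to the c.d.f., covering $[0,1]$ by $O(1/r)=O(1)$ intervals of length below the uniform radius of analyticity, truncating each local Taylor series at $K=O(m)$ terms and integrating termwise is exactly what is needed under the paper's definition of a poly-time computable measure, and the Lipschitz bound $\|f\|_\infty\le M$ justifies reading the oracle for $x$ to $O(m)$ bits. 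The only thing your write-up buys beyond the paper is transparency: the paper's reader must trust \cite{KoFr88}, whereas your argument makes explicit where analyticity (the geometric decay forcing $h$ to be only polynomially many bits long, and the $O(m)$-term truncation) enters.
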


 In the proof of Theorem C,  we actually show that the invariant measure $\pi$ has a density function  which is analytic and polynomial time computable.

\section{Proof of Theorem A}

\subsection{Outline of the proof}

First observe that since $\manifold$ is compact and the support of any ergodic measure of $\randomap_{\ve}$ must contain an $\ve$-ball, there can be only finitely many ergodic measures $\mu_{1}, \mu_{2}, ..., \mu_{N(\ve)}$.  The algorithm to compute them first finds all regions that separate the dynamics into disjoint parts.  For this we show that for almost every $\ve$, every ergodic measure has a basin of attraction such that the support of the measure is \emph{well} contained in the basin.  More precisely, we show:

\begin{theorem}\label{thm.basins}
For all but countably many  $\ve>0$, there exists open sets $A_{1},...,A_{N(\ve)}$ such that for all $i=1,...,N(\ve)$:
\begin{itemize}
\item[(i)] $\supp(\mu_{i})\subset A_{i}$ and,
\item[(ii)] for every $x\in A_{i}$, $\mu_{x}=\mu_{i}$, where $\mu_{x}$ is the limiting distribution of $\randomap_{\ve}$ starting at $x$.
 \end{itemize}
\end{theorem}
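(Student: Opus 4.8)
The plan is as follows. Write $\mathcal S_\varepsilon$ for the Markov chain with transition kernel $\mathcal P_\varepsilon(x,\cdot)$ equal to normalized volume on $B(f(x),\varepsilon)\cap M$. Since $f$ is continuous this kernel is Feller, and since the density of $\mathcal P_\varepsilon(x,\cdot)$ equals $1/\vol(B(f(x),\varepsilon)\cap M)\ge 1/\vol(M)$ on $B(f(x),\varepsilon)\cap M$, whose volume is bounded below uniformly in $x$, the chain is a $T$-chain on the compact space $M$. From the structure theory for such chains (quoting \cite{Kif88}) there are finitely many ergodic measures $\mu_1,\dots,\mu_{N(\varepsilon)}$, with $K_i:=\supp\mu_i$, and for every $x$ the occupation measures $\frac1n\sum_{k<n}\mathcal P_\varepsilon^k(x,\cdot)$ converge weakly to $\mu_x=\sum_i p_i(x)\mu_i$, where $p_i(x)=\Prob{\text{the chain from }x\text{ is eventually absorbed into }K_i}$ and $\sum_i p_i(x)=1$. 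I would then record the elementary consequence of invariance: since $\mathcal P_\varepsilon\mu_i=\mu_i$ and $\supp\mathcal P_\varepsilon(x,\cdot)=\overline{B(f(x),\varepsilon)\cap M}$, one gets $\overline{B(f(x),\varepsilon)}\subseteq K_i$ for every $x\in K_i$ (this already gives finiteness, since each $K_i$ then has volume $\ge c_\varepsilon>0$ and they are pairwise disjoint), together with $K_i=\overline{\mathrm{int}\,K_i}$ and $\mathcal P_\varepsilon(x,K_i)=1$ for $x\in K_i$. In particular, once the chain enters $K_i$ it stays there and converges to $\mu_i$, so $K_i\subseteq\beta_i:=\{x:\mu_x=\mu_i\}=\{x:p_i(x)=1\}$. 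I would take $A_i:=\mathrm{int}\,\beta_i$: it is open and satisfies (ii) because $A_i\subseteq\beta_i$, and it remains only to show $K_i\subseteq A_i$, i.e.\ that $\beta_i$ is a neighbourhood of $K_i$. Interior points of $K_i$ are trivially interior to $\beta_i$, so the whole difficulty is to show that, for all but countably many $\varepsilon$, every boundary point $x_0\in\partial K_i$ has a neighbourhood inside $\beta_i$.

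Two observations would do the work, both about the open set of points reachable from $x$ through positive density, $\Omega_\varepsilon(x):=\bigcup_{n\ge1}V_n$ with $V_1=B(f(x),\varepsilon)$ and $V_{n+1}=\bigcup_{y\in V_n}B(f(y),\varepsilon)$. (a) $V_n$ is exactly the set of positive-density points of $\mathcal P_\varepsilon^n(x,\cdot)$, so if $\Omega_\varepsilon(x)\cap K_l=\emptyset$ and $x\notin K_l$ then $p_l(x)=0$; and since $\Omega_\varepsilon(x)$ is open and $K_l=\overline{\mathrm{int}\,K_l}$, it is enough that $\Omega_\varepsilon(x)\cap\mathrm{int}\,K_l=\emptyset$. (b) A uniform-continuity comparison: if $d(f(x),f(x_0))\le\eta$ then $V_n^{\varepsilon}(x)\subseteq V_n^{\varepsilon+\eta}(x_0)$ for every $n$ (straightforward induction), hence $\Omega_\varepsilon(x)\subseteq\Omega_{\varepsilon+\eta}(x_0)$; and for any noise level $\varepsilon'$, if $x_0$ lies in the corresponding support $K_i(\varepsilon')$ then iterating $\overline{B(f(y),\varepsilon')}\subseteq K_i(\varepsilon')$ for $y\in K_i(\varepsilon')$ yields $\Omega_{\varepsilon'}(x_0)\subseteq\mathrm{int}\,K_i(\varepsilon')$.

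The exceptional $\varepsilon$'s are singled out through monotonicity of $N(\varepsilon)$ (I now display the $\varepsilon$-dependence of the supports as $K_i(\varepsilon)$). For $\varepsilon<\varepsilon'$, a containment $\overline{B(f(x),\varepsilon')}\subseteq A$ implies $\overline{B(f(x),\varepsilon)}\subseteq A$, so every $\mathcal P_{\varepsilon'}$-forward-closed set is $\mathcal P_\varepsilon$-forward-closed; applying Krylov--Bogolyubov and ergodic decomposition inside each compact forward-closed set $K_i(\varepsilon')$ shows it contains some $K_l(\varepsilon)$, and disjointness of both families yields an injection $i\mapsto l(i)$, so $N(\varepsilon')\le N(\varepsilon)$; moreover when $N(\varepsilon')=N(\varepsilon)$ this injection is a bijection, so after relabelling $K_i(\varepsilon)\subseteq K_i(\varepsilon')$ for all $i$ whenever $\varepsilon,\varepsilon'$ lie in a common interval of constancy of $N$. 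A monotone integer-valued function is locally constant off a countable set $\mathcal D$; so let $\varepsilon\notin\mathcal D$, let $(a,b)\ni\varepsilon$ be its interval of constancy, and fix $\eta\in(0,b-\varepsilon)$. For $x_0\in\partial K_i(\varepsilon)$: by the nesting $x_0\in K_i(\varepsilon)\subseteq K_i(\varepsilon+\eta)$, hence by (b) $\Omega_{\varepsilon+\eta}(x_0)\subseteq\mathrm{int}\,K_i(\varepsilon+\eta)$, which is disjoint from each $K_l(\varepsilon+\eta)$ with $l\ne i$ and therefore from each $\mathrm{int}\,K_l(\varepsilon)\subseteq K_l(\varepsilon+\eta)$ with $l\ne i$. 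Choose $\delta>0$ small enough for the comparison in (b) and also with $\delta<\min_{l\ne i}d(K_i(\varepsilon),K_l(\varepsilon))$. Then for every $x\in B(x_0,\delta)$ we have $\Omega_\varepsilon(x)\subseteq\Omega_{\varepsilon+\eta}(x_0)$, so $\Omega_\varepsilon(x)\cap\mathrm{int}\,K_l(\varepsilon)=\emptyset$ and $x\notin K_l(\varepsilon)$ for all $l\ne i$; by (a), $p_l(x)=0$ for all $l\ne i$, so $p_i(x)=1$ and $x\in\beta_i$. Thus $B(x_0,\delta)\subseteq\beta_i$, and since $x_0\in\partial K_i(\varepsilon)$ was arbitrary, $K_i(\varepsilon)\subseteq\mathrm{int}\,\beta_i=A_i$, which finishes the argument.

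The step I expect to be the actual obstacle is the one in the last paragraph. The property ``$x$ reaches no foreign ergodic component'' is a closed condition on $x$ (reachable sets only grow under limits of the base point), so it cannot be spread to a neighbourhood of a boundary point by semicontinuity alone; one really needs the monotonicity of $N(\varepsilon)$ and the resulting nesting of supports, which is what makes the comparison of the $\varepsilon$-reachable set of a nearby point with the $(\varepsilon+\eta)$-reachable set of $x_0$ legitimate---and at that slightly larger scale $x_0$ has become interior to its own enlarged component. Everything else (the Feller/$T$-chain structure, existence of the $\mu_x$, and the decomposition $\mu_x=\sum_i p_i(x)\mu_i$) is classical, and I would simply quote \cite{Kif88}.
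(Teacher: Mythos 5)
Your argument is correct and rests on the same two pillars as the paper's own proof: the monotonicity of $N(\ve)$ (hence local constancy off a countable set) together with the resulting nesting $\supp(\mu_i(\ve))\subseteq\supp(\mu_i(\ve'))$ on an interval of constancy, combined with the fixed-point identity $\overline{B}_{\ve}(f(\supp\mu_i))=\supp\mu_i$ for the supports (the paper's Lemma~\ref{lemma.supports}, your forward-invariance observation), which is what lets a boundary point of $\supp(\mu_i(\ve))$ become interior at the slightly larger noise level. The only real difference is that the paper simply sets $A_i=\mathrm{int}(\supp\mu_i(\ve'))$ and stops there, whereas you take $A_i=\mathrm{int}\,\beta_i$ and carry out the reachable-set comparison to verify property (ii) explicitly --- a step the paper leaves implicit --- so your write-up is, if anything, the more complete of the two.
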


This is used to construct an algorithm to find these regions, which is explained in the Section \ref{section.algo.basins}, and the proof that it terminates (Theorem \ref{terminates}) follows from Theorem \ref{thm.basins}. 

The second part of the algorithm, uses compactness of the space of measures to find the ergodic measures within each region, by ruling out the ones which are not invariant. Here we use the fact that if a system is uniquely ergodic, then its invariant measure is computable (see \cite{GalHoyRoj07c}). This result is applied to the system $\randomap_{\ve}$ restricted to each of the regions (provided by the algorithm described in Section \ref{section.algo.basins}) where it is uniquely ergodic.

The algorithm thus obtained has the advantage of being simple and completely general. On the other hand, it is not well suited for a complexity analysis, because the search procedure is computationally extremely wasteful.

\subsection{The Algorithm}

\begin{proof}[Proof of Theorem \ref{thm.basins}]

For $\ve>0$, let $E(\ve)$ be the set of ergodic measures of $\randomap_{\ve}$.  By compactness, $E(\ve)=\{\mu_{1},\ldots,\mu_{N(\ve)}\}$ is finite. For a set $A$,  we denote by $\nbhd{A}{\delta}=\{x\in M: d(x,A)<\delta\}$ the $\delta$-neighborhood of $A$, and by $\closure{A}$ its closure. For simplicity, we assume $M$ to be a connected manifold with  no boundary so that, in particular 
$$
\closenbhd{A}{\delta}=\{x\in \manifold: d(x,A)\leq\delta\}=\closure{\nbhd{A}{\delta}}.
$$

It is clear that the support of any ergodic measure for $\randomap_{\ve}$ contains the support of at least one ergodic measure for $\randomap_{\epsilon-h}$, for any $h>0$. Therefore, the function $N: \ve \mapsto N(\ve)$ is monotonic in $\ve$ and hence  it can have at most countably many discontinuities.

Suppose $N(\cdot)$ is constant on an interval containing $\ve$ and $\ve'>\ve$.  Then, for any $i$ we have 
$$
f(\supp(\mu_{i}(\ve)))\subset f(\supp(\mu_{i}(\ve')))
$$
and therefore, since $\ve<\ve'$:
$$
\overline{B}_{\ve}(f(\supp(\mu_{i}(\ve))))\subset \rm{int}(\overline{B}_{\ve'}(f(\supp(\mu_{i}(\ve'))))).
$$
Combining this observation with the following Lemma \ref{lemma.supports} shows that, if $N(\cdot)$ is continuous at $\ve$, then for any $\ve'>\ve$ sufficiently close to $\ve$ (such that $N(\ve)=N(\ve')$), it holds
$$
\supp(\mu_{i}(\ve))\subset \rm{int}(\supp(\mu_{i}(\ve'))).
$$

The sets $A_{i}$ in the theorem can then be taken to be $A_{i}=\rm{int}(\supp(\mu_{i}(\ve')))$, which finishes the proof of Theorem \ref{thm.basins}.
\end{proof}

\begin{lemma}\label{lemma.supports} For every $i=1,..,N(\ve)$ 
$$
\overline{B}_{\ve}(f(\supp(\mu_{i}(\ve))))=\supp(\mu_{i}(\ve)).
$$
\end{lemma}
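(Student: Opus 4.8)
The plan is to compute the support of the pushforward $\randomap_*\nu$ for an arbitrary $\nu\in P(\manifold)$ and then to specialize to the invariant measure $\mu_i(\ve)$. Write $K=\supp(\mu_i(\ve))$; since $\manifold$ is compact, $K$ and $f(K)$ are compact. As $\mu_i(\ve)$ is ergodic it is in particular invariant, so $\randomap_*\mu_i(\ve)=\mu_i(\ve)$ and hence $\supp(\randomap_*\mu_i(\ve))=K$. Thus the lemma follows once one establishes the general identity
$$\supp(\randomap_*\nu)=\closenbhd{f(\supp\nu)}{\ve}\qquad\text{for every }\nu\in P(\manifold).$$

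To prove this identity I would fix a nonempty open set $U\subset\manifold$ and analyze when $(\randomap_*\nu)(U)=\int_{\manifold}Q^\ve_{f(x)}(U)\,d\nu(x)$ is positive. Since $Q^\ve_{f(x)}$ is (normalized) Lebesgue measure on $\nbhd{f(x)}{\ve}$ and a nonempty open subset of $\manifold$ has positive volume, $Q^\ve_{f(x)}(U)>0$ iff $U\cap\nbhd{f(x)}{\ve}\neq\emptyset$. The set $W_U=\{x\in\manifold:U\cap\nbhd{f(x)}{\ve}\neq\emptyset\}$ is open: if $x_0\in W_U$, pick $z\in U$ with $d(z,f(x_0))<\ve$; since $f$ is computable, hence continuous, $d(f(x),f(x_0))$ stays below $\ve-d(z,f(x_0))$ on a neighborhood of $x_0$, so that neighborhood lies in $W_U$. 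Consequently $(\randomap_*\nu)(U)>0$ iff $\nu(W_U)>0$ iff $W_U\cap\supp\nu\neq\emptyset$ (using that $W_U$ is open) iff there is $x\in\supp\nu$ with $U\cap\nbhd{f(x)}{\ve}\neq\emptyset$, i.e. iff $U$ meets $\bigcup_{x\in\supp\nu}\nbhd{f(x)}{\ve}=\nbhd{f(\supp\nu)}{\ve}$. Hence $\supp(\randomap_*\nu)$, the set of points all of whose open neighborhoods have positive $\randomap_*\nu$-measure, equals $\closure{\nbhd{f(\supp\nu)}{\ve}}$, and by the standing assumption that $\manifold$ is a connected manifold without boundary this closure is exactly $\closenbhd{f(\supp\nu)}{\ve}$.

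Applying the identity to $\nu=\mu_i(\ve)$ and using $\supp(\randomap_*\mu_i(\ve))=K$ then gives $K=\closenbhd{f(K)}{\ve}$, which is the assertion of the lemma. I expect the only step requiring care to be the middle one, the support computation: specifically the openness of $W_U$, which is precisely where continuity of $f$ is used, and the final identification of $\closure{\nbhd{f(\supp\nu)}{\ve}}$ with $\closenbhd{f(\supp\nu)}{\ve}$, which is where the manifold hypothesis enters. Everything else is routine measure-theoretic bookkeeping about supports.
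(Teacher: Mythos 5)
Your proof is correct and follows essentially the same route as the paper: both arguments reduce the lemma to checking when the pushforward under the uniform-noise kernel assigns positive mass to a small open set, and both use invariance of $\mu_i(\ve)$ plus the closure identification permitted by the manifold assumption. You merely package the computation as a general identity $\supp(\randomap_*\nu)=\closenbhd{f(\supp\nu)}{\ve}$ before specializing, which is a cleaner organization of the same two inclusions the paper proves directly.
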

\begin{proof}
For $\delta>0$ we have that:
$$
\mu(B(x,\delta))=\int_{M} p(y,B(x,\delta))\,d\mu(y)=\int_{\supp(\mu)} \vol(B(x,\delta)|B(f(y),\epsilon)) \, d\mu(y).
$$ 

If $d(x,f(\supp (\mu) ))>\ve$ then clearly  there is a $\delta>0$ such that $\mu(B(x,\delta))=0$ so  that  
$$
\supp(\mu_{i}(\ve))\subseteq  \overline{B}_{\ve}(f(\supp(\mu_{i}(\ve)))).
$$ 
On the other hand, if  $d(x,f(y))<\ve$ for some $y\in \supp(\mu)$, then for any $\delta$ small enough we have 

$$
B(x,\delta)\subset B(y',\ve) 
$$ 
for any $y'\in B(f(y),\delta)$. It follows that $\vol(B(x,\delta)|B(f(s),\ve))=\frac{\vol(B_{\delta})}{\vol(B_{\ve})}>0$ for all $s\in f^{-1}(B(f(y),\delta))$  and therefore 
$$
\int_{\supp(\mu)} \vol(B(x,\delta)|B(f(y),\ve)) \, d\mu(y) >      \frac{\vol(B_{\delta})}{\vol(B_{\ve})}   \mu(f^{-1}(B(f(y),\delta))) >0
$$ 
so that 
$$
B_{\ve}(f(\supp(\mu_{i}(\ve))))\subset \supp(\mu_{i}(\ve)).
$$
Since $\supp(\mu)$ is closed, the claim follows.
\end{proof}



We now set the language we will use in describing the algorithm computing the ergodic measures. Fix $\ve>0$. Let $\cover=\{\atom_{1},...,\atom_{\ell}\}$ be a finite open cover of $\manifold$. 
\begin{definition}
For any open set $A\subset \manifold$ and any $\delta>0$ let
$$
\innerCoverNbhd{A}{\delta}=\{\atom \in \cover : \atom \subset \bigcap_{x\in A}\nbhd{x}{\delta}\}
$$
denote the $\delta$-\defin{inner neighborhood} of $A$ in $\cover$. 

 Define the  \defin{$\delta$-inner iteration}  $\innerMap: 2^{\cover}\to 2^{\cover}$  by:

\begin{enumerate}
\item $\innerMap(\emptyset)=\emptyset$
\item For all $\atom\in \cover$, $\innerMap(\atom)=\innerCoverNbhd{f(\atom)}{\delta}$,
\item $\innerMap(\{\atom_{1},...,\atom_{m}\})=\bigcup_{i\leq m} \innerMap(\atom_{i})$.
\end{enumerate}

\end{definition}


\begin{definition}
For any open set $A\subset M$ and any $\delta>0$ let
$$
\outerCoverNbhd{A}{\delta}=\{\atom \in \cover : \atom \cap \nbhd{A}{\delta}\neq \emptyset\}
$$
denote the $\delta$-\defin{outer neighborhood} of $A$ in $\cover$. 

 Define the  \defin{$\delta$-outer iteration} $\outerMap: 2^{\cover}\to 2^{\cover}$ by:

\begin{enumerate}
\item $\outerMap(\emptyset)=\emptyset$
\item For all $\atom\in \cover$, $\outerMap(\atom)=\outerCoverNbhd{f(\atom)}{\delta}$,
\item $\outerMap(\{\atom_{1},...,\atom_{m}\})=\bigcup_{i\leq m} \outerMap(\atom_{i})$.
\end{enumerate}

\end{definition}

\begin{definition}An atom $\atom\in\cover$ is \defin{inner-periodic} if 
$$
\atom \in \innerMap^{|\cover|}(\atom).
$$

\end{definition}




In the following, we chose $\delta\leq\ve$ and let $\cover$ be a covering such that for a small interval around $\delta$ and all $\atom \in \cover$, $\innerMap(\atom)$ is constant and non empty.  



\begin{definition}
The \defin{inner orbit} of an atom $\atom \in \cover$ is defined to be 
$$
\mathcal{O}_{\rm{in}}\{\atom\}=\bigcup_{k\geq 0}\innerMap^{k}\{\atom\}.
$$
\end{definition}

\begin{definition}
A collection of atoms of $\cover$ is called \defin{inner-irreducible} if all of them have the same inner orbit.
\end{definition}

\begin{remark}If a collection of atoms is inner-irreducible, then everyone of these atoms is inner-periodic.
\end{remark}

\begin{proposition}
The inner map $\innerMap$ and outer map $\outerMap$ are  computable. 
\end{proposition}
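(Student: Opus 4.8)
The plan is to show that both $\innerMap$ and $\outerMap$ are computable as maps $2^{\cover} \to 2^{\cover}$, and since the cover $\cover = \{\atom_1, \dots, \atom_\ell\}$ is finite, $2^{\cover}$ is a finite set and the only content is that the maps can be effectively evaluated on atoms (the extension to finite unions being automatic from clause (3) of each definition). So it suffices to show that for each atom $\atom \in \cover$, the sets $\innerMap(\atom) = \innerCoverNbhd{f(\atom)}{\delta}$ and $\outerMap(\atom) = \outerCoverNbhd{f(\atom)}{\delta}$ can be computed. The key observation is that membership of an atom $\atom' \in \cover$ in either of these sets is decided by a geometric condition involving $f(\atom)$, $\atom'$, and the parameter $\delta$; since $f$ is computable and the atoms are simple (effectively open) sets, these conditions are semi-decidable, and under the genericity hypothesis on $\delta$ (that $\innerMap(\atom)$ is locally constant in $\delta$, hence the relevant inequalities are strict) they become decidable.

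Concretely, first I would recall that $f$ is a computable map on the computable metric space $\manifold$, so the image $f(\atom)$ of a simple open set can be effectively approximated, and in particular for any two ideal points and any rational radii one can semi-decide whether one simple ball is contained in another, whether two simple balls intersect, etc. For the outer map: $\atom' \in \outerCoverNbhd{f(\atom)}{\delta}$ iff $\atom' \cap \nbhd{f(\atom)}{\delta} \neq \emptyset$, i.e. iff there exist a point $u \in \atom'$ and a point $v \in f(\atom)$ with $d(u,v) < \delta$. This is an open condition, hence its truth is recursively enumerable by searching over ideal points $u, v$ and rational witnesses; and its negation — $d(u,v) \geq \delta$ for all such $u,v$, equivalently $d(\closure{\atom'}, \closure{f(\atom)}) \geq \delta$ — is likewise semi-decidable because the distance between the closures of two simple sets is a computable (lower-semicomputable from above and from below) quantity once we have chosen $\delta$ off the at-most-countable bad set where equality $d(\closure{\atom'}, f(\closure{\atom})) = \delta$ holds. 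For the inner map: $\atom' \in \innerCoverNbhd{f(\atom)}{\delta}$ iff $\atom' \subset \bigcap_{x \in f(\atom)} \nbhd{x}{\delta}$, i.e. iff every point of $\atom'$ lies within $\delta$ of every point of $f(\atom)$, equivalently $\sup_{u \in \atom', v \in f(\atom)} d(u,v) < \delta$, i.e. $\diam$-type quantity $d(\closure{\atom'} , \closure{f(\atom)})_{\sup} < \delta$. Again this is a computable real in the chosen (generic) $\delta$, so the strict inequality is decidable. Since $\manifold$ is compact and $\cover$ is finite, both suprema and the relevant distances are attained and effectively approximable to any precision, so I can run the approximation until it separates the quantity from $\delta$ on one side, which by the genericity of $\delta$ is guaranteed to happen.

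Assembling: to compute $\innerMap(T)$ for $T \subseteq \cover$, for each $\atom \in T$ and each $\atom' \in \cover$ decide whether $\sup_{u \in \closure{\atom'}, v \in f(\closure{\atom})} d(u,v) < \delta$ using the effective approximations of $f$ and the ideal points of $\manifold$, collect the $\atom'$ for which this holds into $\innerMap(\atom)$, and take the union over $\atom \in T$; the computation of $\outerMap(T)$ is identical with the intersection condition replaced by the non-empty-intersection condition. Finiteness of $\cover$ makes the whole thing a finite search.

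The main obstacle is purely the interface between the geometric definitions and the notion of computability: one must argue carefully that the quantities $d(\closure{\atom'}, f(\closure{\atom}))$ (and the corresponding sup-distance) are computable real numbers \emph{uniformly} in the indices of $\atom, \atom'$, using only that $f$ is computable and the $\atom_i$ are simple sets — this is where compactness of $\manifold$ and the standing assumption that $\delta$ lies in the interval on which $\innerMap$ is constant (so no equality $= \delta$ occurs and the comparison is genuinely decidable rather than merely semi-decidable) are used. Everything else is bookkeeping over the finite cover.
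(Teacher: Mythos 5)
Your proof is correct and takes essentially the same route as the paper's (one-line) argument: the membership conditions for $\innerMap$ and $\outerMap$ reduce to strict comparisons of computable quantities against $\delta$, and the standing choice of $\delta$ in an interval on which $\innerMap$ is constant rules out the equality case, making the comparisons genuinely decidable. You have simply spelled out in detail what the paper compresses into ``by the choice of $\delta$, the condition can be decided.''
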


\begin{proof}
By the choice of $\delta$, the condition $\atom' \subset \bigcap_{x \in \atom}\nbhd{f(x)}{\delta}$ can be decided, which implies computability of $\innerMap$. Computability of $\outerMap$ follows by a similar argument.
\end{proof}

\begin{proposition}
For every $\atom\in \cover$, we can decide whether or not $\atom$ is inner-periodic. 
\end{proposition}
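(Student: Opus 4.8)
The plan is to reduce inner-periodicity of $\atom$ to a bounded number of evaluations of a map that is already known to be computable, on a domain that is already finite. First I would recall that $\cover=\{\atom_{1},\dots,\atom_{\ell}\}$ is a finite open cover, so the power set $2^{\cover}$ is a finite set whose members an algorithm can store directly, say as bit-vectors of length $\ell=|\cover|$. By the preceding proposition, $\innerMap$ is a computable self-map of $2^{\cover}$: there is an algorithm that, on input (a code for) any $A\subseteq\cover$, halts and returns (a code for) $\innerMap(A)$. Indeed, by clause~(3) of the definition it is enough to have this on singletons, and $\innerMap(\{\atom''\})=\{\atom'\in\cover:\atom'\subset\bigcap_{x\in\atom''}\nbhd{f(x)}{\delta}\}$ is computed by deciding, for each of the finitely many pairs $(\atom',\atom'')$, the inclusion appearing there — which is exactly what the choice of $\delta$ guarantees.

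With this in hand, the decision procedure for a fixed $\atom$ is: set $A_{0}:=\{\atom\}$; for $k=1,\dots,|\cover|$ compute $A_{k}:=\innerMap(A_{k-1})$ by calling the algorithm above. Each call terminates and outputs a subset of $\cover$, so after $|\cover|$ iterations we have produced $A_{|\cover|}=\innerMap^{|\cover|}(\{\atom\})$. Then I would test whether $\atom$ lies in the finite set $A_{|\cover|}$ — a trivially decidable membership check — and answer ``inner-periodic'' if and only if it does. Correctness is immediate from the definition $\atom\in\innerMap^{|\cover|}(\atom)$, and since $\cover$ has only finitely many atoms the whole thing works uniformly over all $\atom\in\cover$.

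I do not expect a genuine obstacle here. The one delicate point, that the inclusion $\atom'\subset\bigcap_{x\in\atom''}\nbhd{f(x)}{\delta}$ is a priori only a closed (hence merely co-semidecidable) condition, has already been handled by choosing $\delta$ so that $\innerMap(\atom)$ is locally constant in $\delta$, and is packaged into the preceding computability proposition for $\innerMap$. The only thing worth emphasizing is that ``computable self-map of the finite set $2^{\cover}$'' is precisely what makes it legitimate to iterate $\innerMap$ a prescribed number of times while staying inside the realm of terminating computations; no convergence or fixed-point argument is needed, because inner-periodicity is phrased as membership after exactly $|\cover|$ steps rather than as a feature of the full inner orbit $\bigcup_{k\ge0}\innerMap^{k}\{\atom\}$.
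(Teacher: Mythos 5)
Your argument is correct and is exactly the paper's approach: the paper's entire proof is the one-liner ``Because $\innerMap$ is computable,'' and your proposal simply unpacks that remark into the explicit procedure of iterating the computable map $\innerMap$ exactly $|\cover|$ times on the finite set $2^{\cover}$ and testing membership. No gap; you have just written out the details the authors left implicit.
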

\begin{proof}
Because $\innerMap$ is computable.
\end{proof}

\noindent\textbf{\textsc{The Algorithm.}}\label{section.algo.basins} The description of the algorithm to find the basins of attraction of the invariant measures $\mu_{i}$ is as follows.  First chose some cover $\cover$ as above. Then:

\begin{enumerate}
\item \textsf{Find all the inner-periodic atoms of $\cover$, and call their collection $P$}.

\medskip
\item (Inner Reduction) Here we reduce $P$ to a maximal subset $\cover_{irr}$ which contains only inner-periodic pieces whose inner-orbits are inner-irreducible and disjoint. 


 \textsf{First compute the inner orbits $\{O_{1},...,O_{|P|}\}$}. 

\begin{lemma}
If $O_{i}\cap O_{j}\neq \emptyset$ then there is $k_{ij}$ such that
$$
O_{k_{ij}}\subset O_{i}\cap O_{j}.
$$
\end{lemma}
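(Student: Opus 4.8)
The plan is to forget the geometry and treat $\innerMap$ as what it really is: a monotone operator on subsets of the finite set $\cover$, or equivalently reachability in the finite directed graph $G$ with vertex set $\cover$ and an edge $\atom\to\atom'$ whenever $\atom'\in\innerMap\{\atom\}$. First I would record, by the obvious induction on the three-clause definition of $\innerMap$, that $\innerMap^{k}\{\atom\}$ is precisely the set of atoms reachable from $\atom$ by a directed walk of length $k$ in $G$; hence $\innerOrbit\{\atom\}=\bigcup_{k\ge 0}\innerMap^{k}\{\atom\}$ is the set of all atoms reachable from $\atom$ (with $\atom$ itself included, via $k=0$). Then I would extract three elementary facts. (1) $\innerMap$ is monotone, since $\innerMap(A)=\bigcup_{\atom\in A}\innerMap\{\atom\}$; consequently every iterate $\innerMap^{k}$ is monotone. (2) Each inner orbit is forward closed under reachability: if $\atom'\in\innerOrbit\{\atom\}$, say $\atom'\in\innerMap^{n}\{\atom\}$, then $\innerMap^{k}\{\atom'\}\subseteq\innerMap^{k}(\innerMap^{n}\{\atom\})=\innerMap^{n+k}\{\atom\}$ for all $k$, whence $\innerOrbit\{\atom'\}\subseteq\innerOrbit\{\atom\}$. (3) Because we chose $\cover$ so that $\innerMap\{\atom\}\neq\emptyset$ for every $\atom$, every vertex of $G$ has out-degree at least one; following edges from any atom one must revisit a vertex within $|\cover|$ steps, so every atom reaches an atom lying on a directed cycle of length at most $|\cover|$ — i.e. an inner-periodic atom.

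With these in hand the lemma is immediate. Assume $O_{i}\cap O_{j}\neq\emptyset$ and pick an atom $\atom$ in it. By (2), $\innerOrbit\{\atom\}\subseteq O_{i}$ and $\innerOrbit\{\atom\}\subseteq O_{j}$, so $\innerOrbit\{\atom\}\subseteq O_{i}\cap O_{j}$. By (3), $\innerOrbit\{\atom\}$ contains an inner-periodic atom $p$ (the cycle it reaches lies inside $\innerOrbit\{\atom\}$ precisely because that set is forward closed); applying (2) once more, $\innerOrbit\{p\}\subseteq\innerOrbit\{\atom\}\subseteq O_{i}\cap O_{j}$. Since $p$ is inner-periodic it belongs to $P$, so $\innerOrbit\{p\}$ is one of the listed orbits, say $O_{k_{ij}}$, and we have produced $O_{k_{ij}}\subseteq O_{i}\cap O_{j}$, as wanted.

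I do not expect a genuine obstacle here: once the reachability picture is set up, the argument is three lines of monotonicity, with no analytic or measure-theoretic content. The only points that need care are bookkeeping — the induction identifying $\innerMap^{k}\{\atom\}$ with the $k$-step reachable set, and matching ``lies on a directed cycle'' with ``inner-periodic'': a shortest cycle through a vertex has length at most $|\cover|$, which is exactly what is needed if one reads inner-periodicity as $\atom\in\innerMap^{m}\{\atom\}$ for some $m\le|\cover|$, and this is the reading I would use. The one hypothesis that is genuinely load-bearing is the nonemptiness of $\innerMap\{\atom\}$ for all $\atom$, invoked in fact (3): without it an atom of $O_{i}\cap O_{j}$ could be a dead end whose orbit is just itself and contains no inner-periodic atom, and the conclusion would fail.
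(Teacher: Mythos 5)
Your argument is correct and is essentially the paper's own (two-line) proof with the details made explicit: pick $\atom\in O_i\cap O_j$, observe that $\innerOrbit\{\atom\}$ is forward closed and hence contained in $O_i\cap O_j$, and use finiteness plus the standing nonemptiness of $\innerMap(\atom)$ to locate an inner-periodic element $p$ in it, whose orbit is the required $O_{k_{ij}}$. Your reading of inner-periodicity as $\atom\in\innerMap^{m}\{\atom\}$ for some $m\le|\cover|$ is indeed the one the paper's proof implicitly relies on, so flagging it is appropriate rather than a deviation.
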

\begin{proof}
Let $\atom \in O_{i}\cap O_{j}$. Since $\innerOrbit(\atom)$ is finite, it must contain an inner-periodic element.
\end{proof}

\textsf{To compute $\cover_{irr}$ start by setting $\cover_{irr}=P$.  Then, as long as there are $\atom_{i},\atom_{j}\in \cover_{irr}$, $i\neq j$ such that $O_{i}\cap O_{j}\neq \emptyset$, set 
$$
\cover_{irr}:=(\cover_{irr}-\{\atom_{i},\atom_{j}\})\cup \{\atom_{k_{ij}}\}.
$$
}

\begin{lemma}
$\cover_{irr}$ contains only inner periodic pieces whose inner-orbits are inner-irreducible and disjoint. By construction, the cardinality of  $\cover_{irr}$ is maximal.
\end{lemma}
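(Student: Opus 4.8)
The plan is to prove the two assertions of the lemma separately: first that the termination of the reduction loop produces a collection whose elements are inner-periodic with pairwise disjoint, inner-irreducible orbits, and second that this collection is of maximal cardinality among all such collections.

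For the first part, I would argue by loop invariant. Before the loop begins, $\cover_{irr} = P$ consists entirely of inner-periodic atoms, and this property is preserved: when we replace $\{\atom_i,\atom_j\}$ by $\{\atom_{k_{ij}}\}$, the new atom $\atom_{k_{ij}}$ is inner-periodic by the preceding lemma (it was extracted as an inner-periodic element inside $O_i \cap O_j$, which is finite hence contains one by the argument that every finite inner orbit contains an inner-periodic element). Termination of the loop follows because $|\cover_{irr}|$ strictly decreases with each step, so after finitely many steps no pair $\atom_i,\atom_j \in \cover_{irr}$ with $i\neq j$ has $O_i \cap O_j \neq \emptyset$; that is, the inner orbits of the surviving atoms are pairwise disjoint. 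It remains to check that each surviving orbit is inner-irreducible, i.e., all atoms in $\mathcal{O}_{\rm{in}}\{\atom\}$ share the same inner orbit. I would note that disjointness forces this: if $\atom' \in \mathcal{O}_{\rm{in}}\{\atom\}$ then $\mathcal{O}_{\rm{in}}\{\atom'\} \subseteq \mathcal{O}_{\rm{in}}\{\atom\}$ by the semigroup property of $\innerMap$; if this inclusion were strict for some $\atom'$ in the orbit, one could locate an inner-periodic atom $\atom''$ in the smaller orbit $\mathcal{O}_{\rm{in}}\{\atom'\}$ whose orbit is properly contained in $\mathcal{O}_{\rm{in}}\{\atom\}$, and then $\atom''$ together with $\atom$ would be a pair in (a set containing) $\cover_{irr}$ with overlapping but distinct orbits, contradicting that the loop has terminated — more carefully, one shows the loop cannot terminate until every surviving atom's orbit equals the orbit of every atom it contains, which is exactly inner-irreducibility. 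This last point is the one requiring the most care, since it hinges on the precise stopping condition of the loop and on the fact that $\atom_{k_{ij}}$ is chosen inside the intersection.

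For the maximality claim, I would argue that any collection $\cover'$ of inner-periodic atoms with pairwise disjoint inner-irreducible orbits can be injected into $\cover_{irr}$. Given $\atom \in \cover'$, its inner orbit $\mathcal{O}_{\rm{in}}\{\atom\}$ must intersect the inner orbit of some atom of $\cover_{irr}$: indeed every inner-periodic atom's orbit, being a nonempty finite subset of $\cover$ closed under $\innerMap$, must eventually "fall into" one of the terminal irreducible classes tracked by $\cover_{irr}$ — here one uses that $P$ already contains all inner-periodic atoms, so in particular the inner-periodic atoms of $\mathcal{O}_{\rm{in}}\{\atom\}$ are among those the reduction processed, and the reduction only ever merges downward to smaller orbits. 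By inner-irreducibility of both $\mathcal{O}_{\rm{in}}\{\atom\}$ and the orbit of the matching element of $\cover_{irr}$, and disjointness within $\cover_{irr}$, this match is unique, giving the injection $\cover' \hookrightarrow \cover_{irr}$ and hence $|\cover'| \le |\cover_{irr}|$.

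The main obstacle I anticipate is making the inner-irreducibility conclusion fully rigorous: the loop as stated only guarantees pairwise-disjointness of orbits of the surviving atoms, and extracting "irreducible" requires the additional observation that a surviving atom cannot have its orbit strictly contain the orbit of an atom it reaches without violating the stopping condition. Pinning down why $\atom_{k_{ij}}$ being chosen \emph{inside} $O_i \cap O_j$ (rather than arbitrarily) is what makes the reduction converge to genuinely irreducible classes — and why "maximal cardinality" is the right notion rather than, say, "maximal under inclusion" — is the delicate part; everything else is bookkeeping on finite sets using the semigroup property of $\innerMap$ and the finiteness of inner orbits.
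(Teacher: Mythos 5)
Your overall strategy --- a loop invariant for inner-periodicity, termination by strict decrease of cardinality, disjointness from the stopping condition, and an orbit-matching injection for maximality --- is the same as the paper's, and the periodicity, termination, and maximality parts are fine. The one step you yourself flag as delicate, inner-irreducibility, is indeed where your argument as written has a gap: you want a contradiction from the pair $\atom,\atom''$ ``violating the stopping condition,'' but the stopping condition only constrains pairs both of which lie in the \emph{final} $\cover_{irr}$, and the inner-periodic atom $\atom''\in\innerOrbit(\atom')$ you produce may well have been eliminated earlier in the run. The hedge ``(a set containing) $\cover_{irr}$'' does not rescue this, since the loop makes no promise about pairs involving eliminated atoms.

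The missing ingredient is a persistence invariant that you in fact articulate, but only in the maximality part (``the reduction only ever merges downward to smaller orbits''): since the replacement satisfies $\innerOrbit(\atom_{k_{ij}})\subseteq O_i\cap O_j$, induction on the loop shows that for every atom $\atom''$ that ever belongs to the working set, the final $\cover_{irr}$ contains some $\atom^*$ with $\innerOrbit(\atom^*)\subseteq\innerOrbit(\atom'')$. Applying this to your $\atom''$ yields $\atom^*\in\cover_{irr}$ with $\emptyset\neq\innerOrbit(\atom^*)\subseteq\innerOrbit(\atom')\subsetneq\innerOrbit(\atom)$; since $\atom\notin\innerOrbit(\atom')$ we get $\atom^*\neq\atom$, and the pair $\atom,\atom^*$ now genuinely violates the stopping condition. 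With this invariant inserted, your proof closes, and it is essentially the paper's: the paper compresses the same reasoning into the assertion that the relevant inner-periodic atom ``was eliminated when compared against $\atom$,'' which forces $\atom$ to lie in the intersection of the two orbits in order to survive. The same invariant is also what justifies your claim, in the maximality step, that every inner-periodic atom's orbit meets the orbit of some member of $\cover_{irr}$.
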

\begin{proof}
At each step the cardinality of $\cover_{irr}$ is reduced by 1, so that the procedure stops after at most $|P|-1$ steps. It is evident that the remaining atoms have disjoint inner-orbits. Let $\atom \in \cover_{irr}$ and $\atom_{i} \in \innerOrbit(\atom)$. If $\atom_{i}$ is inner-periodic, then it was eliminated during the procedure when compared against $\atom$, which means that $\atom \in \innerOrbit(\atom_{i})$. If $\atom_{i}$ was not inner-periodic, then there is some inner-periodic element $\atom_{j}$ in $\innerOrbit(\atom_{i})$ which was eliminated when compared to $\atom$, which implies that $\atom \in \innerOrbit(\atom_{j}) \subset \innerOrbit(\atom_{i})$. This shows that $\innerOrbit(\atom)$ is inner-irreducible.  Let $\atom^{*} \notin \cover_{irr}$. Then $\atom^{*}$ was eliminated in the procedure, which means that $\innerOrbit(\atom^{*})$ can not be disjoint from $\cover_{irr}$.  The cardinality of $\cover_{irr}$ is therefore maximal. 
\end{proof}

\begin{remark}
The support of any ergodic measure contains the inner orbit of at least one  element in $\cover_{irr}$.
\end{remark}

\medskip
\item \textsf{If for all  $\atom_{i},\atom_{j}$ in $\cover_{irr}$, $\mathcal{O}_{out}(\atom_{i})\cap \mathcal{O}_{out}(\atom_{j})=\emptyset$ then stop and return $\cover_{irr}$,\\
otherwise refine $\cover$ and go to (1).}

\end{enumerate}

\begin{theorem}\label{terminates}For all but countably many $\ve$, the above algorithm terminates  and returns $\cover_{irr}$. Moreover, if $O_{i}$ denotes the inner orbit of the $i$-th element of $\cover_{irr}$, then $\randomap_{\epsilon}$ has exactly  $|\cover_{irr}|$-many  ergodic measures, and the support of each of them contains exactly one of the $O_{i}$. 
\end{theorem}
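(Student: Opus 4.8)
The plan is to show three things: (a) the algorithm halts for all but countably many $\ve$; (b) upon halting, $\cover_{irr}$ is in bijection with the ergodic measures $\mu_1,\ldots,\mu_{N(\ve)}$; and (c) the inner orbit $O_i$ is contained in (and "traps") $\supp(\mu_i)$. The key external input is Theorem~\ref{thm.basins}: for all but countably many $\ve$ there exist open sets $A_i$ with $\supp(\mu_i)\subset A_i$ and such that every $x\in A_i$ has limiting distribution $\mu_i$; in particular the $A_i$ are disjoint forward-invariant basins that are strictly larger than the supports. Fix such an $\ve$, and also (using Lemma~\ref{lemma.supports}, i.e. $\overline{B}_\ve(f(\supp(\mu_i)))=\supp(\mu_i)$) keep in mind that each support is exactly its own $\ve$-fattened image.

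First I would prove termination. As the cover $\cover$ is refined (mesh $\to 0$), the inner iteration $\innerMap$ converges "from inside" to the true $\ve$-dynamics: any atom $\atom$ small enough and contained in $\supp(\mu_i)$ has $\innerMap^k(\atom)$ a collection of atoms still inside $\supp(\mu_i)$ (because $\bigcap_{x\in\atom}\nbhd{f(x)}{\delta}\subset \overline{B}_\ve(f(\supp\mu_i))=\supp\mu_i$ once $\delta\le\ve$ and $\atom\subset\supp\mu_i$), so $\innerOrbit(\atom)\subset\supp(\mu_i)\subset A_i$. Conversely the outer iteration $\outerMap$ over-approximates: $\mathcal{O}_{out}(\atom)$ is contained in a $\delta$-fattening of the true forward $\ve$-orbit of $\atom$, which for $\atom\subset A_i$ and the cover fine enough stays inside $A_i$ (the $A_i$ are open and forward-invariant, so a small-enough fattening of the orbit is still inside). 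Since the $A_i$ are pairwise disjoint, for a sufficiently fine cover the outer orbits of atoms belonging to distinct $\cover_{irr}$-classes land in distinct $A_i$'s and are therefore disjoint — this is exactly the stopping condition in step~(3). Hence the algorithm halts after finitely many refinements. The main obstacle here is the uniformity: one must argue that there is a single mesh threshold that simultaneously (i) makes $\innerMap(\atom)$ constant and nonempty on a neighborhood of $\delta$ for every atom (as assumed in the setup), and (ii) confines all outer orbits of "$\cover_{irr}$-representative" atoms to their basins — this uses compactness of $\manifold$ and the strict containments $\supp(\mu_i)\subset A_i$ from Theorem~\ref{thm.basins}, together with continuity of $f$.

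Next, the correspondence. By the Remark following the Inner-Reduction lemma, $\supp(\mu_i)$ contains $\innerOrbit(\atom)$ for at least one $\atom\in\cover_{irr}$; I would upgrade this to "exactly one": if $\supp(\mu_i)$ contained the inner orbits of two distinct $\cover_{irr}$-atoms $\atom_j,\atom_k$, then (since $\supp(\mu_i)$ is connected-ish under the $\ve$-dynamics — more precisely, since $\randomap_\ve$ restricted to $\supp(\mu_i)$ is uniquely ergodic, any two points of $\supp(\mu_i)$ communicate) their outer orbits would both meet a common region and hence intersect, contradicting the stopping condition; so each $\supp(\mu_i)$ selects a unique element of $\cover_{irr}$. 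Conversely each $\atom\in\cover_{irr}$ is inner-periodic with inner-irreducible orbit, so $\innerOrbit(\atom)$ is a forward-invariant set for the inner dynamics, hence (taking the closure / applying Krylov–Bogolyubov to $\randomap_\ve$ restricted to the closed set $\overline{\bigcup\innerOrbit(\atom)}$) it supports at least one ergodic measure, and by the basin-disjointness it supports only one. This gives a bijection $\cover_{irr}\leftrightarrow\{\mu_1,\ldots,\mu_{N(\ve)}\}$, proving $N(\ve)=|\cover_{irr}|$ and that each $O_i$ sits inside a unique $\supp(\mu_i)$. Finally, the statement "the support of each $\mu_i$ contains exactly one of the $O_i$" is precisely this bijection, and the subsequent second phase of the overall algorithm (unique ergodicity of $\randomap_\ve$ restricted to a neighborhood of $O_i$, invoking \cite{GalHoyRoj07c}) then computes $\mu_i$ itself; but that is outside the scope of Theorem~\ref{terminates} and need not be reproved here. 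The hard part, as noted, is the quantitative "fine-enough cover" argument; everything else is bookkeeping with the inner/outer iterations and the basin structure furnished by Theorem~\ref{thm.basins}.
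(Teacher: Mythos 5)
Your proposal follows essentially the same route as the paper: fix $\ve$ as in Theorem~\ref{thm.basins}, use the disjoint basins $A_i$ (together with Lemma~\ref{lemma.supports}) to trap inner orbits inside the supports and outer orbits inside the basins once the cover is fine enough, which yields both termination and the bijection between $\cover_{irr}$ and the ergodic measures. Your write-up is actually more detailed than the paper's terse argument (one quibble: the closure of an inner orbit need not be forward-invariant for $\randomap_{\ve}$, since the inner iteration only under-approximates the dynamics, so the Krylov--Bogolyubov step is better replaced by a recurrence argument for the inner-periodic atom under the true kernel), but the approach is the same.
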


\begin{proof}
By Theorem \ref{thm.basins} we can assume that $\ve$ is such that there exist disjoint open sets $A_{1},...,A_{N(\ve)}$ such that for all $i=1,...,N(\ve)$:
\begin{itemize}
\item[(i)] $\supp(\mu_{i})\subset A_{i}$ and,
\item[(ii)] for every $x\in A_{i}$, $\mu_{x}=\mu_{i}$, where $\mu_{x}$ is the limiting measure starting at $x$.
 \end{itemize}
Therefore, each element of the list $\cover_{irr}$ constructed in step 2,   has an inner-orbit contained in the support of some ergodic measure. The algorithm terminates because of two facts: (i) for a cover $\cover$ fine enough, the inner orbits of two different elements of the list $\cover_{irr}$ must be contained in the support of two different  ergodic measures. (ii) For a cover finer than the minimal gap between the supports and their basins, it is guarantee that the outer orbits will be also disjoint.  
\end{proof}

\begin{proof}[Proof of Theorem A]
Use the above algorithm to construct the outer irreducible pieces.  Each of them is a computable forward invariant set. The perturbed system $\randomap_{\ve}$ restricted to each of these pieces is computable and uniquely ergodic.  The associated invariant measures are therefore computable (\cite{GalHoyRoj07c}). 
\end{proof}

\newpage

\section{Proof of Theorem B}

\subsection{Outline of the Proof}

The idea of the algorithm is to exploit the mixing properties of the transition operator $\MarkovKernel$ of the perturbed system $\randomap_{\ve}$. Since $\MarkovKernel$ may not have a spectral gap, we construct a related transition operator $\MarkovKernelAvg$ that has the same invariant measure as $\MarkovKernel$ while having a a spectral gap (see Lemma \ref{lem_DoeblinCondition} and Proposition \ref{rate}).  

The algorithm then computes a finite matrix approximation $Q$ of $\MarkovKernelAvg$ with the following properties: (i) $Q$ has a simple real eigenvalue near 1, (ii) the corresponding eigenvector $\psi$ can be chosen to have only non negative entries and (iii) the density associated to $\psi$ (see below) is $L^{1}$-close to the stationary distribution of $\MarkovKernel$.

To construct the main algorithm $\mathcal A$, to each
precision parameter $\alpha$ we associate a partition
$\partition = \partition(\alpha)$
of the space $\manifold$ into regular pieces of size
$\delta=1/O(poly(\frac{1}{\alpha}))^{1/d}$, where $d$ denotes the dimension
of $\manifold$.
On input $\alpha$ the algorithm $\mathcal A$
outputs a list $\{w_{\atom}\}_{\atom\in\partition}$
of $O(poly(\frac{1}{\alpha}))$-dyadic numbers,
which is to be interpreted as the piece-wise constant function 
$$
\mathcal{A}(\alpha)=\sum_{\atom\in\partition}w_{\atom}\indicator{x\in\atom}.
$$ 

 For any atom $\atom_{i} \in \partition$, let $c_{i}$ denote its center point. The algorithm works as follows:
\begin{enumerate}
\item Compute $\map(c_{i})$ with some precision $\epsilon$, that we will specify later:  $\map_{\epsilon}(c_{i})$ (a $\log(1/\epsilon)$-dyadic number)
\item For every $\atom_{j}\neq \atom_{i}$ do:
\begin{itemize}
\item Compute $d(\map_{\epsilon}(c_{j}), c_{j})$ with precision $\epsilon$:  $d_{\epsilon}(\map_{\epsilon}(c_{i}),c_{j})$ (also a $\log(1/\epsilon)$-dyadic number).
\item set $p_{ij}$ to be an $\epsilon$-approximation of $\frac{\vol(\atom)}{\vol(B_{\ve})}$ iff 
$$
d_{\epsilon}(\map_{\epsilon}(c_{i}),c_{j})<\ve - m(\delta) - 2\epsilon - \delta
$$
where $m(\delta)$ (a polynomial in $\delta$) denotes the uniform modulus of continuity of $f$ (see Equation \ref{modulus}).  Otherwise put $p_{ij}=0$ (one can assume all the previous numbers to be rational, and then the inequality can be decided). Clearly, the computation of each $p_{ij}$ can be achieved in polynomial time in $\log(1/\epsilon)$.
\end{itemize}

\item Compute the unique normalize Perron-Frobenious eigenvector $\psi$ of the $|\partition|\times |\partition|$ matrix $(p_{i,j})$, and output the list $\{w_{\atom}\}$ where $w_{\atom}=\psi_{\atom}$. 
\end{enumerate}

The key point is that the matrix $(p_{i,j})$ can be seen as a representation of the sub-Markov transition kernel $P_{\partition}^{\epsilon}(x,dy)=\hat{p}_{x}(y)dy$, where 
$$
\hat{p}_{x}(y)=\sum_{i,j}p_{ij}\indicator{x\in\atom_{i}}\indicator{y\in\atom_{j}}.
$$
Proposition \ref{Mass} shows that the mass deficiency of the sub-Markov approximation $P_{\partition}^{\epsilon}$ is uniformly small. Furthermore, we have $P_{\partition}^{\epsilon}\leq P$, and therefore Lemma \ref{lem_subMarkovianApproximation} shows that the  density associated to the above computed eigenvector $\psi$ can be made $\alpha$-close to the invariant density  of $P$ by choosing $\epsilon<O(\delta)$. 

One then computes a finite-dimensional approximation,
which has a spectral gap. Moreover, this approximation is such that its
invariant density is close to the invariant density of $\randomap_\ve$.

\subsection{Rate of convergence}

Here we essentially show that the Markov kernel $P$ of the perturbed map $\randomap_{\epsilon}$ has a spectral gap property.  For any cover $\cover$ of $\manifold$, 
\begin{enumerate}
  \item
    define
    \begin{equation*}
      \atomExclusive_i = \atom_i \setminus
      \cup_{\atom \in \cover \setminus \atom_i} \atom
    \end{equation*}
    for all $\atom_i \in \cover$,

  \item
    define furthermore the sub-Markov matrix $Q$ by
    
    \begin{align*}
      Q(\atom_i \to \atom_j)
      \equiv
      Q(i \to j)
      \equiv
      Q_{i,j}
      =
      \begin{cases}
        0
        & \text{if } \atom_j \notin \innerMap(\atom_i)
        \\
        \frac{\vol(\atomExclusive_j)}{\vol(\nbhdRadius{\epsilon} )}
        & \text{if } \atom_j \in \innerMap(\atom_i)
      \end{cases}
    \end{align*}
    for any two atoms, which defines a weighted oriented graph
    on $\cover$,
  \item
    and finally, define the numbers
    \begin{align*}
      N(\atom_i \to \atom_j)
      \equiv
      N(i \to j)
      \equiv
      N_{i,j}
      =
      \inf\{ n\geq 1
      \suchthat
      Q_{i,j}^n >0
      \}
      \in \{1, 2, \ldots, \infty \}
    \end{align*}
    for any two atoms of $\cover$.
\end{enumerate}

The standing assumption in this section is
that the cover $\cover$ of $\manifold$ is such that

\begin{equation}
  \label{eqn_innerIrreducibleCover}
  \innerIrreducibleCover = \bigcap_{\atom\in\cover} \innerOrbit(\atom)
\end{equation}
is non-empty. We will refer to $\innerIrreducibleCover$ as the
inner irreducible part of $\cover$.

\begin{lemma}[Comparision lemma]
  \label{lem_P_vs_Q}
  The estimate
  \begin{align*}
    \MarkovKernelXAN{x}{A \cap \atomExclusive_j}{m}
    &\geq
    \indicator{x \in \atom_i}
    \,Q_{i,j}^{m}
    \,\vol( A \given \atomExclusive_j)
  \end{align*}
  is satisfied for all $x\in\manifold$, any $\atom_j \in \cover$,
  and all $A \in \borel$. In particular,
  for any $\atom_i \in \cover$, and any two $\cover_0, \cover_1 \subset \cover$
  \begin{align*}
    \MarkovKernelXAN{x}{A}{m}
    &\geq
    \indicator{x \in \atom_i}
    \sum_{\atom_j \in \cover_1}
    \,Q_{i,j}^{m}
    \,\vol( A \given \atomExclusive_j)
    \\
    \MarkovKernelXAN{x}{A}{m}
    &\geq
    \sum_{\atom_i \in \cover_0}
    \indicator{x \in \atomExclusive_i}
    \sum_{\atom_j \in \cover_1}
    \,Q_{i,j}^{m}
    \,\vol( A \given \atomExclusive_j)
  \end{align*}
  hold true for all $x\in\manifold$, $A \in \borel$ and $m\geq 1$.
\end{lemma}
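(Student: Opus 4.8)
The content of Lemma~\ref{lem_P_vs_Q} is that the matrix powers $Q_{i,j}^m$ of the sub-Markov matrix on the cover $\cover$ give a genuine lower bound for the $m$-step transition kernel $\MarkovKernelN{m}$ of the perturbed system, after restricting to the "exclusive" parts $\atomExclusive_j$ of the atoms. The base case $m=1$ is where the geometry enters, and everything after that is bookkeeping.

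\textbf{Base case.} Fix $x \in \manifold$, an atom $\atom_j \in \cover$, and $A \in \borel$. If $x \notin \atom_i$ the indicator kills the right-hand side, so assume $x \in \atom_i$. By the definition of $\randomap_\ve$, $\MarkovKernelXA{x}{A \cap \atomExclusive_j} = \vol\big|_{B(f(x),\ve)}(A \cap \atomExclusive_j)/\vol(\nbhdRadius{\ve})$. The point is that if $\atom_j \in \innerMap(\atom_i)$, then by the definition of the $\delta$-inner iteration (with $\delta \le \ve$), $\atom_j \subset \bigcap_{x \in \atom_i} \nbhd{f(x)}{\delta} \subset B(f(x),\ve)$, so the ball $B(f(x),\ve)$ covers all of $\atom_j$ and in particular all of $A \cap \atomExclusive_j$. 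Hence $\MarkovKernelXA{x}{A \cap \atomExclusive_j} = \vol(A \cap \atomExclusive_j)/\vol(\nbhdRadius{\ve}) = \big(\vol(\atomExclusive_j)/\vol(\nbhdRadius{\ve})\big)\,\vol(A \given \atomExclusive_j) = Q_{i,j}\,\vol(A \given \atomExclusive_j)$. If $\atom_j \notin \innerMap(\atom_i)$ then $Q_{i,j}=0$ and there is nothing to prove since the kernel is nonnegative. This establishes the first displayed inequality for $m=1$.

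\textbf{Inductive step.} Assume the estimate $\MarkovKernelXAN{x}{A \cap \atomExclusive_j}{m} \geq \indicator{x \in \atom_i}\,Q_{i,j}^{m}\,\vol(A \given \atomExclusive_j)$ holds for some $m \geq 1$. Using Chapman--Kolmogorov, $\MarkovKernelXAN{x}{A \cap \atomExclusive_j}{m+1} = \int_\manifold \MarkovKernelXAN{x}{dy}{1}\,\MarkovKernelXAN{y}{A \cap \atomExclusive_j}{m} \geq \sum_{\atom_k \in \cover} \int_{\atomExclusive_k} \MarkovKernelXAN{x}{dy}{1}\,\MarkovKernelXAN{y}{A \cap \atomExclusive_j}{m}$, where the inequality comes from restricting the integral to the disjoint sets $\atomExclusive_k$. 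Applying the inductive hypothesis to the inner integrand with $y \in \atomExclusive_k \subset \atom_k$ gives $\MarkovKernelXAN{y}{A \cap \atomExclusive_j}{m} \geq Q_{k,j}^m\,\vol(A \given \atomExclusive_j)$, so the sum is at least $\sum_{\atom_k} Q_{k,j}^m\,\vol(A \given \atomExclusive_j)\,\MarkovKernelXA{x}{\atomExclusive_k}$. Now apply the $m=1$ case to $\MarkovKernelXA{x}{\atomExclusive_k} \geq \indicator{x \in \atom_i}\,Q_{i,k}\,\vol(\atomExclusive_k \given \atomExclusive_k) = \indicator{x\in\atom_i}\,Q_{i,k}$, giving a lower bound of $\indicator{x \in \atom_i}\sum_{\atom_k} Q_{i,k}\,Q_{k,j}^m\,\vol(A \given \atomExclusive_j) = \indicator{x\in\atom_i}\,Q_{i,j}^{m+1}\,\vol(A \given \atomExclusive_j)$ by matrix multiplication. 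This closes the induction.

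\textbf{The two corollary inequalities} follow immediately. For the first, given $\atom_i$ and $\cover_1 \subset \cover$, write $A \supset \bigsqcup_{\atom_j \in \cover_1}(A \cap \atomExclusive_j)$ (the $\atomExclusive_j$ are pairwise disjoint by construction), so $\MarkovKernelXAN{x}{A}{m} \geq \sum_{\atom_j \in \cover_1}\MarkovKernelXAN{x}{A \cap \atomExclusive_j}{m} \geq \indicator{x \in \atom_i}\sum_{\atom_j \in \cover_1} Q_{i,j}^m\,\vol(A \given \atomExclusive_j)$. For the second, note the events $\{x \in \atomExclusive_i\}$ for $\atom_i \in \cover_0$ are mutually exclusive, so at most one indicator $\indicator{x \in \atomExclusive_i}$ is nonzero and $\indicator{x \in \atomExclusive_i} \leq \indicator{x \in \atom_i}$; summing the first corollary over $\atom_i \in \cover_0$ with these weights gives the claim. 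I expect the main (and really only) subtlety to be the geometric containment in the base case — making sure that the inner-iteration condition $\atom_j \in \innerMap(\atom_i)$ together with $\delta \le \ve$ genuinely forces $\atom_j \subseteq B(f(x),\ve)$ uniformly over $x \in \atom_i$, so that the conditional-volume identity $\MarkovKernelXA{x}{A \cap \atomExclusive_j} = Q_{i,j}\,\vol(A \given \atomExclusive_j)$ is exact rather than merely approximate; once that is pinned down the rest is a routine Chapman--Kolmogorov induction.
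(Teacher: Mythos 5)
Your proof is correct and follows essentially the same route as the paper's: the same geometric observation that $\atom_j\in\innerMap(\atom_i)$ forces $\atom_j\subseteq B(f(x),\ve)$ for all $x\in\atom_i$ (so the one-step kernel into $A\cap\atomExclusive_j$ is exactly $Q_{i,j}\,\vol(A\given\atomExclusive_j)$), followed by a Chapman--Kolmogorov induction restricted to the disjoint sets $\atomExclusive_k$. The only cosmetic difference is that you peel off the first step of the chain while the paper peels off the last; the corollaries are handled identically by disjointness of the $\atomExclusive_j$.
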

\begin{proof}
  Let $A\in\borel$, as well as $\atom_i\in \cover$ and $x\in \atom_i$
  be arbitrary, but fixed.
  Then for any integer $m\geq 1$ and any $\atom_j \in \cover$
  \begin{align*}
    \MarkovKernelXAN{x}{A \cap \atomExclusive_j}{m}
    &=
    \int \MarkovKernelXAN{x}{dx_{m-1}}{m-1}
    \, \MarkovKernelXA{x_{m-1}}{A \cap \atomExclusive_j}
    \\
    &\geq
    \sum_{\atom_k \in \cover \suchthat \atom_j \in \innerMap(\atom_k)}
    \int_{\atomExclusive_k} \MarkovKernelXAN{x}{dx_{m-1}}{m-1}
    \, \MarkovKernelXA{x_{m-1}}{A \cap \atomExclusive_j}
    \\
    &=
    \sum_{\atom_k \in \cover \suchthat \atom_j \in \innerMap(\atom_k)}
    \MarkovKernelXAN{x}{\atomExclusive_k}{m-1}
    \,\frac{\vol( A \cap \atomExclusive_j )}{\vol(\nbhdRadius{\epsilon})}
    \\
    &=
    \sum_{\atom_k \in \cover}
    \MarkovKernelXAN{x}{\atomExclusive_k}{m-1}
    \,Q_{k,j}
    \,\vol( A \given \atomExclusive_j)
  \end{align*}
  we obtain
  \begin{align*}
    \MarkovKernelXAN{x}{A \cap \atomExclusive_j}{m}
    &\geq
    \sum_{\atom_k \in \cover}
    \MarkovKernelXA{x}{\atomExclusive_k}
    \,Q_{k,j}^{m-1}
    \,\vol( A \given \atomExclusive_j)
  \end{align*}
  by induction. Because $x \in \atom_i$ and
  $\MarkovKernelXA{x}{\atomExclusive_k} \geq Q_{i,k}$
  we obtain the estimate
  \begin{align*}
    \MarkovKernelXAN{x}{A \cap \atomExclusive_j}{m}
    &\geq
    Q_{i,j}^{m}
    \,\vol( A \given \atomExclusive_j)
    \qquad\text{for all}\qquad x\in \atom_i,\; \atom_j \in \cover
  \end{align*}
  for all $m\geq 1$.
\end{proof}

Denote
for $x\in \manifold$ and $A\in\borel$ by
\begin{equation}
  \MarkovKernelAvgXA{x}{A}
  =
  \frac{1}{N_\cover} \sum_{n=1}^{N_\cover} \MarkovKernelXAN{x}{A}{n}
  \;,\qquad
  N_\cover
  =
  \max_{\atom_j \in \cover} \max_{\atom_i \in \innerIrreducibleCover}
  N(\atom_j \to \atom_i)
\end{equation}
a new Markov transition kernel on $\manifold$.
By choice of $\innerIrreducibleCover$ the number $N_\cover$
is finite, and hence
$\MarkovKernelAvgXA{x}{A}$ is a well-defined Markov transition
kernel on $\manifold$. Furthermore, let
\begin{equation}
  \label{eqn_def_minorizationProb}
  \minorizationProb
  =
  \min_{\atom_i\in\cover}
  \frac{1}{N_\cover} \sum_{n=1}^{N_\cover}
  \sum_{\atom_j \in \innerIrreducibleCover}
  \,Q_{i,j}^n
  \;,\qquad
  0 < \minorizationProb \leq 1
  \;,
\end{equation}
where the fact that $\minorizationProb>0$ is shown in the following
lemma.

\begin{lemma}[Lower bound on $\minorizationProb$]
  The following (rather pessimistic) bound on $\minorizationProb$
  \begin{align*}
    \minorizationProb \geq
    \frac{ \# \innerIrreducibleCover }{N_\cover}
    \,\Big[
    \min_{\atom \in\cover}
    \frac{\vol(\atomExclusive)}{\vol(\nbhdRadius{\epsilon} )}
    \Big]^{N_\cover}
  \end{align*}
  holds, and shows in particular that $\minorizationProb>0$.
\end{lemma}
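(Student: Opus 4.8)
The plan is to carry out a crude path-counting estimate in the weighted oriented graph on $\cover$ determined by the sub-Markov matrix $Q$. Fix an arbitrary source atom $\atom_i\in\cover$ and abbreviate $q_{\min}=\min_{\atom\in\cover}\vol(\atomExclusive)/\vol(\nbhdRadius{\epsilon})$. I would first record three elementary facts: (a) the only positive values taken by the entries of $Q$ are exactly the ratios $\vol(\atomExclusive_j)/\vol(\nbhdRadius{\epsilon})$, so every positive entry of $Q$ is at least $q_{\min}$; (b) $q_{\min}>0$, which is built into the standing hypotheses on $\cover$ --- in the application $\cover$ is a partition into regular pieces, so each exclusive part $\atomExclusive_i$ equals $\atom_i$ and has positive volume; (c) $q_{\min}\le 1$, since every atom occurring in some $\innerMap(\atom_i)$ is contained in a single ball of radius $\delta\le\epsilon$ with pairwise disjoint exclusive parts, and in any case the atoms of $\cover$ are much smaller than an $\epsilon$-ball, so $\vol(\atomExclusive)\le\vol(\nbhdRadius{\epsilon})$ for every $\atom\in\cover$.

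Next I would produce, for each target $\atom_j\in\innerIrreducibleCover$, a short positive path from $\atom_i$ to it. Since $\innerIrreducibleCover=\bigcap_{\atom\in\cover}\innerOrbit(\atom)\subseteq\innerOrbit(\atom_i)$ and, as noted just before the lemma, $N_\cover<\infty$, the quantity $n_j=N(\atom_i\to\atom_j)$ is a well-defined integer with $1\le n_j\le N_\cover$; here one uses that $N_\cover$ is by construction the maximum of $N(\,\cdot\to\cdot\,)$ over all sources in $\cover$ and all targets in $\innerIrreducibleCover$, together with the equivalence between $\innerMap$-reachability of $\atom_j$ from $\atom_i$ and positivity of the corresponding power of $Q$, which is immediate from the definitions of $\innerMap$ and $Q$. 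Now $Q_{i,j}^{\,n_j}$ is a sum, over the directed paths of length $n_j$ from $\atom_i$ to $\atom_j$ in the graph, of the products of the edge weights along them; all summands are nonnegative and at least one is strictly positive, and along that path each edge weight is a positive entry of $Q$, hence at least $q_{\min}$. Therefore $Q_{i,j}^{\,n_j}\ge q_{\min}^{\,n_j}\ge q_{\min}^{\,N_\cover}$, the last step using $q_{\min}\le 1$ and $n_j\le N_\cover$.

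Finally I would assemble the bound. Inside the double sum defining $\minorizationProb$, retain only the pairwise distinct terms indexed by the pairs $(n_j,\atom_j)$ with $\atom_j\in\innerIrreducibleCover$ (legitimate because every $Q_{i,j}^{\,n}\ge 0$ and every $n_j$ lies in $\{1,\dots,N_\cover\}$), obtaining
\begin{align*}
  \frac{1}{N_\cover}\sum_{n=1}^{N_\cover}\ \sum_{\atom_j\in\innerIrreducibleCover} Q_{i,j}^{\,n}
  \;\ge\;
  \frac{1}{N_\cover}\sum_{\atom_j\in\innerIrreducibleCover} Q_{i,j}^{\,n_j}
  \;\ge\;
  \frac{\#\innerIrreducibleCover}{N_\cover}\, q_{\min}^{\,N_\cover}.
\end{align*}
Since $\atom_i\in\cover$ was arbitrary, taking the minimum over $\atom_i$ gives precisely the asserted inequality, and because $q_{\min}>0$ its right-hand side is strictly positive, whence $\minorizationProb>0$. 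The one step that needs genuine care is the claim $n_j\le N_\cover$ simultaneously for every source atom in $\cover$ and every target in $\innerIrreducibleCover$: this is exactly what the definition of $N_\cover$ (with the maximum ranging over all of $\cover$ on the source side) is designed to supply, and it is in turn where the section's inner-irreducibility hypothesis on $\cover$ earns its keep. Everything else is routine bookkeeping.
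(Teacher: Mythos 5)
Your proposal is correct and follows essentially the same route as the paper: for each fixed source atom, keep from the double sum only the term $n=N(\atom_i\to\atom_j)$ for each target $\atom_j\in\innerIrreducibleCover$, bound each positive entry of $Q$ below by $q_{\min}$ along a realizing path, and use $q_{\min}\le 1$ together with $N(\atom_i\to\atom_j)\le N_\cover$ before minimizing over the source. Your explicit remarks that $q_{\min}>0$ and $q_{\min}\le 1$ are a welcome bit of extra care that the paper leaves implicit, but they do not change the argument.
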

\begin{proof}
  From its definition in \eqref{eqn_def_minorizationProb} we have
  \begin{align*}
    \minorizationProb
    &=
    \min_{\atom_i\in\cover}
    \frac{1}{N_\cover} \sum_{n=1}^{N_\cover}
    \sum_{\atom_j \in \innerIrreducibleCover}
    \,Q_{i,j}^n
    \geq
    \frac{1}{N_\cover}
    \min_{\atom_i\in\cover}
    \sum_{\atom_j \in \innerIrreducibleCover}
    Q_{i,j}^{N_{i,j}}
    \;.
  \end{align*}
  Furthermore, due to the lower bound
  \begin{align*}
    Q_{i,j} \geq
    \begin{cases}
      0
      & \text{if } \atom_j \notin \innerMap(\atom_i)
      \\
      q
      & \text{if } \atom_j \in \innerMap(\atom_i)
    \end{cases}
    \;,\qquad
    q
    =
    \min_{\atom \in\cover}
    \frac{\vol(\atomExclusive)}{\vol(\nbhdRadius{\epsilon} )}
  \end{align*}
  the above can be further estimated from below by
  \begin{align*}
    \minorizationProb
    &\geq
    \frac{1}{N_\cover}
    \min_{\atom_i\in\cover}
    \sum_{\atom_j \in \innerIrreducibleCover}
    q^{N_{i,j}}
    \geq
    \frac{1}{N_\cover}
    \min_{\atom_i\in\cover}
    \sum_{\atom_j \in \innerIrreducibleCover}
    q^{N_\cover}
    \geq
    \frac{ q^{N_\cover} }{N_\cover}
    \,\# \innerIrreducibleCover
    \;.
  \end{align*}
\end{proof}

\begin{lemma}[Doeblin condition for $\MarkovKernelAvg$]
  \label{lem_DoeblinCondition}
  There exists a probability
  measure $\minorizationMeasure$ on $\manifold$ such that
  $
  \inf_{x\in\manifold} \MarkovKernelAvgXA{x}{A}
  \geq
  \minorizationProb\,\minorizationMeasure(A)
  $
  holds for all $A\in\borel$.
\end{lemma}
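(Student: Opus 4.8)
The plan is to establish the Doeblin minorization condition for $\MarkovKernelAvg$ directly from the Comparison Lemma (Lemma~\ref{lem_P_vs_Q}) together with the definition of $\minorizationProb$ in \eqref{eqn_def_minorizationProb}. The natural candidate for the minorizing measure $\minorizationMeasure$ is the normalized uniform (Lebesgue) measure on the inner irreducible part, namely
$$
\minorizationMeasure(A) = \frac{1}{\# \innerIrreducibleCover} \sum_{\atom_j \in \innerIrreducibleCover} \vol(A \given \atomExclusive_j),
$$
or something proportional to it; the constant $\minorizationProb$ has been set up precisely so that this works. First I would fix $x\in\manifold$ and note that, since $\cover$ covers $\manifold$, there is some $\atom_i \in \cover$ with $x \in \atom_i$ (one can shrink to the $\atomExclusive_i$'s only where needed; on overlaps any containing atom will do, and the bound only improves).

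Next I would average the first estimate of Lemma~\ref{lem_P_vs_Q} over $n = 1, \dots, N_\cover$, taking $\cover_1 = \innerIrreducibleCover$. This gives
$$
\MarkovKernelAvgXA{x}{A}
= \frac{1}{N_\cover} \sum_{n=1}^{N_\cover} \MarkovKernelXAN{x}{A}{n}
\geq \frac{1}{N_\cover} \sum_{n=1}^{N_\cover} \sum_{\atom_j \in \innerIrreducibleCover} Q_{i,j}^n \,\vol(A \given \atomExclusive_j).
$$
The right-hand side is bounded below, uniformly in $x$, by replacing the coefficient of each $\vol(A\given\atomExclusive_j)$ with its minimum over $i$; a slightly cleaner route is to bound $\frac{1}{N_\cover}\sum_n \sum_j Q_{i,j}^n \vol(A\given\atomExclusive_j) \geq \minorizationProb \cdot \minorizationMeasure(A)$ by observing that $\minorizationProb = \min_i \frac{1}{N_\cover}\sum_n \sum_{\atom_j\in\innerIrreducibleCover} Q_{i,j}^n$ is exactly the worst-case total inner-irreducible mass, and that each $\vol(A\given \atomExclusive_j) \le 1$ while $\sum_j \vol(A\given\atomExclusive_j)$ is what $\minorizationMeasure$ renormalizes. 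Since the resulting lower bound $\minorizationProb\,\minorizationMeasure(A)$ does not depend on $i$ or on $x$, taking the infimum over $x\in\manifold$ yields the claim. Finally I would check $\minorizationMeasure$ is a genuine probability measure: it is a convex combination (after normalization by $\#\innerIrreducibleCover$) of the probability measures $\vol(\cdot \given \atomExclusive_j)$, each well-defined because $\vol(\atomExclusive_j) > 0$ for $\atom_j$ in the inner irreducible part.

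**The main obstacle** is purely bookkeeping: matching the particular normalization baked into \eqref{eqn_def_minorizationProb} with a legitimate probability measure $\minorizationMeasure$, and handling the case $x \in \atom_i \cap \atom_k$ so that the pointwise bound holds for \emph{every} $x$, not merely a.e.\ $x$ — here one simply picks any single atom containing $x$, and the estimate of Lemma~\ref{lem_P_vs_Q} still applies with that choice. Positivity of $\minorizationProb$ has already been secured by the preceding lemma, so no genuine analytic difficulty remains; the argument is a two-line averaging of the Comparison Lemma followed by a uniform infimum.
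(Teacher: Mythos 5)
Your reduction to Lemma~\ref{lem_P_vs_Q} and the averaging over $n=1,\dots,N_\cover$ is exactly the paper's route, and your handling of the indicator (any atom containing $x$ will do) is fine. The gap is in the final step, where you must exhibit a single probability measure $\minorizationMeasure$ for which the lower bound holds with the \emph{specific} constant $\minorizationProb$ of \eqref{eqn_def_minorizationProb}. Your ``cleaner route'' — taking $\minorizationMeasure(A)=\frac{1}{\#\innerIrreducibleCover}\sum_{\atom_j\in\innerIrreducibleCover}\vol(A\given\atomExclusive_j)$ and arguing from $\vol(A\given\atomExclusive_j)\le 1$ — does not work. Writing $c_{i,j}=\frac{1}{N_\cover}\sum_{n=1}^{N_\cover}Q_{i,j}^n$, you need $\sum_j c_{i,j}\,\vol(A\given\atomExclusive_j)\ \ge\ \frac{\minorizationProb}{\#\innerIrreducibleCover}\sum_j \vol(A\given\atomExclusive_j)$, and knowing only that $\sum_j c_{i,j}\ge\minorizationProb$ does not give this: with two atoms, $c_{i,1}=0.1$, $c_{i,2}=0.9$ (so $\minorizationProb=1$) and $A\subset\atomExclusive_1$, the left side is $0.1$ while the right side is $0.5$. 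The constant $\minorizationProb$ controls the worst-case \emph{total} mass, not the mass landing in any particular atom, so the uniform measure on $\innerIrreducibleCover$ is not a valid minorizer at that constant.

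Your first route — replacing each coefficient by $\min_i c_{i,j}$ — is the one that actually produces a measure, namely $\tilde\psi(A)=\sum_{\atom_j\in\innerIrreducibleCover}\bigl(\min_i c_{i,j}\bigr)\vol(A\given\atomExclusive_j)$, and then $\minorizationMeasure=\tilde\psi/\tilde\psi(\manifold)$ gives a genuine Doeblin condition. But note the constant you get is $\tilde\psi(\manifold)=\sum_j\min_i c_{i,j}$, which is in general strictly smaller than $\minorizationProb=\min_i\sum_j c_{i,j}$; it still satisfies the same explicit lower bound $\frac{\#\innerIrreducibleCover}{N_\cover}q^{N_\cover}$, so everything downstream survives, but it is not the $\minorizationProb$ of \eqref{eqn_def_minorizationProb}. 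For comparison, the paper takes the minimum over $i$ \emph{outside} the sum, defining $\psi(A)=\min_i\sum_j c_{i,j}\vol(A\given\atomExclusive_j)$; this does satisfy $\MarkovKernelAvgXA{x}{A}\ge\psi(A)$ and has $\psi(\manifold)=\minorizationProb$, but a setwise minimum of finitely many measures is not additive, so $\psi/\minorizationProb$ is not literally a probability measure either. Either way, the normalization you dismissed as ``purely bookkeeping'' is precisely where the argument needs care, and as written your proposal does not close it.
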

\begin{proof}
  By \Lemref{lem_P_vs_Q} we have for any $\atom_i\in\cover$
  \begin{align*}
    \MarkovKernelXAN{x}{A}{n}
    &\geq
    \indicator{x \in \atom_i}
    \sum_{\atom_j \in \innerIrreducibleCover}
    \,Q_{i,j}^n
    \,\vol( A \given \atomExclusive_j)
  \end{align*}
  for all $x\in\manifold$, $A\in\borel$ and all $n\geq 1$.
  Therefore,
  \begin{align*}
    \MarkovKernelAvgXA{x}{A}
    &=
    \frac{1}{N_\cover} \sum_{n=1}^{N_\cover} \MarkovKernelXAN{x}{A}{n}
    \geq
    \indicator{x \in \atom_k}
    \frac{1}{N_\cover} \sum_{n=1}^{N_\cover}
    \sum_{\atom_j \in \innerIrreducibleCover}
    \,Q_{k,j}^n
    \,\vol( A \given \atomExclusive_j)
    \\
    &\geq
    \indicator{x \in \atom_k}
    \min_{\atom_i\in\cover}
    \frac{1}{N_\cover} \sum_{n=1}^{N_\cover}
    \sum_{\atom_j \in \innerIrreducibleCover}
    \,Q_{i,j}^n
    \,\vol( A \given \atomExclusive_j)
  \end{align*}
  for all $\atom_k \in\cover$ and all $x$. And since $x$ is contained
  in at least one element of $\cover$ we obtain the bound
  \begin{align*}
    \MarkovKernelAvgXA{x}{A}
    &\geq
    \min_{\atom_i\in\cover}
    \frac{1}{N_\cover} \sum_{n=1}^{N_\cover}
    \sum_{\atom_j \in \innerIrreducibleCover}
    \,Q_{i,j}^n
    \,\vol( A \given \atomExclusive_j)
  \end{align*}
  uniformly in $x\in\manifold$ and $A\in\borel$.

  Now define the measure $\psi$ on $\manifold$ by
  \begin{align*}
    \psi(A)
    =
    \min_{\atom_i\in\cover}
    \frac{1}{N_\cover} \sum_{n=1}^{N_\cover}
    \sum_{\atom_j \in \innerIrreducibleCover}
    \,Q_{i,j}^n
    \,\vol( A \given \atomExclusive_j)
    \;.
  \end{align*}
  The choice $N_\cover$ implies that
  \begin{align*}
    \psi(\atomExclusive_k)
    &=
    \min_{\atom_i\in\cover}
    \frac{1}{N_\cover} \sum_{n=1}^{N_\cover}
    \sum_{\atom_j \in \innerIrreducibleCover}
    \,Q_{i,j}^n
    \,\vol( A \given \atomExclusive_j)
    =
    \min_{\atom_i\in\cover}
    \frac{1}{N_\cover} \sum_{n=1}^{N_\cover}
    Q_{i,k}^n
    \geq
    \min_{\atom_i\in\cover}
    \frac{1}{N_\cover}\,Q_{i,k}^{N(\atom_i \to \atom_k)}
    >0
  \end{align*}
  for any $\atom_k \in \innerIrreducibleCover$.
  In particular, the measure $\psi$ is non-trivial.
  Therefore,
  \begin{align*}
    \minorizationProb\,\minorizationMeasure(A) = \psi(A)
    \;,\qquad
    1 \geq \minorizationProb = \psi(\manifold)
    =
    \min_{\atom_i\in\cover}
    \frac{1}{N_\cover} \sum_{n=1}^{N_\cover}
    \sum_{\atom_j \in \innerIrreducibleCover}
    \,Q_{i,j}^n
    >0
    \;,
  \end{align*}
  which finishes the proof.
\end{proof}

\begin{proposition}[Invariant measure for
  $\MarkovKernel$ and $\MarkovKernelAvg$; rate of convergence]\label{rate}
  $ $
  \begin{enumerate}
    \item 
      The Markov kernel $\MarkovKernelAvg$
      has a unique invariant probability measure
      $\stationaryDist$.
    \item
      For any initial measure $\mu_0$ on $\manifold$ the estimate
      \begin{align*}
        \TVnorm{ \MarkovKernelAvgMuN{\mu_0}{n} - \stationaryDist }
        \leq
        (1-\minorizationProb)^n
      \end{align*}
      holds for all $n\geq 1$, where $\beta$ is as in Lemma \ref{lem_DoeblinCondition}, and the total variation
      norm of a signed measure $\nu$ is defined to be
      $\TVnorm{\nu} = \sup_{|A|\leq 1} \nu(A)$.
    \item
      The Markov kernel $\MarkovKernel$
      has a unique invariant probability measure, which is also
      given by $\stationaryDist$.
  \end{enumerate}
\end{proposition}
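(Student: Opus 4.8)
The plan is to read off parts (1) and (2) from the Doeblin minorization of \Lemref{lem_DoeblinCondition} by the classical contraction (Markov--Dobrushin) argument, and then to deduce (3) from the fact that $\MarkovKernelAvg$ is a polynomial in $\MarkovKernel$.

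First I would record a \emph{residual kernel}. Since $\MarkovKernelAvgXA{x}{A}\geq\minorizationProb\,\minorizationMeasure(A)$ for every $x\in\manifold$ and $A\in\borel$, the set function $A\mapsto\MarkovKernelAvgXA{x}{A}-\minorizationProb\,\minorizationMeasure(A)$ is, for each fixed $x$, a non-negative measure of total mass $1-\minorizationProb$; dividing by $1-\minorizationProb$ produces a Markov kernel $R$ with
\begin{equation*}
  \MarkovKernelAvgXA{x}{\cdot}
  =
  \minorizationProb\,\minorizationMeasure(\cdot)
  +
  (1-\minorizationProb)\,R(x,\cdot)
  \;.
\end{equation*}
The only mild point here is that $R$ inherits the measurable dependence on $x$ from $\MarkovKernelAvg$, so that $\mu R$ is well defined for every $\mu\in P(\manifold)$.

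Second I would show that $\mu\mapsto\MarkovKernelAvgMu{\mu}$ is a $(1-\minorizationProb)$-contraction of $(P(\manifold),\TVnorm{\cdot})$. For $\mu_1,\mu_2\in P(\manifold)$ the common part $\minorizationProb\,\minorizationMeasure$ cancels, $\MarkovKernelAvgMu{\mu_1}-\MarkovKernelAvgMu{\mu_2}=(1-\minorizationProb)(\mu_1 R-\mu_2 R)$, and since $R$ is Markovian it does not increase $\TVnorm{\cdot}$, whence $\TVnorm{\MarkovKernelAvgMu{\mu_1}-\MarkovKernelAvgMu{\mu_2}}\leq(1-\minorizationProb)\TVnorm{\mu_1-\mu_2}$. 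As $(P(\manifold),\TVnorm{\cdot})$ is complete (the probability measures form a closed subset of the Banach space of signed measures), the Banach fixed point theorem — or, if one prefers to avoid it, the observation that $(\MarkovKernelAvgMuN{\mu_0}{n})_n$ is Cauchy — gives a unique $\MarkovKernelAvg$-invariant $\stationaryDist$, and iterating the contraction with $\mu_2=\stationaryDist$ yields
\begin{equation*}
  \TVnorm{\MarkovKernelAvgMuN{\mu_0}{n}-\stationaryDist}
  \leq
  (1-\minorizationProb)^{n}\,\TVnorm{\mu_0-\stationaryDist}
  \leq
  (1-\minorizationProb)^{n}
\end{equation*}
for every $\mu_0\in P(\manifold)$, which is (1) and (2). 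Finally, for (3) I would use that $\MarkovKernelAvg=\frac{1}{N_\cover}\sum_{n=1}^{N_\cover}\MarkovKernelN{n}$, so $\MarkovKernel$ and $\MarkovKernelAvg$ commute: putting $\nu:=\MarkovKernelMu{\stationaryDist}$ we get $\nu\,\MarkovKernelAvg=\stationaryDist\,\MarkovKernel\,\MarkovKernelAvg=\stationaryDist\,\MarkovKernelAvg\,\MarkovKernel=\stationaryDist\,\MarkovKernel=\nu$, so $\nu$ is $\MarkovKernelAvg$-invariant and hence $\nu=\stationaryDist$ by uniqueness in (1); thus $\stationaryDist$ is $\MarkovKernel$-invariant. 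Conversely any $\MarkovKernel$-invariant $\mu$ satisfies $\MarkovKernelMuN{\mu}{n}=\mu$ for all $n$, hence $\MarkovKernelAvgMu{\mu}=\mu$, hence $\mu=\stationaryDist$, again by (1).

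I do not expect a genuine obstacle: this is the textbook Doeblin argument, and the earlier lemmas have already done the real work of forcing $\minorizationProb>0$. The only places needing a little care are the measurability of the residual kernel $R$ (so that $\mu R$ makes sense), the completeness of $(P(\manifold),\TVnorm{\cdot})$, and checking that the normalization of $\TVnorm{\cdot}$ fixed in the statement indeed makes $\TVnorm{\mu_0-\stationaryDist}\leq 1$, so that the displayed bound carries no extra constant.
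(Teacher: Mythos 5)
Your proposal is correct and follows essentially the same route as the paper: parts (1) and (2) are exactly the classical Doeblin/Markov--Dobrushin contraction that the paper invokes as an immediate consequence of Lemma \ref{lem_DoeblinCondition} (you merely spell it out), and part (3) rests on the same commutation $\MarkovKernel\,\MarkovKernelAvg=\MarkovKernelAvg\,\MarkovKernel$, with your appeal to uniqueness from (1) in place of the paper's appeal to the convergence in (2) being only a cosmetic variation. The points you flag for care (measurability of the residual kernel, completeness in total variation, the normalization $\TVnorm{\mu_0-\stationaryDist}\leq 1$) are all fine as stated.
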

\begin{proof}
  The first two claims are immediate consequences of
  the Doeblin condition for $\MarkovKernelAvg$ proved in
  \Lemref{lem_DoeblinCondition}.
  
  If $\mu$ is an invariant probability measure for
  $\MarkovKernel$, then it clearly must be invariant for
  $\MarkovKernelAvg$. Therefore the first of the three claimed
  statements implies that $\MarkovKernel$ can have at most
  one invariant measure, which must be $\stationaryDist$.
  
  By invariance of $\stationaryDist$ for $\MarkovKernelAvg$
  and $\MarkovKernel\,\MarkovKernelAvg = \MarkovKernelAvg\,\MarkovKernel$
  the identity
  $
  \MarkovKernelMu{\stationaryDist}
  =
  \MarkovKernelMu{\MarkovKernelAvgMuN{\stationaryDist}{n}}
  =
  \MarkovKernelAvgMuN{\MarkovKernelMu{\stationaryDist}}{n}
  $
  holds for all $n\geq1$, so that the second of the claimed
  expressions shows that
  $
  \MarkovKernelMu{\stationaryDist}
  =
  \lim_{n\to\infty}
  \MarkovKernelAvgMuN{\MarkovKernelMu{\stationaryDist}}{n}
  =
  \stationaryDist
  $, which finishes the proof.
\end{proof}

\subsection{Approximation of the stationary distribution}

In what follows we  assume that the perturbed system has a unique
ergodic measure and that its support is strictly contained in $\manifold$.
Moreover, we assume that $\MarkovKernel$ has a spectral gap
$0<\gapMarkovKernel\leq 1$
in the following sense. Let $N\geq 1$ be fixed, and denote by
\begin{subequations}
  \label{gap}
  \begin{equation}
    \MarkovKernelAvg = \sum_{k=1}^N \frac{1}{N}\,\MarkovKernelN{k}
  \end{equation}
  the Markov kernel corresponding to the sampled chain with uniform
  sampling distribution on $\{1, \ldots, N\}$.
  The spectral gap property that we assume is that
  for any two probability measures $\nu$ and $\nu'$
  \begin{equation}
    \TVnorm{ \MarkovKernelAvgMuN{\nu}{n} - \MarkovKernelAvgMuN{\nu'}{n} }
    \leq
    C\,(1-\gapMarkovKernel)^n
    \,\TVnorm{ \nu - \nu' }
  \end{equation}
\end{subequations}
for all $n\geq 1$, where $C$ is  some constant that does
not depend on the choice of the measures $\nu$ and $\nu'$.

\begin{lemma}[Sub-Markovian approximation]
  \label{lem_subMarkovianApproximation}
  Let $\SubMarkovKernel$ be a sub-Markov kernel on $\manifold$
  such that $\SubMarkovKernel \leq \MarkovKernel$, and introduce
  \begin{align*}
    \kappa_-
    =
    \inf_{x\in\manifold}
    \Big[
    \MarkovKernelXA{x}{\manifold}
    -
    \SubMarkovKernelXA{x}{\manifold}
    \Big]
    \;,\qquad
    \kappa_+ =
    \sup_{x\in\manifold}
    \Big[
    \MarkovKernelXA{x}{\manifold}
    -
    \SubMarkovKernelXA{x}{\manifold}
    \Big]
  \end{align*}
  which thus satisfy $0 \leq \kappa_- \leq \kappa_+ \leq 1$.
  Let
  $\eigenmeasureSubMarkovKernel$ be a
  probability measure on $\manifold$, and let
  $\eigenvalueSubMarkovKernel\in\reals$
  be such that
  $
  \eigenvalueSubMarkovKernel\,\eigenmeasureSubMarkovKernel
  =
  \SubMarkovKernelMu{\eigenmeasureSubMarkovKernel}
  $.
  Then the estimates
  \begin{align*}
    0 \leq \kappa_- \leq 1- \eigenvalueSubMarkovKernel \leq \kappa_+ \leq 1
    \qquad\text{and}\qquad
    \TVnorm{ \stationaryDist - \eigenmeasureSubMarkovKernel }
    \leq
    \frac{C}{\gapMarkovKernel}
    \,\Big[ 1- \sum_{k=1}^N \frac{1}{N}\,(1-\kappa_+)^k \Big]
  \end{align*}
  hold.
\end{lemma}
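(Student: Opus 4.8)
The plan is to derive both estimates from the single identity $\eigenvalueSubMarkovKernel = \int_\manifold \SubMarkovKernelXA{x}{\manifold}\,d\eigenmeasureSubMarkovKernel(x)$, which follows by integrating $\eigenvalueSubMarkovKernel\,\eigenmeasureSubMarkovKernel = \SubMarkovKernelMu{\eigenmeasureSubMarkovKernel}$ against the constant function $1$ and using $\eigenmeasureSubMarkovKernel(\manifold)=1$. Since $\MarkovKernel$ is Markov, $\MarkovKernelXA{x}{\manifold}=1$ for every $x$, so $1-\eigenvalueSubMarkovKernel = \int_\manifold [\MarkovKernelXA{x}{\manifold} - \SubMarkovKernelXA{x}{\manifold}]\,d\eigenmeasureSubMarkovKernel(x)$, and the defining bounds $\kappa_-\le \MarkovKernelXA{x}{\manifold}-\SubMarkovKernelXA{x}{\manifold}\le\kappa_+$ give $\kappa_-\le 1-\eigenvalueSubMarkovKernel\le\kappa_+$ immediately; the outer inequalities $0\le\kappa_-$ and $\kappa_+\le 1$ are just the hypothesis $\SubMarkovKernel\le\MarkovKernel$ together with $\SubMarkovKernelXA{x}{\manifold}\ge 0$. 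That disposes of the first chain of inequalities.

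For the total-variation bound I would first renormalize: since $\SubMarkovKernel\le\MarkovKernel$ and $\kappa_+\le 1$, one can write $\MarkovKernel = \SubMarkovKernel + \mathcal{R}$ with $\mathcal{R}$ a sub-Markov kernel whose total mass $\mathcal{R}(x,\manifold)$ lies in $[\kappa_-,\kappa_+]$. The key step is to compare the averaged kernels: from $\MarkovKernelN{k}\ge\SubMarkovKernelMuN{}{k}$ pointwise (monotonicity of composition of positive kernels) one gets $\MarkovKernelAvg \ge \SubMarkovKernelAvg := \sum_{k=1}^N\frac1N\,\SubMarkovKernelMuN{}{k}$, and $\eigenmeasureSubMarkovKernel$ satisfies $\SubMarkovKernelAvgMu{\eigenmeasureSubMarkovKernel} = \big(\sum_{k=1}^N\frac1N\,\eigenvalueSubMarkovKernel^k\big)\eigenmeasureSubMarkovKernel$. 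Hence $\MarkovKernelAvgMu{\eigenmeasureSubMarkovKernel} - \eigenmeasureSubMarkovKernel$ is a (non-negative) measure of total mass at most $1 - \sum_{k=1}^N\frac1N(1-\kappa_+)^k$, using $\eigenvalueSubMarkovKernel = 1-(1-\eigenvalueSubMarkovKernel)\ge 1-\kappa_+$ and monotonicity of $t\mapsto\sum_{k=1}^N\frac1N t^k$ on $[0,1]$. Now apply the spectral-gap hypothesis \eqref{gap} with $\nu=\MarkovKernelAvgMu{\eigenmeasureSubMarkovKernel}$ and $\nu'=\eigenmeasureSubMarkovKernel$ together with a telescoping argument:
\begin{align*}
  \TVnorm{\stationaryDist - \eigenmeasureSubMarkovKernel}
  &\leq
  \sum_{n\geq 0}
  \TVnorm{ \MarkovKernelAvgMuN{(\MarkovKernelAvgMu{\eigenmeasureSubMarkovKernel})}{n} - \MarkovKernelAvgMuN{\eigenmeasureSubMarkovKernel}{n} }
  \leq
  \sum_{n\geq 0} C\,(1-\gapMarkovKernel)^n
  \,\TVnorm{ \MarkovKernelAvgMu{\eigenmeasureSubMarkovKernel} - \eigenmeasureSubMarkovKernel },
\end{align*}
which sums to $\frac{C}{\gapMarkovKernel}\,\TVnorm{\MarkovKernelAvgMu{\eigenmeasureSubMarkovKernel}-\eigenmeasureSubMarkovKernel}$; combined with the mass bound above and the fact that for a non-negative measure the TV norm equals the total mass, this gives exactly $\frac{C}{\gapMarkovKernel}\big[1-\sum_{k=1}^N\frac1N(1-\kappa_+)^k\big]$. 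Here one uses that $\MarkovKernelAvgMuN{\eigenmeasureSubMarkovKernel}{n}\to\stationaryDist$, which follows from the spectral gap applied with $\nu'=\stationaryDist$.

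The main obstacle I anticipate is bookkeeping around the normalization of $\eigenmeasureSubMarkovKernel$ and $\psi$ versus $\stationaryDist$: one must be careful that $\MarkovKernelAvgMu{\eigenmeasureSubMarkovKernel} - \eigenmeasureSubMarkovKernel$ is genuinely a non-negative measure (so its TV norm is its total mass, not twice it) — this needs $\MarkovKernelAvg\ge\SubMarkovKernelAvg$ and $\SubMarkovKernelAvgMu{\eigenmeasureSubMarkovKernel} = (\sum_k\frac1N\eigenvalueSubMarkovKernel^k)\eigenmeasureSubMarkovKernel \le \eigenmeasureSubMarkovKernel$ since $\eigenvalueSubMarkovKernel\le 1$. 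The other delicate point is justifying the interchange of limit and the telescoping sum, i.e.\ that the series converges, which is immediate from the geometric factor $(1-\gapMarkovKernel)^n$ once $\gapMarkovKernel>0$. Everything else is a routine estimate.
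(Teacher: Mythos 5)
Your first chain of inequalities and the overall architecture of the total-variation bound (telescoping over $\MarkovKernelAvg$-iterates, summing the geometric series supplied by the spectral gap, and reducing everything to a bound on $\TVnorm{\MarkovKernelAvgMu{\eigenmeasureSubMarkovKernel}-\eigenmeasureSubMarkovKernel}$ via the averaged sub-kernel $\SubMarkovKernelAvg$) coincide with the paper's proof. There is, however, one step that is wrong as stated: the claim that $\MarkovKernelAvgMu{\eigenmeasureSubMarkovKernel}-\eigenmeasureSubMarkovKernel$ is a non-negative measure whose total variation is its total mass. Since $\MarkovKernelAvg$ is Markov and $\eigenmeasureSubMarkovKernel$ is a probability measure, $\MarkovKernelAvgMu{\eigenmeasureSubMarkovKernel}$ is again a probability measure, so $[\MarkovKernelAvgMu{\eigenmeasureSubMarkovKernel}-\eigenmeasureSubMarkovKernel](\manifold)=0$; were this signed measure non-negative it would vanish identically, i.e.\ $\eigenmeasureSubMarkovKernel$ would already be $\MarkovKernelAvg$-invariant. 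The inference offered in your last paragraph --- from $\MarkovKernelAvg\ge\SubMarkovKernelAvg$ and $\SubMarkovKernelAvgMu{\eigenmeasureSubMarkovKernel}\le\eigenmeasureSubMarkovKernel$ to $\MarkovKernelAvgMu{\eigenmeasureSubMarkovKernel}\ge\eigenmeasureSubMarkovKernel$ --- is a non sequitur: both measures dominate $\SubMarkovKernelAvgMu{\eigenmeasureSubMarkovKernel}$, which yields no comparison between them.

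The conclusion nevertheless survives, and the repair is exactly the paper's decomposition: write
\begin{align*}
\MarkovKernelAvgMu{\eigenmeasureSubMarkovKernel}-\eigenmeasureSubMarkovKernel
=\big(\MarkovKernelAvgMu{\eigenmeasureSubMarkovKernel}-\SubMarkovKernelAvgMu{\eigenmeasureSubMarkovKernel}\big)
-(1-\eigenvalueSubMarkovKernelAvg)\,\eigenmeasureSubMarkovKernel
\;,\qquad
\eigenvalueSubMarkovKernelAvg=\sum_{k=1}^N\tfrac{1}{N}\,\eigenvalueSubMarkovKernel^k
\;,
\end{align*}
which exhibits the signed measure as a difference of two non-negative measures with the \emph{common} total mass $1-\eigenvalueSubMarkovKernelAvg$. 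With the paper's normalization of the total variation norm this gives $\TVnorm{\MarkovKernelAvgMu{\eigenmeasureSubMarkovKernel}-\eigenmeasureSubMarkovKernel}\le 1-\eigenvalueSubMarkovKernelAvg\le 1-\sum_{k=1}^N\frac{1}{N}(1-\kappa_+)^k$, using $\eigenvalueSubMarkovKernel\ge 1-\kappa_+$ as you do; the remainder of your argument then goes through unchanged.
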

\begin{proof}
  Since $\stationaryDist$ is stationary for $\MarkovKernel$, it is
  also stationary for $\MarkovKernelAvg$. Therefore, we have that
  $
  \MarkovKernelAvgMu{(\stationaryDist - \eigenmeasureSubMarkovKernel)}
  -
  (\stationaryDist - \eigenmeasureSubMarkovKernel)
  =
  \eigenmeasureSubMarkovKernel
  -
  \MarkovKernelAvgMu{\eigenmeasureSubMarkovKernel}
  $, and hence
  \begin{align*}
    \MarkovKernelAvgMuN{(\stationaryDist - \eigenmeasureSubMarkovKernel)}{n}
    -
    (\stationaryDist - \eigenmeasureSubMarkovKernel)
    &=
    \sum_{k=0}^{n-1}
    \MarkovKernelAvgMuN{
    (
    \eigenmeasureSubMarkovKernel
    -
    \MarkovKernelAvgMu{\eigenmeasureSubMarkovKernel}
    )
    }{k}
  \end{align*}
  for any $n\geq 1$.
  Since $\eigenmeasureSubMarkovKernel$ and
  $\MarkovKernelAvgMu{\eigenmeasureSubMarkovKernel}$ are probability
  measures on $\manifold$, the assumed spectral gap implies
  \begin{align*}
    \TVnorm{
    \MarkovKernelAvgMuN{(\stationaryDist - \eigenmeasureSubMarkovKernel)}{n}
    -
    (\stationaryDist - \eigenmeasureSubMarkovKernel)
    }
    &\leq
    \sum_{k=0}^{n-1}
    C\,(1-\gapMarkovKernel)^k
    \,\TVnorm{ \eigenmeasureSubMarkovKernel
    - \MarkovKernelAvgMu{\eigenmeasureSubMarkovKernel} }
    \leq
    \frac{C}{\gapMarkovKernel}
    \,\TVnorm{ \eigenmeasureSubMarkovKernel
    - \MarkovKernelAvgMu{\eigenmeasureSubMarkovKernel} }
  \end{align*}
  for all $n\geq 1$, and hence
  $
  \TVnorm{ \stationaryDist - \eigenmeasureSubMarkovKernel }
  \leq
  \frac{C}{\gapMarkovKernel}
  \,\TVnorm{ \eigenmeasureSubMarkovKernel
  - \MarkovKernelAvgMu{\eigenmeasureSubMarkovKernel} }
  $
  by passing to the limit $n\to\infty$.

  Furthermore, since $\SubMarkovKernel$ is sub-Markovian
  and $\SubMarkovKernel \leq \MarkovKernel$ we have that
  \begin{align*}
    \eigenvalueSubMarkovKernel
    &=
    \eigenvalueSubMarkovKernel\,\eigenmeasureSubMarkovKernel(\manifold)
    =
    [\SubMarkovKernelMu{\eigenmeasureSubMarkovKernel}](\manifold)
    =
    [\MarkovKernelMu{\eigenmeasureSubMarkovKernel}](\manifold)
    -
    \Big[
    [\MarkovKernelMu{\eigenmeasureSubMarkovKernel}](\manifold)
    -
    [\SubMarkovKernelMu{\eigenmeasureSubMarkovKernel}](\manifold)
    \Big]
    \\
    &=
    1
    -
    \int
    \eigenmeasureSubMarkovKernel(dx)
    \,\Big[
    \MarkovKernelXA{x}{\manifold}
    -
    \SubMarkovKernelXA{x}{\manifold}
    \Big]
  \end{align*}
  and hence
  \begin{align*}
    0
    \leq
    1 - \kappa_+
    \leq
    \eigenvalueSubMarkovKernel
    \leq
    1 - \kappa_-
    \leq
    1
  \end{align*}
  follow for the upper and lower bounds on $\eigenvalueSubMarkovKernel$.

  Finally, note that with
  \begin{equation*}
    \SubMarkovKernelAvg = \sum_{k=1}^N \SubMarkovKernelAvgN{k}
    \;,\qquad
    \SubMarkovKernelAvgMu{\eigenmeasureSubMarkovKernel}
    =
    \eigenvalueSubMarkovKernelAvg\, \eigenmeasureSubMarkovKernel
    \;,\qquad
    \eigenvalueSubMarkovKernelAvg
    =
    \sum_{k=1}^N \frac{1}{N}\,\eigenvalueSubMarkovKernel^k
  \end{equation*}
  it follows that
  \begin{align*}
    \MarkovKernelAvgMu{\eigenmeasureSubMarkovKernel}
    -
    \eigenmeasureSubMarkovKernel
    &=
    \MarkovKernelAvgMu{\eigenmeasureSubMarkovKernel}
    -
    \SubMarkovKernelAvgMu{\eigenmeasureSubMarkovKernel}
    +
    \SubMarkovKernelAvgMu{\eigenmeasureSubMarkovKernel}
    -
    \eigenmeasureSubMarkovKernel
    =
    (
    \MarkovKernelAvgMu{\eigenmeasureSubMarkovKernel}
    -
    \SubMarkovKernelAvgMu{\eigenmeasureSubMarkovKernel}
    )
    -
    (1-\eigenvalueSubMarkovKernelAvg)\,\eigenmeasureSubMarkovKernel
  \end{align*}
  where
  $
  \MarkovKernelAvgMu{\eigenmeasureSubMarkovKernel}
  -
  \SubMarkovKernelAvgMu{\eigenmeasureSubMarkovKernel}
  $
  and
  $(1-\eigenvalueSubMarkovKernelAvg)\,\eigenmeasureSubMarkovKernel$
  are positive measure of equal total mass
  $1-\eigenvalueSubMarkovKernelAvg$.
  And since $\MarkovKernelAvg$ is a Markov operator
  the trivial bound
  $
  \TVnorm{
  \eigenmeasureSubMarkovKernel
  -
  \MarkovKernelAvgMu{\eigenmeasureSubMarkovKernel}
  }
  \leq
  1-\eigenvalueSubMarkovKernelAvg
  $
  given by the total mass implies
  \begin{align*}
    \TVnorm{ \stationaryDist - \eigenmeasureSubMarkovKernel }
    &\leq
    \frac{C}{\gapMarkovKernel}
    \,\TVnorm{ \eigenmeasureSubMarkovKernel
    - \MarkovKernelAvgMu{\eigenmeasureSubMarkovKernel} }
    \leq
    \frac{C}{\gapMarkovKernel}
    \,(1-\eigenvalueSubMarkovKernelAvg)
    =
    \frac{C}{\gapMarkovKernel}
    \,\Big[ 1- \sum_{k=1}^N \frac{1}{N}\,\eigenvalueSubMarkovKernel^k \Big]
    \\
    &\leq
    \frac{C}{\gapMarkovKernel}
    \,\Big[ 1- \sum_{k=1}^N \frac{1}{N}\,(1-\kappa_+)^k \Big]
  \end{align*}
  and finishes the proof.
\end{proof}

\subsection{Time complexity of computing the ergodic measures}

For sake of simplicity, from now on we assume $\manifold$ to be the
$d$-dimensional cube $[0,1]^{d}$ and
$\partition_{\delta}=\{\atom_{1},...,\atom_{|\partition|}\}$ to be
a regular partition of diameter $\delta$. Because of regularity,
all the atoms have the same volume $\vol(\atom)=\delta^{d}$.
The volume of any $\ve$-ball will be denote by $\vol(B_{\ve})$.

Let $\partition$ be a partition of diameter $\delta$.  We now describe how to construct a  sub-Markov kernel ${P}^{\epsilon}_{\partition}$ with a prescribed total mass deficiency.  ${P}^{\epsilon}_{\partition}$ will consist of a $|\partition|\times |\partition|$ matrix whose entries will be either $0$ or $p=\frac{\vol{\atom}}{\vol{B_{\ve}}}$. If the map $\map$ is poly-time computable, then each entry can be decided  in polynomial time.

Let 
\begin{equation}\label{modulus}
m(\delta):=\sup\{d(\map(x),\map(y))\suchthat x,y \in \manifold, d(x,y)\leq \delta\}
\end{equation}
 be the uniform modulus of continuity of $\map$.  Then of course we have that 
$$
m(\delta) \searrow 0 \quad \text{ as } \quad  \delta \to 0
$$
and 
$$
d(\map(x),\map(y))\leq m(\delta) \quad \text{  whenever }  \quad d(x,y)\leq \delta.
$$

\begin{proposition}\label{Mass}
$$
\sup_{x\in \manifold}[P(x,\manifold)-P^{\epsilon}_{\partition}(x,\manifold)] \leq C_{\manifold}\frac{m(\delta)+2\delta+2\epsilon}{\ve}
$$
where $C_{\manifold}$ is a constant which depends only on the manifold $\manifold$.
\end{proposition}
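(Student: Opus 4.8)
The plan is to rewrite the mass deficiency of $P^\epsilon_\partition(x,\cdot)$ at a point $x$ as an explicit volume, identify exactly which partition atoms are responsible for it, and show those atoms are all squeezed into a thin spherical shell of width $O(m(\delta)+\delta+\epsilon)$ around $\partial B(\map(x),\ve)$; dividing by $\vol(B_\ve)\asymp\ve^{d}$ then produces the $1/\ve$ and the stated numerator. Concretely, I would fix $x\in\manifold$ and the atom $\atom_i\ni x$. By construction $P^\epsilon_\partition(x,\atom_j)$ equals $\vol(\atom_j)/\vol(B_\ve)$ when $\atom_j$ passes the acceptance test against $\atom_i$ and $0$ otherwise, whereas $P(x,\atom_j)=\vol(\atom_j\cap B(\map(x),\ve))/\vol(B_\ve)$. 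The key elementary step is to unwind the acceptance inequality $d_\epsilon(\map_\epsilon(c_i),c_j)<\ve-m(\delta)-2\epsilon-\delta$ using (a) $d(\map_\epsilon(c_i),\map(c_i))\le\epsilon$, (b) the $\epsilon$-accuracy of $d_\epsilon$, (c) $d(x,c_i)\le\delta$ together with Equation \eqref{modulus}, so $d(\map(x),\map(c_i))\le m(\delta)$, and (d) $d(z,c_j)\le\delta$ for $z\in\atom_j$; chaining these four triangle inequalities shows that every accepted atom satisfies $\atom_j\subset B(\map(x),\ve)$. Hence on accepted atoms $P(x,\atom_j)=\vol(\atom_j)/\vol(B_\ve)=P^\epsilon_\partition(x,\atom_j)$, and summing over $\atom_j\in\partition$ the accepted terms cancel, leaving
$$
P(x,\manifold)-P^\epsilon_\partition(x,\manifold)=\sum_{\atom_j\ \text{not accepted}}P(x,\atom_j)=\frac{1}{\vol(B_\ve)}\,\vol\!\Big(B(\map(x),\ve)\cap\bigcup_{\atom_j\ \text{not accepted}}\atom_j\Big).
$$

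Next I would localize the atoms in this union. Only atoms with $\atom_j\cap B(\map(x),\ve)\neq\emptyset$ matter, and running the same triangle inequalities in the opposite direction on a non-accepted atom (where now $d_\epsilon(\map_\epsilon(c_i),c_j)\ge\ve-m(\delta)-2\epsilon-\delta$) gives $d(\map(x),z)\ge\ve-2m(\delta)-4\epsilon-2\delta=:r$ for every $z\in\atom_j$. Therefore the union above is contained in the shell $B(\map(x),\ve)\setminus B(\map(x),r)$, and provided $\delta,\epsilon$ are small enough that $0\le r\le\ve$ — which the standing choices $\epsilon<O(\delta)$ and $\delta$ small guarantee, while otherwise the asserted bound exceeds $1$ and there is nothing to prove — its volume is $\omega_d(\ve^d-r^d)\le d\,\omega_d\,\ve^{d-1}(\ve-r)=2d\,\omega_d\,\ve^{d-1}\big(m(\delta)+2\epsilon+\delta\big)$, where $\omega_d=\vol(B_1)$. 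Dividing by $\vol(B_\ve)=\omega_d\ve^d$ and using $m(\delta)+2\epsilon+\delta\le m(\delta)+2\delta+2\epsilon$ yields the bound with $C_\manifold=2d$, which indeed depends only on the dimension $d$ of $\manifold=[0,1]^d$.

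The only delicate point — and the place where essentially all the content sits — is the bookkeeping in the unwinding step: one must track the four independent sources of slack (the $\epsilon$-error in evaluating $\map$, the $\epsilon$-error in evaluating a distance, the partition diameter $\delta$, and the oscillation $m(\delta)$ of $\map$ across an atom) so that they fit exactly inside the margin $\ve-m(\delta)-2\epsilon-\delta$ that was built into the acceptance test. This is precisely what simultaneously guarantees that accepted atoms lie inside $B(\map(x),\ve)$ — so that $P^\epsilon_\partition\le P$ and the cancellation in the display above is legitimate — and that non-accepted atoms meeting the ball are confined to the shell. Once the margins are accounted for, the shell-volume estimate $\ve^d-r^d\le d\,\ve^{d-1}(\ve-r)$ and the normalization $\vol(B_\ve)=\omega_d\ve^d$ are completely routine.
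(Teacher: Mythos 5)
Your proof is correct and follows essentially the same route as the paper's: both arguments reduce the mass deficiency to the volume of a thin spherical shell of width $O(m(\delta)+\delta+\epsilon)$ around $\partial B(\map(x),\ve)$ and then bound that shell's volume relative to $\vol(B_\ve)$, yielding $C_\manifold = O(d)$. Your version is somewhat more explicit about the triangle-inequality bookkeeping (and about why accepted atoms cancel exactly), but the decomposition and the final estimate are the same.
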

\begin{proof}
Let $x\in \atom$. Denote the density of $P^{\epsilon}_{\partition}(x,\manifold)]$ by $\hat{p}_{x}(y)$.
\begin{align*}
P(x,\manifold)-P^{\epsilon}_{\partition}(x,\manifold) &\, =  \sum_{\atom'\in \partition} \int_{\atom'} dy [p_{x}(y)-\hat{p}_{x}(y)]\\
							 &\, =  \sum_{\atom'\in \partition} \int_{\atom'} \frac{dy}{\vol(B_{\ve})} [\indicator{y\in \map(x)^{\ve}\cap \atom'} - \hat{p}_{x}(y)]\\
							&\,  = \sum_{j}\frac{1}{\vol(B_{\ve})}[\vol(\map(x)^{\ve}\cap \atom_{j}) - \vol(\atom_{j})\indicator{d_{\epsilon}(c_{j},\randomap_{\ve}(c_{i}))<\ve-m(\delta)-\delta - 2\epsilon}] \\
							 &\,  \leq \sum_{j}\frac{\vol(\atom_{j})}{\vol(B_{\ve})}  [  \indicator{A} - \indicator{B}]\quad (\text{where } A=\{d(c_{j},\randomap_{\ve}(c_{i}))<\ve+m(\delta)+\delta +\epsilon\}\\ 
							 &  \text{ and } B=\{d(c_{j},\randomap_{\ve}(c_{i}))<\ve-m(\delta)-\delta - 3\epsilon\} )\\
							& \, = \sum_{j}\frac{\vol(\atom_{j})}{\vol(B_{\ve})}  \indicator{\ve-m(\delta)-\delta - 3\epsilon\leq d(c_{j},\randomap_{\ve}(c_{i}))<\ve+m(\delta)+\delta +\epsilon}\\	
							& \, \leq \frac{\vol(B_{\ve+ m(\delta)+2\delta+\epsilon})-\vol(B_{\ve-m(\delta)-3\epsilon - 2\delta})}{\vol(B_{\ve})}\\
							& \, \leq C_{\manifold}\frac{m(\delta)+2\delta+2\epsilon}{\ve}.
\end{align*}
\end{proof}

\section{Proof of Theorem C}

\subsection{Outline of the Proof}

In the proof of Theorem B, we approximated the transfer operator
by a finite matrix $\{p_{i,j}\}$, which corresponded more or less to the
projection of the operator $P$ on a finite partition $\partition$.
In this sense, this discretization was a ``piece-wise constant'' approximation
of the operator $P$. In order to increase the precision of
this approximation, and hence the precision $\alpha=2^{-n}$ of
the computation of the invariant measure, we are forced to increase
the resolution of the partition $\partition$. This makes the
size of the finite matrix approximation of $P$
grow exponentially in $n$. 

The idea in getting rid of this exponential growth, is to use a fixed
partition $\partition$, which will depend only on the
noise  $K_{\ve}$, and not on the precision $n$.
Instead of using a ``piece-wise constant''
approximation, we represent the operator $P$ \emph{exactly} on
each $\atom\in\partition$ by  a Taylor series.
The regularity of the transition kernel implies the corresponding
regularity of the push-forward of any density. More precisely,  if $\rho^{(t)}$ denotes the density at time $t$, then 
\begin{align*}
  \rho^{(t)}(x)
& =
  \sum_{\atom\in\partition}
  \indicator{x\in\atom}
  \sum_{k=0}^{\infty}
  \rho^{(t)}_{\atom,k}\,(x-x_{\atom})^{k}\\
  \rho^{(t+1)}(x)
  &=
  \sum_{\atom_{i} \in\partition}
  \indicator{x\in\atom_{i}}
  \sum_{l=0}^{\infty}
  \rho^{(t+1)}_{\atom_{i},l}\,(x-x_{\atom})^{l}
  \\
  \rho^{(t+1)}_{\atom_{i},l}
  &=
  \sum_{\atom_{j},m}
  \rho^{(t)}_{\atom_{j},m}
  \int_{\atom_{j}}
  (y-x_{\atom_{j}})^{m}
  \,\frac{\partial_2^{l} K_f(y,x_{\atom_{i}})}{l!}
  \,dy
  \;.
\end{align*}
provides an infinite matrix representation of the transition operator in terms of its action on the Taylor coefficients of the densities. See Section \ref{apriori}.

The assumed analytic properties of the transition kernel allow us to truncate the
power series representation of the densities (see  Lemma \ref{lem_truncationBounds}), 
and represent the corresponding truncation $P_{N}$ of the transition operator
as a finite matrix.

\ignore{
\begin{figure}[!ht]
\centering
  \includegraphics[width=.6\textwidth]{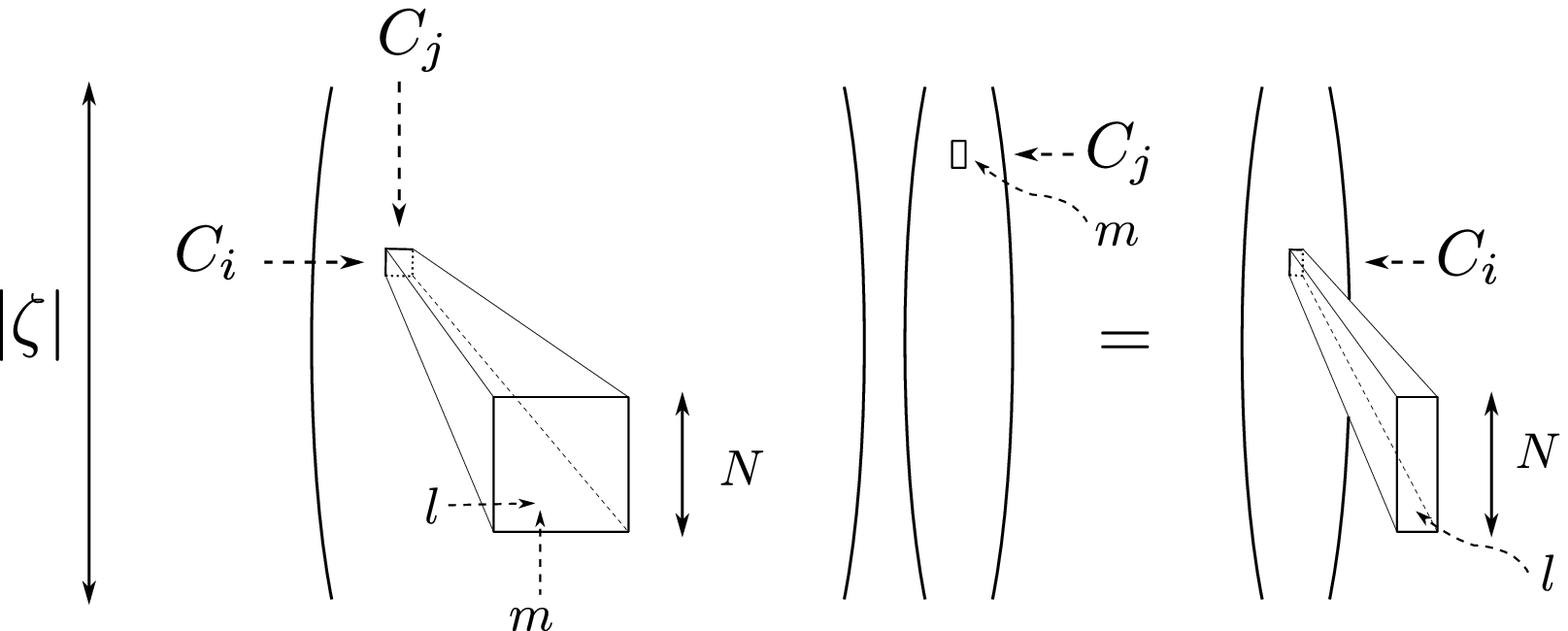}
  \caption{Graphical representation of the equation
  ${\displaystyle P_{N}\, \rho_{N}^{(t)}=\rho_{N}^{(t+1)}}$.}
  \label{fig_piece.of.art}
\end{figure}
}

We then show that the size of this matrix depends \emph{linearly} on the
bit-size  $n$ of the precision of the calculation of the
invariant density (see Theorem \ref{key.result} and Proposition \ref{prop_approximationUsingExactPN}).
This is where the analytic properties of the kernel $K_{\ve}$ are used. The actual algorithm iterates $P_{N}^{(t)}\rho$ of some initial density $\rho$ sufficiently many times (linear in the bit size precision), and then uses the resulting vector to compute  $n$ significant bits of the the invariant density $\pi(x)$ at some point $x$ by using the Taylor formula
$$
    \sum_{k=1}^{N}P_{N}\rho^{(t)}(k)(x-x_{\atom})^{k}.
$$

This shows that the invariant density is an analytic poly-time computable function, and Proposition \ref{poly} finishes the proof.

We now give the technical details. As mentioned in the introduction, we consider only the one dimensional case.

\subsection{A priori bounds}\label{apriori}

The standing assumptions on $K_{\ve}(y,x)$ in this section are:
\begin{assumption}[Uniform regularity of the transition kernel]
  \label{assume_uniformRegularity_Kernel}
  $ $
  \begin{itemize}
    \item[(i)]
      There exists constants
     
      $C>0$ and $\gamma >0$ such that  
      $
      |\partial_2^{k}K_{\ve}(y,x)|\leq C\,k!\,e^{\gamma k}
      $
      for all $k \in \N$ and all $x,y \in \manifold$.
      
    \item[(ii)]
      $K_{\ve}(f(\cdot),x)$ is poly-time integrable.
  \end{itemize}
\end{assumption}

Since $\ve$ will be fixed, we will denote the kernel
$K_{\ve}(f(y),x)$ just by $K_{f}(y,x)$ to shorten the
notation.

 Let $\mu$ be a probability measure on $\manifold$.  
Recall that the transition operator is given by 
\begin{equation}
  \label{eqn_transitionDensity}
  \mu P(dx) = dx\,\rho(x)
  \;,\qquad
  \rho(x)
  =
  \int_{\manifold} \mu(dy)\,K_{\ve}(f(y),x)
  \;,
\end{equation}
and shows that
$\mu P(dx)$ has a density for any probability measure $\mu$.

\begin{lemma}[A priori regularity of $\rho$]
  \label{lem_aprioriRegularity}
  $ $
  \begin{enumerate}[(i)]
    \item
      The estimate
      $
      \sup_{x\in\manifold} |\partial^{k}\rho(x)|
      \leq
      C\,k!\,e^{\gamma k}
      $
      holds for all $k\in \N$.

    \item
      For any partition $\partition$ satisfying
      $e^{\gamma}\diam \partition < 1$ the density $\rho$
      admits for all $x$ the series representation
      \begin{equation*}
        \rho(x)
        =
        \sum_{\atom\in \partition}
        \indicator{x \in \atom}
        \sum_{k=0}^{\infty}
        \rho_{\atom,k}(x-x_{\atom})^k
        \qquad\text{where}\qquad
        |\rho_{\atom,k}|\leq C\, e^{\gamma k}
        \;,
      \end{equation*}
      which converges absolutely and exponentially fast, uniformly in $x$.
  \end{enumerate}
\end{lemma}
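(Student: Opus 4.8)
The plan is to deduce both parts from the uniform derivative bound on the kernel in Assumption~\ref{assume_uniformRegularity_Kernel}(i). For part~(i), I would start from the representation $\rho(x)=\int_{\manifold}\mu(dy)\,K_{\ve}(f(y),x)$ in \eqref{eqn_transitionDensity} and differentiate under the integral sign: since $|\partial_2^{k}K_{\ve}(y,x)|\le C\,k!\,e^{\gamma k}$ holds uniformly in $x,y$, the relevant difference quotients are dominated by a constant, hence $\mu$-integrable because $\mu$ is a probability measure, so dominated convergence (applied inductively in $k$) yields $\partial^{k}\rho(x)=\int_{\manifold}\mu(dy)\,\partial_2^{k}K_{\ve}(f(y),x)$ for every $k\in\N$ and every $x$. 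Taking absolute values inside the integral and using $\mu(\manifold)=1$ gives at once $\sup_{x\in\manifold}|\partial^{k}\rho(x)|\le C\,k!\,e^{\gamma k}$. In particular $\rho\in C^{\infty}(\manifold)$, which is all that part~(ii) needs.

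For part~(ii), fix an atom $\atom\in\partition$ with center $x_{\atom}$ and put $\rho_{\atom,k}=\rho^{(k)}(x_{\atom})/k!$. For $x\in\atom$, Taylor's formula with Lagrange remainder gives, for each $N\ge 1$,
\[
  \rho(x)=\sum_{k=0}^{N-1}\rho_{\atom,k}\,(x-x_{\atom})^{k}
  +\frac{\rho^{(N)}(\eta)}{N!}\,(x-x_{\atom})^{N}
\]
for some $\eta$ between $x$ and $x_{\atom}$ (both lying in the interval $\atom$). By part~(i) the remainder is bounded by $C\,e^{\gamma N}\,|x-x_{\atom}|^{N}\le C\,(e^{\gamma}\diam\partition)^{N}$, which tends to $0$ as $N\to\infty$, uniformly in $x\in\atom$ and in the choice of atom, precisely because $e^{\gamma}\diam\partition<1$. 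Hence the power series converges to $\rho(x)$ on $\atom$ with the stated coefficients. The bound $|\rho_{\atom,k}|=|\rho^{(k)}(x_{\atom})|/k!\le C\,e^{\gamma k}$ is immediate from part~(i), and then the tail estimate $\sum_{k\ge M}|\rho_{\atom,k}|\,|x-x_{\atom}|^{k}\le C\sum_{k\ge M}(e^{\gamma}\diam\partition)^{k}=C\,(e^{\gamma}\diam\partition)^{M}/(1-e^{\gamma}\diam\partition)$ gives absolute convergence at a geometric — hence exponential — rate, uniformly in $x$ and in $\atom$.

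I do not expect a genuine obstacle here: the only step requiring a line of justification is interchanging derivative and integral in part~(i), and that is routine given the uniform-in-$(x,y)$ derivative bound together with $\mu$ being a probability measure; everything in part~(ii) is then a bookkeeping consequence of part~(i) and of summing a geometric series with ratio $e^{\gamma}\diam\partition<1$.
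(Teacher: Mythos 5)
Your proof is correct and follows the same route as the paper: differentiate under the integral using the uniform kernel bound from Assumption~\ref{assume_uniformRegularity_Kernel}(i) to get part~(i), then apply Taylor's theorem with the resulting derivative bounds to obtain the series representation and the geometric tail estimate. You simply spell out the dominated-convergence justification and the Lagrange remainder computation that the paper leaves implicit.
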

\begin{proof}
  By definition of $\rho(x)$ we have
  $
  \partial^{k}\rho(x)
  =
  \int_{\manifold} \mu(dy)\,\partial_2^k K_{\ve}(f(y),x)
  $
  for all $k\in \N$ and all $x\in\manifold$.
  Therefore, the claimed estimate on
  $\sup_{x\in\manifold}\partial^{k}\rho(x)$
  follows from \Assumeref{assume_uniformRegularity_Kernel}.
  Using this result the second claim
  follows from Taylor's theorem.
\end{proof}

Our method will further rely on the following assumption:
\begin{assumption}[Mixing assumption]
  \label{assume_uniformMixing_Kernel}
  $ $
  \begin{itemize}
    \item[(iv)]
      There exists constants $C>0$ and $\theta < 1$ such that
      \begin{equation*}
        \norm{
        \frac{\mu P^t(dx)}{dx} - \frac{\nu P^t(dx)}{dx}
        }_{\infty}
        \leq
        C\,\theta^{t} \,\TVnorm{ \mu - \nu}
        \leq 2\,C\,\theta^{t}
        \qquad \text{ for all } t\geq 1
      \end{equation*}
      holds for any two probability measures $\mu$ and $\nu$.
  \end{itemize}
\end{assumption}
Under \Assumeref{assume_uniformMixing_Kernel}
the Markov chain generated by $P$ has a unique invariant measure,
which we denote by $\stationaryDist(dx)$.
Furthermore, it also follows that this measure has a bounded
density with respect to the volume measure on $\manifold$.
By slightly abusing notation we will denote the
density of the stationary measure by $\pi(x)$.

We now show the two facts above follow from assumption (i).

\begin{lemma}[Examples for $K_\ve$]
  \label{lem_examples_K}
  Part $\rm{(i)}$ of
  \Assumeref{assume_uniformRegularity_Kernel}
  is automatically satisfied, if the
  kernel $K_{\ve}(y,\cdot)$ is analytic, uniformly in $y$.
  If in addition there exist constants $0<c_- \leq c_+$ such that
  $c_- \leq K_{\ve}(y,x) \leq c_+$, then
  \Assumeref{assume_uniformMixing_Kernel} is satisfied.
\end{lemma}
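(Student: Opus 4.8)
The statement consists of two essentially independent claims, which I would establish in turn.

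For the first, that uniform analyticity already gives part~(i) of \Assumeref{assume_uniformRegularity_Kernel}, I would begin by reading the hypothesis precisely: there exist $\delta_0>0$ and $M<\infty$, \emph{independent of $y$}, such that for every $y\in\manifold$ the map $x\mapsto K_\ve(y,x)$ extends holomorphically to the complex $\delta_0$-neighbourhood of $\manifold\subset\RR$ and is bounded there by $M$. (The uniformity in $y$ is the real content; pointwise analyticity would not suffice.) The rest is a one-line application of Cauchy's estimates: for $x\in\manifold$ and $k\in\N$, integrating $K_\ve(y,\cdot)$ against $\tfrac{k!}{2\pi i\,(z-x)^{k+1}}$ over the circle $|z-x|=\delta_0$ gives $|\partial_2^{k}K_\ve(y,x)|\le k!\,M\,\delta_0^{-k}$, uniformly in $x$ and $y$; choosing $C:=M$ and $\gamma:=\max\{0,\log(1/\delta_0)\}$ turns this into $|\partial_2^{k}K_\ve(y,x)|\le C\,k!\,e^{\gamma k}$, which is precisely part~(i).

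For the second claim I would argue by a Doeblin (minorization) estimate followed by a single smoothing step. Since $f$ maps $\manifold$ into itself, the two-sided bound transfers to $c_-\le K_f(y,x)=K_\ve(f(y),x)\le c_+$ on $\manifold\times\manifold$, so by \eqref{eqn_transitionDensity} the kernel $P$ has a density with respect to volume bounded between $c_-$ and $c_+$. The lower bound is the uniform minorization $P(x,\cdot)\ge c_-\,\vol(\,\cdot\cap\manifold)=\beta\,\varphi$, with $\varphi$ the normalised volume on $\manifold$ and $\beta:=c_-\vol(\manifold)\in(0,1]$; here $\beta\le\int_\manifold K_f(y,x)\,dx\le1$, and we may assume $\beta<1$, as $\beta=1$ forces $P(x,\cdot)=\varphi$ and makes the claim trivial. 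This is Doeblin's condition, so the coupling argument already used in \Lemref{lem_DoeblinCondition} and \Propref{rate} yields $\TVnorm{\mu P^{n}-\nu P^{n}}\le(1-\beta)^{n}\TVnorm{\mu-\nu}$ for all probability measures $\mu,\nu$ and all $n\ge0$. To upgrade this to the sup-norm bound on \emph{densities} required by \Assumeref{assume_uniformMixing_Kernel}, fix $t\ge1$ and set $\lambda:=\mu P^{t-1}-\nu P^{t-1}$, a signed measure with $\lambda(\manifold)=0$; then the density of $\lambda P$ equals $\int_\manifold\bigl(K_f(y,x)-c_-\bigr)\,\lambda(dy)$, which has absolute value at most $(c_+-c_-)$ times $\TVnorm{\lambda}$ (up to the normalisation convention) because $0\le K_f(y,x)-c_-\le c_+-c_-$. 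Combining the two estimates gives $\bigl\|\tfrac{\mu P^{t}(dx)}{dx}-\tfrac{\nu P^{t}(dx)}{dx}\bigr\|_\infty\le(c_+-c_-)(1-\beta)^{t-1}\TVnorm{\mu-\nu}$, i.e.\ the desired bound with $\theta:=1-\beta<1$ and $C:=(c_+-c_-)/\theta$, and hence $\le 2C\theta^{t}$ since $\TVnorm{\mu-\nu}$ is bounded for probability measures. This argument also yields for free the two facts recorded just before the lemma: any invariant $\pi=\pi P$ has density $x\mapsto\int\pi(dy)\,K_f(y,x)\le c_+$, and applying the bound to two invariant measures and letting $t\to\infty$ forces them to coincide.

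None of these steps is deep. The only points that need real care are (a) fixing the meaning of ``analytic uniformly in $y$'' so that Cauchy's estimates are genuinely uniform in $y$ (a definitional matter), and (b) keeping the total-variation normalisation consistent throughout the second part, together with the elementary check $\beta\le1$, so that the constants in the mixing estimate emerge as stated. I do not anticipate any substantive obstacle.
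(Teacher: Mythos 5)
Your proposal is correct and follows essentially the same route as the paper: Cauchy-type estimates from uniform analyticity for part (i), and for part (ii) a total-variation contraction with $\theta=1-c_-\vol(\manifold)$ obtained from the lower bound (the paper phrases the minorization as the observation that $(K_f(y,x)-c_-)/\theta$ is a probability density, which is your Doeblin step in disguise), followed by one smoothing step using the upper bound to pass from total variation to the sup norm of densities. Your only deviations are cosmetic: making the uniform-analyticity hypothesis explicit via a complex neighbourhood, and the marginally sharper constant $c_+-c_-$ in place of the paper's $c_+$ in the smoothing step.
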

\begin{proof}
  If $K_{\ve}(y,\cdot)$ is analytic, then
  $K_{\ve}(y,\cdot)$ admits an everywhere converging power series
  representation, which by compactness of $\manifold$
  implies that there exist $C(y)>0$ and $\gamma(y) >0$
  such that  
  $
  \sup_{x\in\manifold}|\partial_2^{k}K_{\ve}(y,x)|
  \leq
  C(y)\,k!\,e^{\gamma(y) k}
  $
  for all $k \in \N$. The assumed uniformity of the analyticity
  simply means that $C(y)$ and $\gamma(y)$ can be uniformly chosen
  with respect to $y$, which proves the first part.

  Now assume the existence of $c_\pm$ as stated in the second part.
  Let $\mu$ and $\nu$ be two probability measures on $\manifold$.
  From the definition of the transition operator
  \eqref{eqn_transitionDensity}
  \begin{align*}
    \int_{\manifold} & [\mu\,P(dx) - \nu\,P(dx)]\,A(x)
    =
    \int_\manifold dx
    \int_{\manifold} [\mu\,(dy) - \nu\,(dy)]\,K_f(y,x)\,A(x)
    \\
    &=
    \int_\manifold dx
    \int_{\manifold} [\mu\,(dy) - \nu\,(dy)]\,[K_f(y,x)-c_-]\,A(x)
    \\
    &=
    \theta
    \int_\manifold dx
    \int_{\manifold} [\mu\,(dy) - \nu\,(dy)]\,\frac{K_f(y,x)-c_-}{\theta}\,A(x)
    \;,\qquad
    \theta
    =
    1 - |\manifold|\,c_-
    <1
  \end{align*}
  for any bounded function $A \from \manifold \to \reals$.
  The assumed lower bound implies that
  $\frac{K_f(y,x)-c_-}{\theta}$ is a probability density
  (with respect to $x$), and hence
  \begin{align*}
    \TVnorm{ \mu P - \nu P} \leq \theta\,\TVnorm{ \mu - \nu}
  \end{align*}
  follows. Iterating this inequality we obtain
  \begin{align*}
    \TVnorm{ \mu P^t - \nu P^t}
    \leq
    \theta^t \,\TVnorm{ \mu - \nu}
    \leq
    2\,\theta^t
  \end{align*}
  for all $t\geq 1$ and any two probability measures $\mu$ and $\nu$.
  From the upper bound on the kernel it follows
  \begin{align*}
    \norm{ \frac{\mu P(dx)}{dx} - \frac{\nu P(dx)}{dx} }_\infty
    &=
    \sup_{x\in\manifold}
    \Big|
    \int_{\manifold} [\mu\,(dy) - \nu\,(dy)]\,K_f(y,x)
    \Big|
    \leq
    c_+\, \TVnorm{ \mu - \nu}
  \end{align*}
  and hence
  \begin{align*}
    \norm{ \frac{\mu P^t(dx)}{dx} - \frac{\nu P^t(dx)}{dx} }_\infty
    \leq
    c_+\, \TVnorm{ \mu\,P^{t-1} - \nu\,P^{t-1}}
    \leq
    c_+\,\theta^{t-1} \,\TVnorm{ \mu - \nu}
  \end{align*}
  as was to be shown.
\end{proof}

  Because of \Lemref{lem_aprioriRegularity}   we can consider only densities satisfying    the a priori bound, and we will do so. The density of at time $t$ of a probability measure
will be denoted by $\rho^{(t)}(x)$.

Using \Lemref{lem_aprioriRegularity} we know that for any time $t$,
such a density can be written as
\begin{equation*}
  \rho^{(t)}(x)
  =
  \sum_{\atom\in\partition}
  \indicator{x\in\atom}
  \sum_{k=0}^{\infty}
  \rho^{(t)}_{\atom,k}\,(x-x_{\atom})^{k}
\end{equation*}
and therefore 
\begin{align*}
  \rho^{(t+1)}(x)
  =
  P\rho^{(t)}(x)
  &=
  \int_{\manifold} \rho(y)\,K_f(y,x)\,dy
  \\
  &=
  \sum_{\atom_{j},m}
  \rho^{(t)}_{\atom_{j},m}
  \int_{\atom_{j}}(y-x_{\atom_{j}})^{m}
  K_f(y,x)\,dy
  \;.
\end{align*}
Expanding $K_f$ gives
\begin{align*}
  \rho^{(t+1)}(x)
  &=
  \sum_{\atom_{i} \in\partition}
  \indicator{x\in\atom_{i}}
  \sum_{l=0}^{\infty}
  \rho^{(t+1)}_{\atom_{i},l}\,(x-x_{\atom})^{l}
  \\
  \rho^{(t+1)}_{\atom_{i},l}
  &=
  \sum_{\atom_{j},m}
  \rho^{(t)}_{\atom_{j},m}
  \int_{\atom_{j}}
  (y-x_{\atom_{j}})^{m}
  \,\frac{\partial_2^{l} K_f(y,x_{\atom_{i}})}{l!}
  \,dy
  \;.
\end{align*}

We can therefore represent the operator $P$, acting on densities
satisfying the a priori regularity, exactly by a
matrix of size $|\partition| \times |\partition|$,
whose entry $P^{(\atom_{i},\atom_{j})}$ is in turn an infinite matrix
with matrix elements
\begin{equation}
  \label{eqn_matrixElement_P}
  P^{(\atom_{i},\atom_{j})}(l,m)
  =
  \int_{\atom_{j}}
  (y-x_{\atom_{j}})^{m}
  \,\frac{\partial_{2}^{l} K_f(y,x_{\atom_{i}})}{l!}
  \,dy
  \;,\qquad
  l,m \geq 0
  \;.
\end{equation}

\subsection{Truncation step}

The idea here is to truncate the operator $P$, represented by
the infinite matrix \eqref{eqn_matrixElement_P},
by dropping the higher order terms. Recall
\Lemref{lem_aprioriRegularity}
and corresponding representation of densities
\begin{equation*}
  \rho(x)
  =
  \sum_{\atom\in \partition}
  \indicator{x \in \atom}
  \sum_{k=0}^\infty
  \rho_{\atom,k}(x-x_{\atom})^k
  \;,
\end{equation*}
with $|\rho_{\atom,k}|\leq C\, e^{\gamma k }$ for all $\atom, k$,
where $e^{\gamma}\diam \partition < 1$.
For any $N \geq 1$ we define the \defin{truncation projection}
\begin{subequations}
  \begin{equation}
    \label{eqn_truncationProjection}
    \Pi_{N}\rho(x)
    :=
    \sum_{\atom\in\partition}\sum_{k=0}^{N}
    \rho_{\atom,k}(x-x_{\atom})^k
    \;,\qquad
    \hat{\rho}_{N}(x)
    =
    \rho(x)
    -
    \Pi_{N}\rho(x)
    \;,
  \end{equation}
  where $\hat{\rho}_{N}$ denotes the remainder term.
  Correspondingly, we define the truncated transition operator by
  \begin{equation}
    \label{eqn_truncationOperator}
    P_{N}:=\Pi_{N} P \Pi_{N},
  \end{equation}
\end{subequations}
whose matrix elements are given by \eqref{eqn_matrixElement_P},
with $l,m = 1,\ldots,N$.
A schematic representation of one application of the operator $P_{N}$
is shown in \Figref{fig_piece.of.art2}.

\begin{figure}[!ht]
\centering
  \includegraphics[width=.7\textwidth]{taylor_matrix}
  \caption{Graphical representation of the equation
  ${\displaystyle P_{N}\, \rho_{N}^{(t)}=\rho_{N}^{(t+1)}}$.}
  \label{fig_piece.of.art2}
\end{figure}

The following theorem states the desired linear dependence
of both the number of iterations $t$ and the number
of Taylor coefficients $N$ in the precision parameter $n$. 

\begin{theorem}
  \label{key.result}
  There exist linear functions $t(n)$ and $N(n)$ such that
  $$
  \norm{\stationaryDist - P_{N}^{t}\rho}_{\infty} \leq 2^{-n}
  $$
  for all $n\in \N$, uniformly in $\rho$.
\end{theorem}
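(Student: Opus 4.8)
The plan is to combine the exponential mixing of the exact operator $P$ (\Assumeref{assume_uniformMixing_Kernel}) with an a priori bound on the truncation error, and then balance the two sources of error by choosing $t$ and $N$ linearly in $n$. I would split the error as
\begin{equation*}
  \norm{\stationaryDist - P_N^t \rho}_\infty
  \leq
  \norm{\stationaryDist - P^t \rho}_\infty
  +
  \norm{P^t\rho - P_N^t\rho}_\infty
  \;,
\end{equation*}
valid for $\rho$ (or rather $\mu P$ whose density is $\rho$) a probability density in the a priori class. The first term is $\leq 2C\theta^t$ by the mixing assumption, and is below $2^{-n-1}$ as soon as $t \geq t(n) := c_1(n+1)$ for a suitable constant $c_1 = c_1(C,\theta)$; this already gives the linear-in-$n$ dependence of $t$.

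The main work is the second term, the accumulated truncation error over $t$ steps. First I would establish a one-step bound: for any density $\sigma$ in the a priori class (i.e.\ with Taylor coefficients $|\sigma_{\atom,k}|\leq C'e^{\gamma k}$ on each atom), the truncation remainder satisfies $\norm{\hat\sigma_N}_\infty = \norm{\sigma - \Pi_N\sigma}_\infty \leq C'' (e^\gamma\diam\partition)^{N+1}/(1 - e^\gamma\diam\partition) =: C'' r^{N+1}$ with $r = e^\gamma\diam\partition < 1$ fixed by the choice of partition. This is where \Lemref{lem_aprioriRegularity} and the assumed analyticity of $K_\ve$ (via \Assumeref{assume_uniformRegularity_Kernel}) enter — they guarantee that $P$ maps the a priori class into itself with uniform constants, so that the bound $r^{N+1}$ is uniform in $t$ and in the density. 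The anticipated obstacle here is a subtle one: I must check that $\Pi_N P \Pi_N$ iterated does not let the Taylor coefficients grow — i.e.\ that $P_N$ is (uniformly in $N$) a bounded operator on the relevant normed space, and that $\Pi_N$ applied to a density in the a priori class produces something whose $P$-image is still controlled. This requires bounding $\norm{P(\sigma - \Pi_N\sigma)}_\infty$ and the propagation of the $L^\infty$ (or coefficient) norm through $P$; the analytic kernel estimate $|\partial_2^l K_\ve| \leq Ck! e^{\gamma k}$ is exactly what makes $P$ smoothing and keeps these constants finite.

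With the one-step bound in hand, I would telescope:
\begin{equation*}
  P^t\rho - P_N^t\rho
  =
  \sum_{s=0}^{t-1}
  P^{t-1-s}\,(P - P_N)\,P_N^s\,\rho
  \;.
\end{equation*}
Each summand is controlled by applying the mixing/contraction of $P$ (as an operator, not just on probability measures — here one uses that $(P-P_N)P_N^s\rho$ has zero integral, or one passes through the $L^\infty$ mixing bound on differences, whichever the paper's setup most directly supports) to $\norm{(P-P_N)P_N^s\rho}_\infty \leq C'' r^{N+1}$. Summing the resulting geometric-type series gives $\norm{P^t\rho - P_N^t\rho}_\infty \leq C''' t\, r^{N+1}$ (the factor $t$ coming from the $t$ terms, possibly improved to a constant by the contraction of $P^{t-1-s}$, but $t$ suffices). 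Since $t = t(n)$ is linear in $n$, choosing $N = N(n) := c_2(n+1)$ with $c_2 = c_2(r)$ large enough makes $C''' t(n) r^{N(n)+1} \leq 2^{-n-1}$. Adding the two halves yields $\norm{\stationaryDist - P_N^t\rho}_\infty \leq 2^{-n}$, uniformly in $\rho$, completing the proof. The delicate points to get right are the uniform-in-$t$ control of the coefficient growth under $P_N$ and the precise sense in which $(P - P_N)P_N^s\rho$ is annihilated or contracted by the mixing estimate so that the telescoping sum genuinely closes.
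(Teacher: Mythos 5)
Your overall architecture is the same as the paper's -- split the error into an exponentially mixing part $2C\theta^{t}$ and an accumulated truncation part of order $t\,q_N$ with $q_N\sim(e^{\gamma}\diam\partition)^{N+1}$, then choose $t(n)$ and $N(n)$ linear in $n$ -- but your telescoping runs in the opposite direction, and this changes where the technical burden falls. You write $P^{t}\rho-P_N^{t}\rho=\sum_{s}P^{t-1-s}(P-P_N)P_N^{s}\rho$, so the truncation error operator acts on $P_N^{s}\rho$, an iterate of the \emph{truncated} operator; you must then prove that these iterates stay in (the image under $P$ of) the a priori class uniformly in $s\leq t$, which is exactly the content of the paper's Lemma \ref{lem_truncationBounds}(ii), giving the factor $[1+|\manifold|q_N]^{s}$ that stays $O(1)$ only because $N$ and $t$ are coupled so that $q_N t=O(1)$. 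The paper instead compares $\Pi_N\rho^{(t)}$ with $P_N^{t}\rho^{(0)}$ via $\Pi_N\rho^{(t)}-P_N^{t}\rho^{(0)}=\sum_{s}P_N^{s}Q_N\rho^{(t-1-s)}$ (Lemma \ref{lem_Qn}), so that $Q_N=\Pi_N P(1-\Pi_N)$ always hits an \emph{exact} iterate $\rho^{(t-1-s)}$, whose Taylor coefficients are controlled directly by Lemma \ref{lem_aprioriRegularity}; the price is an extra term $\norm{\rho^{(t)}-\Pi_N\rho^{(t)}}_{\infty}\leq q_N$, which is harmless. Both routes close, but note two points in yours. First, your parenthetical that $(P-P_N)P_N^{s}\rho$ ``has zero integral'' is false in general (truncation does not preserve total mass), so you cannot feed it to the mixing estimate of \Assumeref{assume_uniformMixing_Kernel}, which is stated for differences of probability measures; your fallback -- bounding $\norm{P^{m}\eta}_{\infty}\leq C|\manifold|\norm{\eta}_{\infty}$ via the kernel bound and $L^{1}$-contraction of $P$, and accepting the factor $t$ -- is what actually works, and the factor $t$ is indeed what the paper also ends up with. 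Second, the ``delicate point'' you flag (uniform boundedness of $P_N^{s}$) is not optional: without it the sum does not close, and supplying it amounts to reproving Lemma \ref{lem_truncationBounds}. So your proposal is a correct skeleton, with the one genuinely load-bearing estimate identified but deferred rather than established.
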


\begin{proof}
  We will need the following lemmas:

  Let $\mu$ be a probability measure with a density of
  the type of \Lemref{lem_aprioriRegularity}.
  Denote the densities of $\mu\,P^t$ by $\rho^{(t)}$ for
  all $t \geq 0$.

  \begin{lemma}
    \label{lem_Qn}
    Then
    \begin{align*}
      \Pi_N \rho^{(t)}
      -
      P_{N}^t \rho^{(0)}
      &=
      \sum_{s=0}^{t-1} P_N^s\,Q_{N}\rho^{(t-1-s)}
    \end{align*}
    holds,
    where  $Q_{N}:=\Pi_{N} P - P_{N}=\Pi_{N} P (1-\Pi_{N})$.
  \end{lemma}
  \begin{proof}
    Observe that the identity
    $\rho^{(t)} = P \rho^{(t-1)}$ can be rewritten as
    $
    \Pi_N \rho^{(t)}
    =
    P_{N}\rho^{(t-1)}
    +
    Q_{N}\rho^{(t-1)}
    $, so that
    $
    \Pi_N \rho^{(t)}
    =
    P_{N}^t \rho^{(0)}
    +
    \sum_{s=0}^{t-1} P_N^s\,Q_{N}\rho^{(t-1-s)}
    $
    follows by iteration.
  \end{proof}

  \begin{lemma}[Truncation bounds]
    \label{lem_truncationBounds}
    $ $
    \begin{enumerate}[(i)]
      \item 
        For any bounded function $\eta$ the estimate
        \begin{equation*}
          \norm{ \Pi_N P \eta }_\infty
          \leq 
          \Big[ 1 + |\manifold|\,C
          \,\frac{ [e^{\gamma} \diam \partition]^{N+1}
          }{1- e^{\gamma} \diam \partition}
          \Big]
          \,\norm{\eta}_\infty
        \end{equation*}
        holds for all $N$.
      \item
        For any bounded function $\eta$ the estimate
        \begin{align*}
          \norm{ P_N^s \eta }_\infty
          &\leq
          \Big[ 1 + |\manifold|\,C
          \,\frac{ [e^{\gamma} \diam \partition]^{N+1}
          }{1- e^{\gamma} \diam \partition}
          \Big]^s
          \,\norm{\Pi_N \eta}_\infty
        \end{align*}
        holds for all $s\geq 0$ and all $N$.

    \end{enumerate}
  \end{lemma}
  \begin{proof}
    By definition
    \begin{align*}
      \Pi_N P \eta (x)
      &=
      \int_\manifold dy \,\eta(y)\,\Pi_N^x K_f(y,x)
    \end{align*}
    where the superscript $x$ indicates that $\Pi_N$ acts
    on the $x$-variable in $K_f(y,x)$.
    Therefore,
    \begin{align*}
      \norm{ \Pi_N P \eta }_\infty
      &\leq
      \norm{\eta}_\infty
      \max_x \int dy|\Pi_N^x K_f(y,x)|
      \\
      &\leq
      \norm{\eta}_\infty
      \Big[ 1 + \max_x \int dy\,|(1-\Pi_N)^x K_f(y,x)| \Big]
      \\
      &\leq
      \Big[ 1 + |\manifold|\,C
      \,\frac{ [e^{\gamma} \diam \partition]^{N+1}
      }{1- e^{\gamma} \diam \partition}
      \Big]
      \,\norm{\eta}_\infty
    \end{align*}
    where used the normalization
    $\int dy\,K_f(y,x) = 1$
    of the kernel, and the a priori bound on the Taylor
    coefficients of $K_f(y,x)$ with respect to $x$.
    
    In particular, it follows that
    \begin{align*}
      \norm{ P_N \eta }_\infty
      =
      \norm{ \Pi_N P \Pi_N \eta }_\infty
      &\leq
      \Big[ 1 + |\manifold|\,C
      \,\frac{ [e^{\gamma} \diam \partition]^{N+1}
      }{1- e^{\gamma} \diam \partition}
      \Big]
      \,\norm{\Pi_N \eta}_\infty
    \end{align*}
    and therefore
    \begin{align*}
      \norm{ P_N^s \eta }_\infty
      &\leq
      \Big[ 1 + |\manifold|\,C
      \,\frac{ [e^{\gamma} \diam \partition]^{N+1}
      }{1- e^{\gamma} \diam \partition}
      \Big]^s
      \,\norm{\Pi_N \eta}_\infty
    \end{align*}
    for all $s\geq 0$ by iteration, which finishes the proof.
  \end{proof}

  \begin{proposition}
    \label{prop_approximationUsingExactPN}
    Let $\rho$ be an arbitrary admissible density.
    For all $N$, $t$
    \begin{align*}
      \norm{ \stationaryDist - P_N^t \rho }_\infty
      &\leq
      \Big[ 1 + |\manifold| \,q_N \Big]
      \,e^{ |\manifold|\,q_N\,t }
      \,q_N\,t
      +
      q_N
      +
      2\,C\,\theta^t
    \end{align*}
    where we set
    $
    q_N
    =
    C\,\frac{ [e^{\gamma} \diam \partition]^{N+1}
    }{1- e^{\gamma} \diam \partition}
    $.
  \end{proposition}
  \begin{proof}
    Observe that for all $t$ the identity
    $
    P_{N}^t Q_{N} \rho
    =
    P_{N}^t \,P\,(\rho - \rho_N)
    $
    holds by the definition of the $P_N$ and $Q_N$, and therefore
    \begin{align*}
      \norm{ P_{N}^s Q_{N} \rho }_\infty
      &=
      \norm{ P_{N}^s \,P\,(\rho - \rho_N) }_\infty
      \\
      &\leq
      \Big[ 1 + |\manifold|\,C
      \,\frac{ [e^{\gamma} \diam \partition]^{N+1}
      }{1- e^{\gamma} \diam \partition}
      \Big]^s
      \,\norm{\Pi_N P\,(\rho - \rho_N)}_\infty
      \\
      &\leq
      \Big[ 1 + |\manifold|\,C
      \,\frac{ [e^{\gamma} \diam \partition]^{N+1}
      }{1- e^{\gamma} \diam \partition}
      \Big]^{s+1}
      \,\norm{\rho - \rho_N}_\infty
    \end{align*}
    holds for all $s\geq 0$ and all $N$,
    by \Lemref{lem_truncationBounds}.
    Using the a priori bounds on the density $\rho$
    stated in \Lemref{lem_aprioriRegularity} we obtain
    \begin{align*}
      \norm{ P_{N}^s Q_{N} \rho }_\infty
      &\leq
      \Big[ 1 + |\manifold|\,C
      \,\frac{ [e^{\gamma} \diam \partition]^{N+1}
      }{1- e^{\gamma} \diam \partition}
      \Big]^{s+1}
      \,C\,\frac{ [e^{\gamma} \diam \partition]^{N+1}
      }{1- e^{\gamma} \diam \partition}
    \end{align*}
    for all admissible densities $\rho$, and all $N$.

    Combining this uniform estimate with \Lemref{lem_Qn}
    \begin{align*}
      \norm{
      \Pi_N \rho^{(t)}
      -
      P_{N}^t \rho^{(0)}
      }_\infty
      &\leq
      \sum_{s=0}^{t-1}
      \norm{ P_N^s\,Q_{N}\rho^{(t-1-s)} }_\infty
      \\
      &\leq
      \sum_{s=0}^{t-1}
      \Big[ 1 + |\manifold|\,C
      \,\frac{ [e^{\gamma} \diam \partition]^{N+1}
      }{1- e^{\gamma} \diam \partition}
      \Big]^{s+1}
      \,C\,\frac{ [e^{\gamma} \diam \partition]^{N+1}
      }{1- e^{\gamma} \diam \partition}
      \\
      &=
      \Big[
      \frac{1}{|\manifold|}
      +
      C\,\frac{ [e^{\gamma} \diam \partition]^{N+1}
      }{1- e^{\gamma} \diam \partition}
      \Big]
      \,\Big(
      \Big[ 1 + |\manifold|\,C
      \,\frac{ [e^{\gamma} \diam \partition]^{N+1}
      }{1- e^{\gamma} \diam \partition}
      \Big]^t
      -1
      \Big)
    \end{align*}
    and therefore
    \begin{align*}
      \norm{ \stationaryDist - P_N^t \rho }_\infty
      &\leq
      \norm{ \Pi_N \rho^{(t)} - P_N^t \rho }_\infty
      +
      \norm{ \rho^{(t)} - \Pi_N \rho^{(t)} }_\infty
      +
      \norm{ \stationaryDist - \rho^{(t)} }_\infty
      \\
      &\leq
      \Big[
      \frac{1}{|\manifold|}
      +
      C\,\frac{ [e^{\gamma} \diam \partition]^{N+1}
      }{1- e^{\gamma} \diam \partition}
      \Big]
      \,\Big(
      \Big[ 1 + |\manifold|\,C
      \,\frac{ [e^{\gamma} \diam \partition]^{N+1}
      }{1- e^{\gamma} \diam \partition}
      \Big]^t
      -1
      \Big)
      \\
      &\qquad
      +
      C\,\frac{ [e^{\gamma} \diam \partition]^{N+1}
      }{1- e^{\gamma} \diam \partition}
      +
      2\,C\,\theta^t
    \end{align*}
    for all $N$, $t$ and any admissible density $\rho$.

    Finally, the inequality
    $
    (1+\xi)^t - 1 \leq e^{t\,\xi}\,t\,\xi
    $, which holds for all $\xi, t >0$, implies 
    \begin{align*}
      \norm{ \stationaryDist - P_N^t \rho }_\infty
      &\leq
      \Big[ 1 + |\manifold| \,q_N \Big]
      \,e^{ |\manifold|\,q_N\,t }
      \,q_N\,t
      +
      q_N
      +
      2\,C\,\theta^t
      \\
      q_N
      &=
      C\,\frac{ [e^{\gamma} \diam \partition]^{N+1}
      }{1- e^{\gamma} \diam \partition}
    \end{align*}
    which finishes the proof.
  \end{proof}

  Now we are in a position to finish the proof of
  \Thmref{key.result}.
  Fix $k>0$.
  Note that the particular choices
  \begin{align*}
    t
    &=
    \frac{k}{\log\frac{1}{\theta}}
    \;,\quad
    N+1
    \geq
    \frac{k + \log k}{\log \frac{1}{e^\gamma \diam \partition}}
    +
    \frac{ 0 \Max [\log(|M|\,C) - k]
    }{ \log\frac{1}{e^\gamma \diam \partition} }
    +
    \frac{
    \log \frac{1}{1-e^\gamma \diam \partition}
    -
    \log\log\frac{1}{\theta}}{\log \frac{1}{e^\gamma \diam \partition}}
  \end{align*}
  combined with the estimate in
  \Propref{prop_approximationUsingExactPN}
  shows
  \begin{align*}
    \norm{ \stationaryDist - P_N^t \rho }_\infty
    &\leq
    \Big[ 1 + |\manifold| \,q_N \Big]
    \,e^{ |\manifold|\,q_N\,t }
    \,q_N\,t
    +
    q_N
    +
    2\,C\,\theta^t
    \\
    &\leq
    \Big[ 1 + |\manifold| \,q_N\,t \Big]
    \,e^{ |\manifold|\,q_N\,t }
    \,q_N\,t
    +
    q_N\,t
    +
    2\,C\,\theta^t
    \\
    &\leq
    C\,[3 + 2\,e]\,e^{-k}
    \leq
    8.5\,C\,e^{-k}
  \end{align*}
  so that setting
  $
  k = n + \log[8.5\,C]
  $
  shows that these linear functions
  \begin{align*}
    t(n)
    &=
    \frac{1}{\log\frac{1}{\theta}}\,n
    +
    \frac{ \log[8.5\,C] }{\log\frac{1}{\theta}}
    \\
    N(n)
    &=
    \frac{ 2 }{\log \frac{1}{e^\gamma \diam \partition}}
    \,n
    +
    \frac{ 0 \Max [\log(|M|\,C) - \log(8.5\,C) -n ]
    }{ \log\frac{1}{e^\gamma \diam \partition} }
    \\
    & 
    +
    \frac{
    \log \frac{1}{1-e^\gamma \diam \partition}
    -
    \log\log\frac{1}{\theta}}{\log \frac{1}{e^\gamma \diam \partition}}
    +
    \frac{2\,\log[8.5\,C]
    }{\log \frac{1}{e^\gamma \diam \partition}}
    -
    1
  \end{align*}
  will suffice for $\norm{ \stationaryDist - P_N^t \rho }_\infty \leq 2^{-n}$
  for all $n$.
\end{proof}
\newpage
%
%




\bibliographystyle{alpha}
\bibliography{bibliography}

\end{document}